\definecolor{Blue}{rgb}{0.1,0.1,0.8}
\newtheorem{theorem}{Theorem}[section]
\newtheorem{lemma}[theorem]{Lemma}
\theoremstyle{definition}
\newtheorem{definition}{Definition}
\newcommand{\E}{\mathsf{E}}
\newcommand{\bucket}{b}
\title{Universal Private Estimators}
\date{}
\author{
Wei Dong\thanks{Computer Science Engineering, Hong Kong University of Science and Technology. {\tt wdongac@cse.ust.hk}. }
\and
Ke Yi\thanks{Computer Science Engineering, Hong Kong University of Science and Technology. {\tt yike@cse.ust.hk}. }
}
\begin{document}
\maketitle 

\begin{abstract}
We present \textit{universal} estimators for the statistical mean, variance, and scale (in particular, the interquartile range) under pure differential privacy. These estimators are universal in the sense that they work on an arbitrary, unknown continuous distribution $\mathcal{P}$ over $\mathbb{R}$, while yielding strong utility guarantees except for ill-behaved $\mathcal{P}$.  For certain distribution families like Gaussians or heavy-tailed distributions, we show that our universal estimators match or improve existing estimators, which are often specifically designed for the given family and under \textit{a priori} boundedness assumptions on the mean and variance of $\mathcal{P}$. 
This is the first time these boundedness assumptions are removed under pure differential privacy. 
The main technical tools in our development are instance-optimal empirical estimators for the mean and quantiles over the unbounded integer domain, which can be of independent interest.
\end{abstract}

\section{Introduction}
\label{sec:intro}

Parameter estimation is a central problem in statistics, data mining, and machine learning. Let $\mathcal{P}$ be a continuous probability distribution over $\mathbb{R}$ with density function (pdf) $f(x)$, and let $F(x)$ be its  cumulative distribution function (CDF). We consider the following three fundamental parameters, mean, variance, and IQR:
\begin{align*}
\mu_{\mathcal{P}} &= \int_{-\infty}^{\infty} x f(x) \, \mathrm{d} x, 
\\
\sigma^2_{\mathcal{P}} &= \int_{-\infty}^{\infty} (x-\mu)^2 f(x) \, \mathrm{d} x, 
\\
\mathrm{IQR}_{\mathcal{P}} &= F^{-1}(3/4)-F^{-1}(1/4).
\end{align*}
Note that the \textit{interquartile range (IQR)} is a widely used parameter for the \textit{scale} of $\mathcal{P}$, but the particular choices of $1/4$ and $3/4$ are not very important: changing them to other constants does not affect our results (for both error bound and the requirement of $n$) asymptotically.  For simplicity, we omit the subscript $\mathcal{P}$ when there is no confusion.

Given an i.i.d.\ sample $D=(X_1,\dots, X_n)$ drawn from $\mathcal{P}^n$, the standard estimators for these parameters are (we reorder $D$ such that $X_1\le \cdots \le X_n$):
\begin{align*}
\mu(D) &= {1\over n}\sum X_i, 
\\
\sigma^2(D) &= {1\over n}\sum (X_i - \mu(D))^2, 
\\
\mathrm{IQR}(D) &= X_{3n/4} - X_{n/4},
\end{align*}
which are often called the \textit{sample} or \textit{empirical} mean, variance, and IQR.  They all converge to the true parameter respectively at a rate of $O(1/\sqrt{n})$, and the difference between the empirical parameter and the statistical parameter is referred to as the \textit{sampling error}. Importantly, all these estimators are \textit{universal}, namely, they work on an arbitrary, unknown $\mathcal{P}$.  The $O(1/\sqrt{n})$ convergence rate is optimal for many families of distributions, but not all.  For instance, the mid-range estimator $(X_1+X_n)/2$ is a better estimator of $\mu$ for uniform distributions with a convergence rate of $O(1/n)$.  However, such distribution-specific estimators are less used in practice as we usually do not know which family $\mathcal{P}$ is chosen from, and they may fail miserably when the distributional assumption does not hold (e.g., the mid-range estimator is a very bad estimator of the Gaussian mean).

In this paper, we design universal estimators under \textit{differential privacy (DP)}~\cite{dwork2014algorithmic}. A randomized mechanism $\mathcal{M}:\mathcal{X}^n\rightarrow \mathcal{Y}$ satisfies $(\varepsilon,\delta)$-DP if for any two neighboring datasets $D \sim D'$ (i.e., $D$ and $D'$ differ by one record), and any $\mathcal{S}\subseteq \mathcal{Y}$, 
\begin{equation}\label{eq:DP}
    \Pr[\mathcal{M}(D)\in \mathcal{S}]\leq e^\varepsilon\cdot \Pr[\mathcal{M}(D')\in \mathcal{S}]+\delta,
\end{equation}
for some privacy parameters $0<\varepsilon <1, 0\le \delta <1/n^{\omega(1)}$.  For statistical estimation problems, the high-privacy regime (e.g., $\varepsilon < 1/\sqrt{n}$) is more interesting; otherwise, the error would be dominated by the sampling error for many distributions (i.e., privacy is free). This is because the privacy error is $\tilde{O}(1/(\epsilon n))$ while the sampling error is $\tilde{O}(1/\sqrt{n})$.
The case $\delta=0$ is often called \textit{pure DP}, abbreviated as $\varepsilon$-DP.  It is preferable than the $\delta>0$ case, since $\delta$ corresponds to the probability of catastrophic privacy breaches.  However, there are strong separation results showing that for certain problems, $\varepsilon$-DP is strictly harder to achieve than $(\varepsilon, \delta)$-DP~\cite{hardt2010geometry,de2012lower,beimel2013private,bun2015differentially,vadhan2017complexity}. Note that, when designing a private estimator, the DP guarantee should hold for any two neighboring datasets $D, D'$, and \eqref{eq:DP} is only over the internal randomness of $\mathcal{M}$. When analyzing its utility, however, the randomness in both $D$ and $\mathcal{M}$ is taken into consideration. 

\begin{table}[ht]
\centering
\label{tab:compare}
\begin{tabular}{c|c|c|c}
\hline
& $\mu$ & $\sigma^2$ & $\mathrm{IQR}$
\\
\hline
$\varepsilon$-DP & \makecell{A1, A2, A3~\cite{smith2011privacy}\\ A1, A2, A3~\cite{karwa2018finite} \\ A1, A2, A3~\cite{KamathLSU19} \\ A1, A2~\cite{BunS19} \\ A1, A2, A3~\cite{bun2019private}\\A1, A2, A3~\cite{biswas2020coinpress} \\ A1, A2~\cite{kamath2020private}\\A1, A2~\cite{hopkins2021efficient}} & \makecell{A1, A2, A3~\cite{karwa2018finite} \\ A2, A3~\cite{KamathLSU19} \\ A1, A2, A3~\cite{bun2019private}\\ A2, A3~\cite{biswas2020coinpress}} & \makecell{None}
\\
\hline
$(\varepsilon,\delta)$-DP & \makecell{A3~\cite{karwa2018finite} \\ A1, A2~\cite{BunS19} \\A1, A2, A3~\cite{KamathLSU19}\\
A2, A3~\cite{bun2019private}\\A1, A2, A3~\cite{cai2019cost}\\ A1, A2, A3~\cite{biswas2020coinpress}\\ 
A3~\cite{aden2021sample}\\ A1, A2, A3~\cite{huang2021instance}\\A3~\cite{kamath2021private}\\A3~\cite{brown2021covariance}\\A3~\cite{liu2021differential}\\A3~\cite{ashtiani2021private}\\A3~\cite{kothari2021private}} & \makecell{A3~\cite{karwa2018finite} \\ A2, A3~\cite{KamathLSU19} \\ A2, A3~\cite{biswas2020coinpress}\\ 
A3~\cite{aden2021sample}\\A3~\cite{kamath2021private}\\A3~\cite{liu2021differential}\\A3~\cite{ashtiani2021private}\\A3~\cite{kothari2021private}} & \cite{dwork2009differential}
\\
\hline
\end{tabular}
\caption{Summary of existing private estimators\tablefootnote{Some estimators satisfy  CDP~\cite{bun2016concentrated}, which is between $\varepsilon$-DP and $(\varepsilon,\delta)$-DP. We classify them into $(\varepsilon,\delta)$-DP.} and their assumptions.}
\end{table}

In the past several years, quite a number of private estimators have been proposed in the literature as summarized in Table~\ref{tab:compare}.  With the exception of the IQR estimator of  \cite{dwork2009differential}, which only satisfies $(\varepsilon, \delta)$-DP, none of them is universal.  They all rely on the following three assumptions or a subset of them:
\begin{itemize}
  \item[A1.] a predefined range for the mean, i.e., $\mu\in[-R,R]$;
  \item[A2.] a predefined range for the variance, i.e.,  $\sigma^2\in[\sigma^2_{\mathrm{min}},\sigma^2_{\mathrm{max}}]$, as well as ranges for the higher moments if applicable;
  \item[A3.] $\mathcal{P}$ is chosen from a specific family of distributions such as Gaussian.
\end{itemize}

In particular, their reliance on A1/A2 is both algorithmic and analytical, i.e., these estimators need $R, \sigma_{\min}, \sigma_{\max}$ together with $D$ as the input, and the utility guarantees also depend on these \textit{a priori} bounds.  The reliance on A3 is only analytical; when we write A3 in Table \ref{tab:compare}, the corresponding estimator does not offer utility guarantees when $\mathcal{P}$ is chosen outside the specified family.\footnote{\cite{liu2021differential} can handle different distribution families but need to manually adjust the algorithm based on the distribution family.}

In this paper, we design universal private estimators under pure DP for $\mu, \sigma^2$, and $\mathrm{IQR}$ without these assumptions while achieving the same or better utilities.  As shown in Table~\ref{tab:compare}, this is the first time A1/A2 have been removed under pure-DP.
Under $(\varepsilon,\delta)$-DP, a number of prior works \cite{karwa2018finite,kamath2021private,bun2019private,aden2021sample,brown2021covariance,liu2021differential,ashtiani2021private,kothari2021private} show how A1/A2 can be removed, using \textit{stability} based techniques \cite{dwork2006calibrating,thakurta2013differentially,bun2016Simultaneou,vadhan2017complexity,bun2018composable}, the \textit{propose-test-release} framework~\cite{dwork2009differential}, or the \textit{truncated distribution}~\cite{chan2019foundations}. 
However, these techniques fundamentally do not work under pure DP.  More precisely, for the stability based techniques and the truncated distribution, even the output domain is different for neighboring datasets. The propose-test-release framework by nature must have a small probability that the privacy is breached, thus can only achieve $(\epsilon,\delta)$-DP.

As a necessary consequence, the utility guarantees of our estimators depend on the parameters of $\mathcal{P}$ to be estimated, namely, they are specific to the unknown $\mathcal{P}$.  As we shall see, our instance-specific results not only remove all boundedness assumptions, but also offer better utilities on most $\mathcal{P}$'s, compared to existing estimators that aim at optimizing the worst case (i.e., minimax bounds).  Finally, all our estimators can be implemented efficiently in $O(n \log n)$ time.

\subsection{Our Results and Comparison to Prior Work}
\label{sec:our_results}

Our general approach is as follows.  We first study the empirical problem, in particular, estimating the empirical mean $\mu(D)$ and the $\tau$-th quantile $X_\tau$ for any given $D$.  These empirical estimators only work over discrete domains, but we can apply them in the statistical setting by appropriately discretizing $\mathbb{R}$.  To remove A1/A2, we make our empirical estimators work over an infinite but discrete domain, namely, $\mathbb{Z}$.  To remove A3, we show that the errors achieved by our empirical estimators are instance-optimal, hence adaptive to an arbitrary $\mathcal{P}$ when applied in the statistical setting. Although our main motivation is in the statistical setting, the instance-optimality of our empirical estimators is of independent interest. 

\subsubsection{Empirical Estimators}
\label{sec:our_results_empirical}

Let $D=\{X_{1},\dots, X_{n}\}$ be a multiset drawn from $\mathbb{Z}$, and assume $X_1\le \cdots \le X_n$.  Estimating $\mu(D)$ and $X_{\tau}$ under DP has been studied previously, but existing algorithms either do not provide utility guarantees~\cite{mcmahan2017learning,amin2019bounding, andrew2019differentially, pichapati2019adaclip} or only work over a finite domain $[N] =\{0,1,\dots, N\}$ \cite{nissim2007smooth,asi2020instance,huang2021instance}. 

To reduce the domain from $\mathbb{Z}$ to a finite one, the natural idea is to use the \textit{empirical range} $\mathcal{R}(D) = [X_1,X_n]$ as the domain. However, doing so violates DP, and we must use a privatized $\tilde{\mathcal{R}}(D)$.  A good $\tilde{\mathcal{R}}(D)$ should be close to $\mathcal{R}(D)$ in both location and scale.  We thus approach the problem in two steps.  First, we obtain a privatized \textit{radius} of $D$, which is defined as $\mathrm{rad}(D) = \max_{i}|X_{i}|$. We show that our privatized radius is not too much larger than $\mathrm{rad}(D)$ while covering all but $O\left(\log\log(\mathrm{rad}(D))/\varepsilon\right)$\footnote{In this paper, we use $e$ as the base of $\log$ and define $\log(x) = 1$ for any $x\leq e$, unless stated otherwise.} elements in $D$:

\begin{theorem}[Theorem~\ref{th:err_infinite_radius}, informal]
\label{th:err_infinite_radius_intro}
There exists an $\varepsilon$-DP mechanism such that for any $D\in \mathbb{Z}^n$, it returns a  $\widetilde{\mathrm{rad}}(D)$ such that $\widetilde{\mathrm{rad}}(D)\leq 2\cdot \mathrm{rad}(D)$ and $\left|D\cap\overline{\left[-\widetilde{\mathrm{rad}}(D),\widetilde{\mathrm{rad}}(D)\right]} \right| = O\left(\frac{1}{\varepsilon}\log\log\left(\mathrm{rad}(D)\right)\right)$.\footnote{All results stated in Section~\ref{sec:intro} hold with constant success probability.}
\end{theorem}

In the second step, we try to find a rough location of $\mathcal{R}(D)$.  As we have bounded most elements into $\left[-\widetilde{\mathrm{rad}}(D),\widetilde{\mathrm{rad}}(D)\right]$, this can be done by using a finite-domain private median (Section~\ref{sec:INV}).  Then we shift $D$ to the median and invoke again our private radius estimator.  This results in a privatized $\tilde{\mathcal{R}}(D)$, whose width is not too much larger than the actual width $\gamma(D)=X_n - X_1$:

\begin{theorem}[Theorem~\ref{th:err_infinite_range}, informal]
\label{th:err_infinite_range_intro}
There exists an $\varepsilon$-DP mechanism such that for any $D\in \mathbb{Z}^n$ and $n$ not too small, it returns a range $\tilde{\mathcal{R}}(D)$ such that $|\tilde{\mathcal{R}}(D)|\leq 4\cdot \gamma(D)$, and $\left|D\cap\overline{\tilde{\mathcal{R}}(D)}\right| = O\left(\frac{1}{\varepsilon}\log\log\left(\gamma(D)\right)\right)$.
\end{theorem}

We can now invoke existing finite-domain empirical mean estimators \cite{nissim2007smooth,asi2020instance,huang2021instance} using $\tilde{\mathcal{R}}(D)$ as the domain, but it turns out that using $\tilde{\mathcal{R}}(D)$ directly with the \textit{clipped mean estimator} (Section \ref{sec:clipped_mean}) yields an even better result:

\begin{theorem}[Theorem~\ref{th:err_infinite_mean}, informal]
\label{th:err_infinite_mean_intro}
There exists an $\varepsilon$-DP mechanism such that for any $D\in\mathbb{Z}^n$ and $n$ not too small, 
it returns a $\tilde{\mu}(D)$ such that $\left|\tilde{\mu}(D)-\mu(D)\right| = O\left(\frac{\gamma(D)}{\varepsilon n}\log\log\left(\gamma(D)\right)\right)$.
\end{theorem}

\cite{huang2021instance,vadhan2017complexity} show that the width $\gamma(D)$ is an instance-specific lower bound.  More precisely, any mean estimator under DP (pure or not) has to incur an error of $\Omega(\gamma(D)/n)$ on $D$ or one of its \textit{in-neighbors} (see Section~\ref{sec:opt} for more details), so a result like Theorem \ref{th:err_infinite_mean_intro} can be considered instance-optimal, where the extra $O(\log\log(\gamma(D))/\varepsilon)$ factor is the \textit{optimality ratio}.  In contrast, the optimality ratio in \cite{huang2021instance} is $O(\log N/\varepsilon)$\footnote{The optimality ratio stated in \cite{huang2021instance} is $O(\sqrt{\log N /\rho})$, which holds under $\rho$-CDP; for pure DP, it is $O(\log N /\varepsilon)$}.  Thus, we obtain an exponential improvement even in the finite-domain case.  Furthermore, we show that the optimality ratio cannot be better than $O(\log\log N/\varepsilon)$ for all $D$ in the finite-domain case:

\begin{theorem}[Theorem~\ref{th:opt_mean}]
\label{th:opt_mean_intro}
For any $\varepsilon$, any integer $N\geq 1$, $n>\log\log_2 N/\varepsilon$, and any $\varepsilon$-DP mechanism $\mathcal{M}:[N]^n\rightarrow\mathbb{R}$, there exists $D\in[N]^n$, such that $|\mathcal{M}(D) - \mu(D)|\geq\frac{\gamma(D)}{3\varepsilon n}\log\log_2(N)$.
\end{theorem}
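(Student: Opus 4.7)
The plan is to use a packing argument combined with group privacy. Fix $L=\lfloor\log_{2}N\rfloor$ and choose an integer $m$ slightly larger than $\log(L)/\varepsilon$; by the hypothesis $n>\log\log_{2}(N)/\varepsilon$ we have $m\le n$, so the following construction fits inside $[N]^n$.

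For each $j\in\{0,1,\ldots,L-1\}$, I would let $D_j\in[N]^n$ consist of $m$ copies of the value $2^j$ and $n-m$ copies of $0$. Then $\gamma(D_j)=2^j$, $\mu(D_j)=m\,2^j/n$, and for any $j\ne j'$ the datasets $D_j$ and $D_{j'}$ differ in exactly $m$ coordinates (the positions whose non-zero value changes). Set $\alpha_j=\gamma(D_j)\log L/(3\varepsilon n)$ and $A_j=[\mu(D_j)-\alpha_j,\mu(D_j)+\alpha_j]$. An elementary calculation shows the $A_j$ are pairwise disjoint: the separation $\mu(D_{j+1})-\mu(D_j)=m\,2^j/n$ exceeds $\alpha_j+\alpha_{j+1}=2^j\log L/(\varepsilon n)$ precisely when $m>\log L/\varepsilon$, which our choice of $m$ guarantees. (The crucial role of the constant $1/3$ in $\alpha_j$ is exactly to make this disjointness condition come out to $m>\log L/\varepsilon$.)

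For the contradiction step, I would assume toward a contradiction that $\mathcal M$ places at least a fixed constant probability on $A_j$ for every $j$. Since $|D_0\triangle D_j|=m$, $\varepsilon$-DP group privacy yields
\[
\Pr[\mathcal M(D_0)\in A_j]\;\ge\;e^{-m\varepsilon}\,\Pr[\mathcal M(D_j)\in A_j].
\]
Summing over $j$ and using that the $A_j$ are disjoint subsets of $\mathbb{R}$,
\[
1\;\ge\;\sum_{j=0}^{L-1}\Pr[\mathcal M(D_0)\in A_j]\;\ge\;L\,e^{-m\varepsilon}\,\min_{j}\Pr[\mathcal M(D_j)\in A_j].
\]
Because $m\varepsilon$ is taken just above $\log L$, the prefactor $Le^{-m\varepsilon}$ is $\Theta(1)$, which forces some $D_j$ to have $\Pr[\mathcal M(D_j)\in A_j]$ bounded away from $1$. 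That $D_j$ is the witness claimed by the theorem: with positive probability (equivalently, for some outcome in the support of $\mathcal M(D_j)$), $|\mathcal M(D_j)-\mu(D_j)|\ge \alpha_j=\gamma(D_j)\log\log_2(N)/(3\varepsilon n)$.

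The main obstacle I anticipate is the tight tuning of $m$: it must be simultaneously (i) just large enough that the strict separation $\mu(D_{j+1})-\mu(D_j)>\alpha_j+\alpha_{j+1}$ holds (so the packing bound $\sum_j\Pr[\mathcal M(D_0)\in A_j]\le 1$ can be applied) and (ii) small enough that $Le^{-m\varepsilon}$ is not too small, so that the averaged group-privacy inequality does force some $D_j$ to violate the claimed accuracy. Both conditions pin $m$ to the neighborhood of $\log L/\varepsilon$, so the argument is genuinely boundary-tight; matching the explicit $1/3$ in the theorem statement is exactly what the factor of $3$ in the denominator of $\alpha_j$ buys, and the remaining work is the careful arithmetic confirming disjointness of the $A_j$'s and bookkeeping of the group-privacy factor.
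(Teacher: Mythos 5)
Your construction (geometric scales $2^j$, roughly $\log\log_2(N)/\varepsilon$ modified coordinates, disjoint target intervals, group privacy) is essentially the paper's packing argument, but your final counting step has a genuine gap: it never produces a contradiction. Disjointness of the $A_j$ forces $m\varepsilon>\log L$, hence $Le^{-m\varepsilon}<1$, so the chain $1\ge\sum_j\Pr[\mathcal{M}(D_0)\in A_j]\ge Le^{-m\varepsilon}\min_j\Pr[\mathcal{M}(D_j)\in A_j]$ is satisfied even if every $\Pr[\mathcal{M}(D_j)\in A_j]$ equals $1$; a prefactor that is $\Theta(1)$ but strictly below $1$ forces nothing, because probabilities never exceed $1$. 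To conclude $\min_j\Pr[\mathcal{M}(D_j)\in A_j]\le 1-\delta$ you would need $Le^{-m\varepsilon}>1$, i.e.\ $m<\log L/\varepsilon$, which is exactly what disjointness forbids. So the two conditions you describe as ``boundary-tight'' are in fact mutually exclusive in your formulation: no choice of $m$ closes the argument. (A secondary issue: even if it closed, ``bounded away from $1$'' / ``with positive probability'' is weaker than the high-probability error guarantee $\mathrm{Err}(\mathcal{M},D,1/3)\ge \frac{\gamma(D)}{3\varepsilon n}\log\log_2 N$ that the body theorem asserts.)

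The missing idea is the paper's choice of anchor. The paper's family is the all-zero dataset $D^{(0)}$ together with $D^{(i)}$ obtained by changing $\log\log_2(N)/\varepsilon$ zeros to $2^i$, and the group-privacy transfer is made to $D^{(0)}$, whose width is \emph{zero}: if the theorem failed everywhere, then $\mathcal{M}(D^{(0)})=0$ with probability at least $2/3$, so at most $1/3$ of the mass of $\mathcal{M}(D^{(0)})$ is available for the disjoint intervals around the $\mu(D^{(i)})$ (none of which contain $0$), while group privacy plus the assumed accuracy on each $D^{(i)}$ forces that mass to be at least $\log_2(N)\cdot\frac{1}{\log_2 N}\cdot\frac{2}{3}=\frac{2}{3}$ --- contradiction. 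That $2/3$-versus-$1/3$ slack at the degenerate anchor is precisely what your trivial bound $\sum_j\Pr[\mathcal{M}(D_0)\in A_j]\le 1$ lacks. You could repair your version without the all-zero dataset by observing that, under the contradiction hypothesis, $A_0$ already absorbs at least $2/3$ of the mass of $\mathcal{M}(D_0)$, so the remaining disjoint intervals $A_1,\dots,A_{L-1}$ receive at most $1/3$; the requirement then becomes $(L-1)e^{-m\varepsilon}\cdot\frac{2}{3}>\frac{1}{3}$, i.e.\ $m\varepsilon<\log\left(2(L-1)\right)$, which is compatible with $m\varepsilon>\log L$ --- but as written the proof does not go through.
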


For quantile estimation, there exists a finite-domain estimator (Section \ref{sec:INV}) that achieves a rank error of $O(\log N/\varepsilon)$. Invoking it with $\tilde{\mathcal{R}}(D)$ immediately yields:

\begin{theorem}[Theorem~\ref{th:err_infinite_quantile}, informal]
\label{th:err_infinite_quantile_intro}
There exists an $\varepsilon$-DP mechanism such that for any $D\in \mathbb{Z}^n$, any $1\leq \tau\leq n$, and $n$ not too small, it returns a value $\Tilde{X}_{\tau}$ such that $X_{\tau-t} \leq \Tilde{X}_{\tau}\leq X_{\tau+t},\footnote{Define $X_{i}=X_{n}$ for $i>n$ and $X_{i}=X_{1}$ for $i<1$.}$
for some $t = O\left(\frac{1}{\varepsilon}\log\left(\gamma(D)\right)\right)$.
\end{theorem}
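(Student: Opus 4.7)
The plan is to reduce to the finite-domain case via Theorem~\ref{th:err_infinite_range_intro} and then invoke the finite-domain quantile estimator referenced in Section~\ref{sec:INV}. First, I split the privacy and failure budgets as $(\varepsilon/2,\beta/2)+(\varepsilon/2,\beta/2)$. Applying Theorem~\ref{th:err_infinite_range_intro} with parameters $(\varepsilon/2,\beta/2)$---whose precondition on $n$ matches the hypothesis up to constants---yields, with probability $\ge 1-\beta/2$, an interval $\tilde{\mathcal{R}}(D)=[L,R]$ of length $|\tilde{\mathcal{R}}(D)|\le 4\gamma(D)$ that excludes at most $m=O\!\left(\frac{1}{\varepsilon}\log(\log\gamma(D)/\beta)\right)$ points of $D$. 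Let $D'$ be obtained from $D$ by clipping each record into $[L,R]$; its support has size $N'=O(\gamma(D))$.

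Next, I run the $\varepsilon/2$-DP finite-domain quantile estimator of Section~\ref{sec:INV} on $D'$ at rank $\tau$ with failure probability $\beta/2$. Its rank-error guarantee produces, with probability $\ge 1-\beta/2$, a value $\tilde{X}_\tau\in[L,R]$ with $X'_{\tau-t'}\le \tilde{X}_\tau \le X'_{\tau+t'}$, where $X'_{1}\le\cdots\le X'_{n}$ are the sorted entries of $D'$ and $t'=O\!\left(\frac{1}{\varepsilon}\log(N'/\beta)\right)=O\!\left(\frac{1}{\varepsilon}\log(\gamma(D)/\beta)\right)$. Basic composition makes the whole mechanism $\varepsilon$-DP, and a union bound gives overall success probability $\ge 1-\beta$.

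The remaining step translates the guarantee on $D'$ back to one on $D$. Since clipping is monotone, the $m_L\le m$ points of $D$ that fall below $L$ are precisely the smallest ones (mapped to $L$), the $m_R\le m$ points that fall above $R$ are the largest (mapped to $R$), and the interior points are unchanged. Hence for $m_L<i\le n-m_R$ we have $X'_{i}=X_{i}$, while $X'_{i}=L$ for $i\le m_L$ and $X'_{i}=R$ for $i>n-m_R$. A short case analysis on whether $\tau\pm t'$ lies inside $(m_L,n-m_R]$ or in one of the clipped ranges---using the theorem's convention $X_i=X_1$ for $i<1$ and $X_i=X_n$ for $i>n$ to absorb the boundary---shows that $X_{\tau-(t'+m)}\le \tilde{X}_\tau \le X_{\tau+(t'+m)}$. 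Setting $t=t'+m$ and noting that $m$ is dominated by $t'$ gives $t=O\!\left(\frac{1}{\varepsilon}\log(\gamma(D)/\beta)\right)$, as claimed.

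I expect the only non-routine piece to be this last rank-accounting at the endpoints of $\tilde{\mathcal{R}}(D)$; privacy composition, the union bound, and the direct appeals to Theorem~\ref{th:err_infinite_range_intro} and the known finite-domain quantile estimator are standard.
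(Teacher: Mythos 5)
Your proposal is correct and follows essentially the same route as the paper's proof: obtain $\tilde{\mathcal{R}}(D)$ via the infinite-domain range mechanism, clip $D$ into it, run the finite-domain quantile estimator (Lemma~\ref{lm:err_known_quantile}) on the clipped data, and absorb the $O\left(\frac{1}{\varepsilon}\log\left(\log(\gamma(D))/\beta\right)\right)$ clipped points into the rank error, which is dominated by the $O\left(\frac{1}{\varepsilon}\log(\gamma(D)/\beta)\right)$ term. The only differences are the budget-splitting constants and that you spell out the rank-accounting at the clipped endpoints, which the paper dismisses with one sentence.
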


In the finite-domain case, it is known that the rank error has to be $\Omega(\log N / \varepsilon)$ for at least one $D$, by a reduction from the \textit{interior point problem} \cite{beimel2010bounds,bun2015differentially}.  In contrast, our error guarantee is a more instance-specific one, which is also worst-case optimal in the finite-domain case.

In addition, it is worth pointing out that sum estimation is equivalent to answering self-join-free aggregation queries in a relational database under user-level privacy protection~\cite{dong2022r2t}, which has been widely researched in database community. In that problem, the state-of-the-art algorithm~\cite{dong2022r2t} achieves the error $O(\frac{\mathrm{rad}(D)}{\varepsilon }\log(N)\log\log(N))$ and also requires a domain assumption $N$.  Consequently, our result also yields a significant in that problem.

\subsubsection{Statistical Mean Estimation}
\label{sec:our_results_mean}

Next, we move onto the statistical setting, where $D$ is an i.i.d.\ sample drawn from some arbitrary, unknown $\mathcal{P}$. Before we can apply our infinite-domain empirical mean estimator (Theorem \ref{th:err_infinite_mean_intro}), we have to discretize $\mathbb{R}$.  Since the sampling error is already $O(\sigma/\sqrt{n})$, a bucket size of $b\le \sigma/n$ would suffice.  However, $\sigma$ is not known; actually, estimating $\sigma$ is another mean estimation problem.  Under assumption A2, prior work \cite{karwa2018finite,BunS19,KamathLSU19,kamath2020differentially} simply used $b=\sigma_{\min} / n$ as the bucket size.  Without any assumptions, we seek to find a privatized lower bound of $\sigma$ and use that as the bucket size.  After that, we can apply Theorem \ref{th:err_infinite_mean_intro}, but this leads to sub-optimal errors in the statistical setting.  The reason is that in the empirical setting, we wish to minimize the number of points outside $\tilde{\mathcal{R}}(D)$, which translates into the optimality ratio.  When $D$ is an i.i.d.\ sample, the points in $D$ are more well-behaved and we can use a smaller $\tilde{\mathcal{R}}(D)$ to clip $D$ more aggressively. Our idea is thus to find $\tilde{\mathcal{R}}(D')$ on a sub-sample $D'$ of $D$ and apply the clipped mean estimator. It turns out $|D'| = \varepsilon n$ is the right sub-sample size, which yields our main result on a universal private mean estimator:

\begin{theorem}[Theorem~\ref{th:err_mean}, informal]
\label{th:err_mean_intro}
There exists an $\varepsilon$-DP mechanism such that for any $\mathcal{P}$, given $D\sim \mathcal{P}^n$, if 
\begin{equation}
\label{eq:n_LB_intro}
n>\Omega\left(\frac{1}{\varepsilon}\log\log{1\over\varphi(1/16)}+\frac{1}{\varepsilon}\log\log\left(\mathrm{IQR}\right)+\frac{1}{\varepsilon}\log{|\mu|+\sigma+\gamma(\varepsilon n) \over \varphi(1/16)}\right),
\end{equation}
then it returns a $\tilde{\mu}$ such that
\begin{equation}
|\mu-\tilde{\mu}|
= O\left(\min_{\xi\geq10\cdot\gamma\left(\varepsilon n\right)+2\sigma} \left(\left|\E\left[X<\mu-\xi\right]+\E\left[X>\mu+\xi\right]\right|+\frac{\xi}{\varepsilon n}\log\log{\gamma(\varepsilon n) \over \varphi(1/16)}\right) +{\sigma \over \sqrt{n}} \right). \label{eq:mean_error_intro}
\end{equation}
\end{theorem}

The formal definitions of $\varphi(1/16)$, $\gamma(\varepsilon n)$, $\E\left[X<\mu-\xi\right]$, and $\E\left[X>\mu+\xi\right]$ are given in Section~\ref{sec:notation}. Roughly speaking,  $\varphi(1/16)$ is the minimum width of any interval with a probability mass $1/16$, which is strictly positive for any continuous distribution $\mathcal{P}$. This term is required due to the searching for a proper bucket size. For all well-behaved $\mathcal{P}$, $\varphi(1/16) = \Theta(\sigma)$, but it may get arbitrarily small (e.g., when $f$ has a very narrow and high peak), which we call ill-behaved.  Nevertheless, we would like to stress that (1) our algorithm does not need to know $\varphi(1/16)$ \textit{a priori} (the analysis needs it \textit{a posteriori}); (2) our dependency on $1/\varphi(1/16)$ will be logarithmic or even $\log\log$; and (3) we did not try to optimize the constant $1/16$.  $\gamma(\varepsilon n)$ is a constant-probability bound on $\gamma(D') = X'_{\varepsilon n} - X'_1$ when $D'$ is a random sample of size $\varepsilon n$ drawn from $\mathcal{P}$, while $\E\left[X<\mu-\xi\right]$ and $\E\left[X>\mu+\xi\right]$ are the contributions to $\mu$ from the regions outside $[\mu-\xi, \mu+\xi]$, which correspond to (part of) the bias in $\tilde{\mu}$.  The last  term in the $\min\{\dots\}$ of \eqref{eq:mean_error_intro} is the DP noise (both bias and variance).  Importantly, the achieved error is the best bias-variance trade-off over all possible $\xi > 10\cdot \gamma(\varepsilon)+2\sigma$.  The last term in \eqref{eq:mean_error_intro} is the sampling error, which exists even in the non-private setting, so it does not depend on $\varepsilon$. 

Most prior works in the statistical setting state their results in terms of sample complexity, namely, what is the required sample size $n$ for achieving error $\alpha$.  Our lower bound requirement \eqref{eq:n_LB_intro} on $n$ easily translates into a term in the sample complexity, but it is cumbersome to rewrite \eqref{eq:mean_error_intro} due to the use of $\gamma(\varepsilon n)$ and the $\min_{\xi}$. 
To facilitate the comparison, below we relax $\gamma(\varepsilon n)$ appropriately and consider some fixed $\xi$.  This will result in simpler (but possibly looser) versions of Theorem \ref{th:err_mean_intro} in terms of the sample complexity.  We may also use the $\tilde{O}$ notation to suppress polylogarithmic factors in $n, {1 \over \alpha}, \log |\mu|, \log \sigma, \log{1 \over \varphi(1/16)}, \log R, \log{\sigma_{\max} \over \sigma_{\min}}$.

\paragraph{Gaussian distributions.}
If $\mathcal{P}$ is a Gaussian, then $\varphi(1/16) = \Theta(\sigma)$ and $\gamma(\varepsilon n) = \tilde{O}\left(\sigma \sqrt{\log(\varepsilon n)}\right)$.  We fix $\xi= c \cdot \sigma \sqrt{\log(\varepsilon n)}$ for some large constant $c$. Then Theorem \ref{th:err_mean_intro} simplifies into:

\begin{theorem}[Theorem~\ref{th:err_mean_gaussian}]
\label{th:err_mean_gaussian_intro}
For any Gaussian $\mathcal{P}$ and any $\alpha>0$, the $\varepsilon$-DP mechanism from Theorem \ref{th:err_mean_intro} takes $n=\tilde{O}\left(\frac{1}{\varepsilon}\log{|\mu|\over\sigma}+ \frac{\sigma^2}{\alpha^2}+\frac{\sigma}{\varepsilon \alpha}\right)$ samples and returns a $\tilde{\mu}$ such that $|\tilde{\mu}-\mu| \le \alpha$.
\end{theorem}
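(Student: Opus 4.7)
The plan is to specialize Theorem~\ref{th:err_mean_intro} to the Gaussian case by computing each of the instance-specific quantities that appear in \eqref{eq:n_LB_intro} and \eqref{eq:mean_error_intro} under $\mathcal{P} = \mathcal{N}(\mu,\sigma^2)$, then choosing an appropriate truncation threshold $\xi$ and balancing the three error terms against $\alpha$.

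First I would verify the Gaussian values of the two parameters on which the bound depends. Since the Gaussian density is maximized at $\mu$ with height $\Theta(1/\sigma)$, the shortest interval of probability mass $1/16$ has width $\varphi(1/16) = \Theta(\sigma)$, so $\log(1/\varphi(1/16)) = O(\log(1/\sigma))$ and $\log\log(1/\varphi(1/16))$ is absorbed into the $\tilde{O}$ notation. For the range of a sub-sample, a standard Gaussian maximal inequality gives $\gamma(\varepsilon n,\beta/9) = \tilde{O}(\sigma\sqrt{\log(\varepsilon n)})$ for constant $\beta$. Also $\mathrm{IQR} = \Theta(\sigma)$, so $\log\log(\mathrm{IQR})$ is polylogarithmic and absorbed. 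Plugging these into \eqref{eq:n_LB_intro}, the dominant term is
\[\frac{1}{\varepsilon}\log\frac{|\mu|+\sigma+\tilde{O}(\sigma\sqrt{\log(\varepsilon n)})}{\sigma} = \tilde{O}\!\left(\frac{1}{\varepsilon}\log\frac{|\mu|}{\sigma}\right),\]
which matches the first term in the claimed sample complexity.

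Next I would choose $\xi = c\,\sigma\sqrt{\log(\varepsilon n)}$ for a sufficiently large constant $c$, which satisfies the constraint $\xi \geq 10\,\gamma(\varepsilon n,\beta/9) + 2\sigma$ once $c$ is large enough. For the bias from the tails, the Gaussian bound $\mathbb{E}[|X-\mu|\,\mathbf{1}_{|X-\mu|>\xi}] = O(\sigma\cdot(\xi/\sigma)e^{-\xi^2/(2\sigma^2)})$ gives, for $c$ sufficiently large, a contribution of at most $\sigma \cdot (\varepsilon n)^{-c^2/2}$, which is dominated by the DP-noise term. The DP-noise term is
\[\frac{\xi}{\varepsilon n}\log\!\left(\tfrac{1}{\beta}\log\tfrac{\gamma(\varepsilon n,\beta/9)}{\varphi(1/16)}\right) = \tilde{O}\!\left(\frac{\sigma}{\varepsilon n}\right),\]
and the sampling term in \eqref{eq:mean_error_intro} is $O(\sigma/\sqrt{n})$ for constant $\beta$. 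Summing, $|\tilde{\mu}-\mu| = \tilde{O}(\sigma/(\varepsilon n) + \sigma/\sqrt{n})$.

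Finally, to make this at most $\alpha$ I would set $\sigma/(\varepsilon n) \leq \alpha$, giving $n = \tilde{O}(\sigma/(\varepsilon\alpha))$, and $\sigma/\sqrt{n} \leq \alpha$, giving $n = \Omega(\sigma^2/\alpha^2)$. Combined with the $\tilde{O}((1/\varepsilon)\log(|\mu|/\sigma))$ requirement inherited from \eqref{eq:n_LB_intro}, this yields exactly the stated sample complexity. The only subtle point, and the place where I would spend the most care, is verifying that the choice of $\xi$ simultaneously (i) satisfies the feasibility constraint $\xi \geq 10\,\gamma(\varepsilon n,\beta/9)+2\sigma$ appearing under the $\min$, and (ii) kills the tail-bias term to below the DP-noise term for $c$ chosen as an absolute constant independent of $\alpha,\mu,\sigma,\varepsilon,n$; the rest is routine algebra once the Gaussian-specific substitutions for $\varphi(1/16)$ and $\gamma(\varepsilon n,\beta/9)$ are in place.
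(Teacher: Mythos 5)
Your proposal is correct and follows essentially the same route as the paper: substitute the Gaussian values $\varphi(1/16)=\Theta(\sigma)$, $\mathrm{IQR}=\Theta(\sigma)$, $\gamma(\varepsilon n,\beta/9)=O\left(\sigma\sqrt{\log(\varepsilon n)}\right)$ into Theorem~\ref{th:err_mean}, fix $\xi=c\sigma\sqrt{\log(\varepsilon n)}$, and invert the resulting error bound to get the stated sample complexity. The only cosmetic difference is that the paper disposes of the tail-bias term exactly by symmetry ($\E[X<\mu-\xi]+\E[X>\mu+\xi]=0$ for a Gaussian), whereas you bound it by a Gaussian tail integral dominated by the DP-noise term; both suffice.
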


For Gaussian mean, \cite{karwa2018finite} and \cite{KamathLSU19,biswas2020coinpress} gave two $\varepsilon$-DP mechanisms under A1/A2. Their sample complexities are $n=\tilde{O}\left(\frac{1}{\varepsilon}\log {R \over \sigma_{\mathrm{min}}}+ \frac{\sigma^2}{\alpha^2}+\frac{\sigma}{\varepsilon \alpha}\right)$ and $n=\tilde{O}\left(\frac{1}{\varepsilon}\log {R \over \sigma}+ {1\over \varepsilon}\log \frac{\sigma_{\mathrm{max}}}{\sigma_{\mathrm{min}}} + \frac{\sigma^2}{\alpha^2}+\frac{\sigma}{\varepsilon \alpha}\right)$, respectively, both  inferior to Theorem \ref{th:err_mean_gaussian_intro}. 

\cite{karwa2018finite} show that $\Omega\left(\frac{1}{\varepsilon}\log {R \over \sigma}+ \frac{\sigma^2}{\alpha^2}+\frac{\sigma}{\varepsilon \alpha}\right)$ samples are necessary.
In fact, what they have proved is a worst-case lower bound, i.e., for any $R$, $\sigma$, and any $\varepsilon$-DP mechanism $\mathcal{M}$, there exists a Gaussian distribution $\mathcal{P}$ with $|\mu_{\mathcal{P}}|\le R, \sigma_{\mathcal{P}} = \sigma$ such that $\mathcal{M}$ requires  $\Omega\left(\frac{1}{\varepsilon}\log {R \over \sigma}+ \frac{\sigma^2}{\alpha^2}+\frac{\sigma}{\varepsilon \alpha}\right)$ samples to estimate $\mu_{\mathcal{P}}$ within an error of $\alpha$.  Our mechanism indeed requires this many samples on a $\mathcal{P}$ with $|\mu_{\mathcal{P}}| = R, \sigma_{\mathcal{P}} = \sigma$, thus no contradiction.  

\paragraph{Heavy-tailed distributions.}
Next, we consider the case where $\mathcal{P}$ has a finite $k$th central moment $\mu_k$ for some $k\ge 2$. In this case, we have $\gamma(\varepsilon n)<O\left((\varepsilon n \mu_k)^{1/k}\right)$.  Fixing $\xi= c\cdot (\varepsilon n \mu_k)^{1/k}$ for some large $c$, we can show that Theorem \ref{th:err_mean_intro} simplifies to:

\begin{theorem}[Theorem~\ref{th:err_mean_heavy}]
\label{th:err_mean_heavy_intro}
For any $\mathcal{P}$ with $k$-th central moment $\mu_k$ for some $k\ge 2$, and any $\alpha>0$, the $\varepsilon$-DP mechanism in Theorem \ref{th:err_mean_intro} takes 
\begin{equation}
\label{eq:err_mean_heavy_intro}
n=\tilde{O}\left( \frac{1}{\varepsilon}\log{|\mu|+(\varepsilon \mu_k)^{1/k} \over \varphi(1/16)}+ \frac{\sigma^2}{\alpha^2}+\frac{\mu_k^{1/(k-1)}}{\varepsilon \alpha^{k/(k-1)}}\right)
\end{equation}
samples and returns a $\tilde{\mu}$ such that $|\tilde{\mu}-\mu| \le \alpha$.
\end{theorem}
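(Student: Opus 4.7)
The plan is to specialize Theorem \ref{th:err_mean_intro} by choosing $\xi = c\cdot(\varepsilon n\mu_k)^{1/k}$ for a sufficiently large constant $c$ --- the value that balances the truncation bias against the DP noise --- and then bounding each term in \eqref{eq:mean_error_intro} and \eqref{eq:n_LB_intro} using standard moment inequalities. First, Markov applied to $|X-\mu|^k$ gives $\Pr[|X-\mu|>t]\le \mu_k/t^k$, and a union bound over $\varepsilon n$ samples yields $\gamma(\varepsilon n,\beta/9)=O\bigl((\varepsilon n\mu_k/\beta)^{1/k}\bigr)$; for constant $\beta$ this simplifies to $O((\varepsilon n\mu_k)^{1/k})$. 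Combined with Lyapunov's inequality $\sigma=\mu_2^{1/2}\le \mu_k^{1/k}$, valid because $k\ge 2$, the admissibility condition $\xi\ge 10\gamma(\varepsilon n,\beta/9)+2\sigma$ of Theorem \ref{th:err_mean_intro} is met once $c$ is large enough.

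Next, I would bound the truncation bias appearing in \eqref{eq:mean_error_intro}. Interpreting $|\E[X<\mu-\xi]+\E[X>\mu+\xi]|$ as the tail contribution of $\mathcal{P}$ to $\mu$, one has
\[
\bigl|\E[X<\mu-\xi]+\E[X>\mu+\xi]\bigr|\;\le\; \E\bigl[|X-\mu|\,\mathbf{1}\{|X-\mu|>\xi\}\bigr]\;\le\;\frac{\mu_k}{\xi^{k-1}},
\]
where the last step is the routine estimate $|X-\mu|\le |X-\mu|^k/\xi^{k-1}$ on the event $\{|X-\mu|>\xi\}$. Substituting $\xi=c(\varepsilon n\mu_k)^{1/k}$ gives $O\bigl(\mu_k^{1/k}/(\varepsilon n)^{(k-1)/k}\bigr)$, and the DP-noise term $\tilde{O}(\xi/(\varepsilon n))$ of \eqref{eq:mean_error_intro} matches this up to polylog factors, confirming the balance. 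Adding the sampling error $\tilde{O}(\sigma/\sqrt{n})$ and forcing each summand to be at most $\alpha$ produces $n\ge \sigma^2/\alpha^2$ and $\varepsilon n\ge \mu_k^{1/(k-1)}/\alpha^{k/(k-1)}$, the second and third terms of \eqref{eq:err_mean_heavy_intro}.

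For the remaining $\log$ summand of \eqref{eq:n_LB_intro}, the $\log\log$ contributions in $1/\varphi(1/16)$ and $\mathrm{IQR}$ are absorbed into $\tilde{O}$, and in $\frac{1}{\varepsilon}\log\frac{|\mu|+\sigma+\gamma(\varepsilon n,\beta/9)}{\varphi(1/16)}$ I would use $\sigma\le \mu_k^{1/k}\le (\varepsilon n\mu_k)^{1/k}$ and the Markov bound on $\gamma$ to replace the numerator by $|\mu|+O((\varepsilon n\mu_k)^{1/k})$; absorbing the polylog-in-$n$ factor yields the first term $\frac{1}{\varepsilon}\log\frac{|\mu|+(\varepsilon\mu_k)^{1/k}}{\varphi(1/16)}$ of \eqref{eq:err_mean_heavy_intro}. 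The only technical nuisance --- not a genuine obstacle --- is that $\gamma(\varepsilon n,\beta/9)$ depends implicitly on $n$, making \eqref{eq:n_LB_intro} an implicit inequality in $n$; I would resolve this by a one-step bootstrap (plug in a crude upper bound on $n$ and re-substitute), which affects only polylog-in-$n$ factors that $\tilde{O}$ swallows. Everything else is standard moment bookkeeping layered on top of the instance-optimal guarantee of Theorem \ref{th:err_mean_intro}.
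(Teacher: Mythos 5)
Your proposal is correct and follows essentially the same route as the paper: specialize Theorem~\ref{th:err_mean_intro} with $\xi=c\,(\varepsilon n\mu_k)^{1/k}$, bound $\gamma(\varepsilon n,\beta/9)$ by a $k$-th-moment Markov bound plus a union bound (the paper's Lemma~\ref{lm:obs_m_range}), use $\sigma\le\mu_k^{1/k}$ to verify admissibility, and bound the clipping bias by $\mu_k/\xi^{k-1}$ before solving for $n$. The only cosmetic difference is that the paper obtains the tail-bias bound via H\"older's inequality (Lemma~\ref{lm:obs_k_moment}), whereas you use the pointwise domination $|X-\mu|\le|X-\mu|^k/\xi^{k-1}$ on the tail event, which yields the identical estimate.
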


As our universal estimator does not need to know $k$ and $\mu_k$, Theorem \ref{th:err_mean_heavy_intro} actually holds for any $(k, \mu_k)$, and the bound should really be the infimum over all $k$.  In particular, if $\mathcal{P}$ is Gaussian, for which $\mu_k \le \sigma^k (k-1)!!$ for all $k$, Theorem \ref{th:err_mean_heavy_intro} essentially degenerates into Theorem \ref{th:err_mean_gaussian_intro} by setting $k$ to a large constant.  Anyhow, we would still state Theorem \ref{th:err_mean_heavy_intro} for a single $k$ for ease of comparison with prior work.  Also note that, as $k$ gets smaller, the privacy term $\frac{\mu_k^{1/(k-1)}}{\varepsilon \alpha^{k/(k-1)}}$ becomes more significant compared with the sampling error $\frac{\sigma^2}{\alpha^2}$.  This is intuitive: As $\mathcal{P}$ more spreads out, the individual values in the sample become more important, hence a higher cost for privacy.  For $k=2$, the privacy term would dominate the sampling error for all $\varepsilon \le 1$.

For heavy-tailed distributions, the previous $\varepsilon$-mechanism~\cite{kamath2020private} requires A1/A2 (for A2, their assumption is that $\mu_k \le \bar{\mu}_k \le R^k$ for given $k, \bar{\mu}_k$). Their sample complexity is 
\begin{equation}
\label{eq:heavy_tail_prior}
n = \tilde{O}\left(\frac{1}{\varepsilon}\log {R \over \bar{\mu}^{1/k}_k}+\frac{\sigma^2}{\alpha^2}+\frac{\bar{\mu}_k^{1/(k-1)}}{\varepsilon \alpha^{k/(k-1)}}\right)
\end{equation}

The sampling error term ${\sigma^2 \over \alpha^2}$ in \eqref{eq:heavy_tail_prior} is the same as the one in \eqref{eq:err_mean_heavy_intro}.  For the privacy term (the last term) in \eqref{eq:heavy_tail_prior} to match that in \eqref{eq:err_mean_heavy_intro}, they will need $\bar{\mu}_k$ to be a constant-factor approximation of $\mu_k$, which is not known how to obtain in a DP fashion.  In fact, if $\mu_{2k} = \infty$, there is no way to obtain such a $\bar{\mu}_k$ even in the non-private setting other than by assumption.  Assuming such a $\bar{\mu}_k = O(\mu_k)$ is given, it remains to compare $O\left(\log{|\mu|+(\varepsilon \mu_k)^{1/k} \over \varphi(1/16)}\right)$ and $O\left(\log {R \over \bar{\mu}^{1/k}_k}\right) = O\left(\log {R \over \mu^{1/k}_k}\right)$.  Since $|\mu|\le R, \mu_k^{1/k}\le R$, the former is always better unless $\mathcal{P}$ is ill-behaved: $\log{1\over \varphi(1/16)}=\omega\left( \log{1\over \mu_k^{1/k}}\right)$, i.e., $\varphi(1/16)$ is more than polynomially smaller than $\mu_k^{1/k}$. \cite{kamath2020private} also prove that $\Omega\left(\frac{\bar{\mu}_k^{1/(k-1)}}{\varepsilon \alpha^{k/(k-1)}}\right)$ samples are necessary.  Similar to the argument in the Gaussian case, this lower bound is worst-case.  It does not imply that this many samples are needed for every $\mathcal{P}$, or that the $\mu_k \le \bar{\mu}_k$ assumption is needed \textit{a priori}.

\paragraph{Arbitrary distributions.}
If $\mathcal{P}$ only has finite $\mu_2 = \sigma^2$, this corresponds to the most difficult distributions.  Note that in this case, the sample complexity of \cite{kamath2020private} becomes
\begin{equation}
\label{eq:heavy_tail_prior2}
n = \tilde{O}\left(\frac{1}{\varepsilon}\log {R \over \sigma_{\max}}+\frac{\sigma^2}{\alpha^2}+\frac{\sigma_{\max}^2}{\varepsilon \alpha^2}\right) =
\tilde{O}\left(\frac{1}{\varepsilon}\log {R \over \sigma_{\max}}+\frac{\sigma_{\max}^2}{\varepsilon \alpha^2}\right)
\end{equation}
For this problem, \cite{BunS19} proposed a different mean estimator under A1/A2 with the sample complexity
\begin{equation}
\label{eq:heavy_tail_prior3}
n=\tilde{O}\left(\frac{1}{\varepsilon}\log {R \over \sigma_{\min}}+\frac{\sigma^2}{\varepsilon^2\alpha^2}+\frac{\sigma^2}{\varepsilon\alpha^2}\log{R \over \sigma_{\min}}\right).\footnote{The result in \cite{BunS19} is claimed under CDP, which leads to a result under pure-DP by changing a distribution of noise.}
\end{equation}

These two results do not dominate each other.  
If the given $\sigma_{\max}$ is a constant-factor approximation of $\sigma$, then \eqref{eq:heavy_tail_prior2} is better than \eqref{eq:heavy_tail_prior3}; otherwise, \eqref{eq:heavy_tail_prior2} can be arbitrarily worse than \eqref{eq:heavy_tail_prior3}.  Note that again there is no way to obtain a good $\sigma_{\max}$ other than by assumption for a $\mathcal{P}$ with $\mu_4 = \infty$.

Meanwhile, our algorithm is better than both \cite{kamath2020private} and \cite{BunS19} except for ill-behaved $\mathcal{P}$.  Setting $k=2$, \eqref{eq:err_mean_heavy_intro} becomes
\begin{equation}
\label{eq:arbitrary_intro}
n=\tilde{O}\left( \frac{1}{\varepsilon}\log{|\mu|+\sqrt{\varepsilon}\sigma \over \varphi(1/16)}+\frac{\sigma^2}{\varepsilon \alpha^2}\right).
\end{equation}
We have already compared with \cite{kamath2020private} above for a general $k$. For the comparison with \cite{BunS19}, in addition to achieving pure DP, we see that the second term in \eqref{eq:arbitrary_intro} is strictly better than the last two terms in  \eqref{eq:heavy_tail_prior3}.  The first term in \eqref{eq:arbitrary_intro} is also better than that in \eqref{eq:heavy_tail_prior3} in most reasonable cases, unless $\mathcal{P}$ is ill-behaved ($\varphi(1/16) \ll \sigma$) or a very small $R$ is given (which would make the mean estimation problem meaningless).

\subsubsection{Variance Estimation}
\label{sec:our_results_variance}

For variance estimation, we first use the standard technique of randomly pairing up the elements in $D$.  For each pair $(X, X')$, compute $Z=(X-X')^2$, and let $H=\{Z_1,Z_2,\cdots,Z_{n/2}\}$ be the resulting $Z$'s. Since $\E[Z]=2\sigma^2$, the problem boils down to estimating $\E[Z]$. As our mean estimator is universal, we can apply it directly without worrying about the distribution of $Z$. In fact, the algorithm is even simpler, since the range of $Z$ is zero-centered thus easier to find. The following is our main result on universal variance estimation:

\begin{theorem}[Theorem~\ref{th:err_var}, informal]
\label{th:err_var_intro}
There exists an $\varepsilon$-DP mechanism such that for any $\mathcal{P}$, given $D\sim \mathcal{P}^n$, if $n>\Omega\left(\frac{1}{\varepsilon}\log\log{1 \over\varphi(1/16)}+\frac{1}{\varepsilon}\log\log\left(\mathrm{IQR}\right) \right)$,
then it returns a $\tilde{\sigma}^2$ such that
\begin{align*}
|\sigma^2-\tilde{\sigma}^2|=&O\left(\min_{\xi\geq 5\cdot \gamma(\varepsilon n)^2+2\sigma^2}\left(\left|\E\left[Z>2\sigma^2+\xi\right]\right|+\frac{\xi}{\varepsilon n} \log \log{\gamma(\varepsilon n) \over \varphi(1/16)}\right) + \sqrt{\frac{\mu_4}{n}}\right).
\end{align*}
\end{theorem}

Going through similar exercises, we obtain simplified results in terms of the sample complexity for specific distributions.

\paragraph{Gaussian distributions.} 
For Gaussian distributions, we have $\mu_4 = O(\sigma^4)$, and the simplified result is:
\begin{theorem}[Theorem~\ref{th:err_var_gaussian}]
\label{th:err_var_gaussian_intro}
For any Gaussian $\mathcal{P}$, and any $\alpha>0$, the $\varepsilon$-DP mechanism from Theorem \ref{th:err_var_intro} takes 
\begin{equation}
\label{eq:err_var_gaussian_intro}
n=\tilde{O}\left( {1\over \varepsilon} \max\left\{ \log\log\sigma, \log\log{1\over\sigma}\right\} + \frac{\sigma^4}{\alpha^2}+\frac{\sigma^2}{\varepsilon \alpha}\right)
\end{equation}
samples and returns a $\tilde{\sigma}^2$ such that $|\tilde{\sigma}^2-\sigma^2|\le\alpha$.
\end{theorem}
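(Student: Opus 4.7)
The plan is to specialize the general bound in Theorem~\ref{th:err_var_intro} to the Gaussian case by estimating each Gaussian-specific quantity that appears in it, and then choosing $\xi$ so as to balance the bias-against-noise trade-off.

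First I would collect the Gaussian-specific facts. For $\mathcal{P}=\mathcal{N}(\mu,\sigma^2)$ we have $\mathrm{IQR}=\Theta(\sigma)$ and $\varphi(1/16)=\Theta(\sigma)$ (the pdf is bounded by $1/(\sigma\sqrt{2\pi})$, so an interval of mass $1/16$ must have width $\Omega(\sigma)$; conversely, an interval of width $O(\sigma)$ around $\mu$ carries more than $1/16$ mass). Hence $\log\log\mathrm{IQR}+\log\log(1/\varphi(1/16))=O(\max\{\log\log\sigma,\log\log(1/\sigma)\})$, matching the first term of the sample complexity and verifying the prerequisite of Theorem~\ref{th:err_var_intro}. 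For the sampling-error term, $\mu_4=3\sigma^4$, so $\sqrt{\mu_4\log(1/\beta)/n}=\tilde{O}(\sigma^2/\sqrt{n})$, which is at most $\alpha$ once $n=\tilde{\Omega}(\sigma^4/\alpha^2)$.

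Next I would control $\gamma(\varepsilon n,\beta/7)$. A standard Gaussian maximum bound plus a union bound over $\varepsilon n$ samples gives $\gamma(\varepsilon n,\beta/7)=O(\sigma\sqrt{\log(\varepsilon n/\beta)})$, so the admissibility constraint becomes $\xi\ge 5\gamma^2+2\sigma^2=O(\sigma^2\log(\varepsilon n/\beta))$. I would therefore choose $\xi=C\sigma^2\log(\varepsilon n/\beta)$ for a sufficiently large constant $C$. With this choice, the DP noise term in Theorem~\ref{th:err_var_intro} is
\[
\frac{\xi}{\varepsilon n}\,\log\!\Bigl(\tfrac{1}{\beta}\log\tfrac{\gamma(\varepsilon n,\beta/7)}{\varphi(1/16)}\Bigr)
=\tilde{O}\!\left(\frac{\sigma^2}{\varepsilon n}\right),
\]
which is at most $\alpha$ once $n=\tilde{\Omega}(\sigma^2/(\varepsilon\alpha))$, producing the third term.

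It remains to argue that the bias term $|\E[Z>2\sigma^2+\xi]|$ is absorbed. Here the main step is a sub-exponential tail bound: since $X-X'\sim\mathcal{N}(0,2\sigma^2)$, the variable $Z=(X-X')^2$ satisfies $\Pr[Z>2\sigma^2+t]\le 2\exp(-t/(4\sigma^2))$ for $t\ge 0$, whence
\[
\bigl|\E[Z\mathbf{1}\{Z>2\sigma^2+\xi\}]\bigr|
\;\le\;\int_{2\sigma^2+\xi}^{\infty} z\cdot 2e^{-(z-2\sigma^2)/(4\sigma^2)}\,dz
\;=\;O(\sigma^2+\xi)\,e^{-\xi/(4\sigma^2)}.
\]
Plugging in $\xi=C\sigma^2\log(\varepsilon n/\beta)$, this bias shrinks polynomially in $\varepsilon n$, so for $C$ large enough it is dominated by $\sigma^2/(\varepsilon n)$ and hence by the DP noise term already counted.

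Combining the three requirements on $n$ (the prerequisite, the sampling-error requirement, and the DP-noise requirement) yields the claimed sample complexity. The only subtle step is the last one—verifying that the truncation bias induced by clipping $Z$ at $2\sigma^2+\xi$ is subsumed by the other errors—because this requires knowing that $Z$ is sub-exponential even though it is a square of a Gaussian (so not sub-Gaussian itself); otherwise the calculation is mechanical substitution into Theorem~\ref{th:err_var_intro}.
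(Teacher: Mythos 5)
Your proposal is correct and follows essentially the same route as the paper: specialize Theorem~\ref{th:err_var_intro} using $\varphi(1/16)=\Theta(\sigma)$, $\mathrm{IQR}=\Theta(\sigma)$, $\mu_4=3\sigma^4$, $\gamma(\varepsilon n,\beta/7)=O(\sigma\sqrt{\log(\varepsilon n)})$, pick $\xi=\Theta(\sigma^2\log(\varepsilon n))$, and show the truncation bias $|\E[Z>2\sigma^2+\xi]|$ is negligible. The only difference is cosmetic: the paper bounds the bias by integrating the exact density of $W=X-X'\sim\mathcal{N}(0,2\sigma^2)$, while you use a sub-exponential tail bound (strictly you should pass through the layer-cake identity rather than treat the tail bound as a density, but the resulting exponential factor dominates either way), so the conclusion is the same.
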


The last two terms are the same as for Gaussian mean estimation (Theorem \ref{th:err_mean_gaussian_intro}), except that $\sigma$ is replaced by $\sigma^2$.  The first term is more interesting, where we are able to reduce a $\log$ term to a $\log\log$.  This is exactly due to the simplification mentioned above: finding the width of the range enclosing $\E[Z]$ is exponentially easier than finding its location.  Meanwhile, since the error in $\tilde{\sigma}^2$ is relative to $\sigma^2$ itself (in contrast, the error in $\tilde{\mu}$ is relative to $\sigma$), we have to prepare for the case where $\sigma$ is very small, hence the $\log\log{1\over \sigma}$ term in \eqref{eq:err_var_gaussian_intro}.

There are two existing Gaussian variance estimators that do not dominate each other.  \cite{karwa2018finite} under A1/A2 achieve a sample complexity of
\begin{equation}
    \label{eq:err_var_gaussian_prior1}
    n = \tilde{O}\left({1\over \varepsilon} \log{R \over \sigma_{\min}} + \frac{1}{\varepsilon}\log\log{\sigma_{\max} \over \sigma_{\min}} + {\sigma^4 \over \alpha^2} + {\sigma^4 \over \varepsilon \alpha^2}\right),
\end{equation}
while \cite{KamathLSU19,biswas2020coinpress} under A2 achieve sample complexity
\begin{equation}
    \label{eq:err_var_gaussian_prior2}
n=\tilde{O}\left(\frac{1}{\varepsilon}\log {\sigma_{\max} \over \sigma_{\min}}+\frac{\sigma^4}{\alpha^2}+\frac{\sigma^2}{\varepsilon \alpha}\right).
\end{equation}

These two results are incomparable: \eqref{eq:err_var_gaussian_prior2} has an (almost) quadratically better privacy term (the last term) than \eqref{eq:err_var_gaussian_prior1}, but its dependency on ${\sigma_{\max} \over \sigma_{\min}}$ is exponentially worse.  On the other hand, \eqref{eq:err_var_gaussian_intro} is better than both, unless A2 already gives a tight range for $\sigma$.  In fact, if we are also given $\sigma_{\min}$, we can scale the data by ${1 \over \sigma_{\min}}$, and \eqref{eq:err_var_gaussian_intro} would further simplify to $n=\tilde{O}\left( {1\over \varepsilon} \log\log{\sigma \over \sigma_{\min}}+ \frac{\sigma^4}{\alpha^2}+\frac{\sigma^2}{\varepsilon \alpha}\right)$,
which is always better than both \eqref{eq:err_var_gaussian_prior1} and \eqref{eq:err_var_gaussian_prior2}.

\paragraph{Heavy-tailed distributions.}
Theorem \ref{th:err_var_intro} can be simplified into the following bound in terms of the central moments:
\begin{theorem}[Theorem~\ref{th:err_var_heavy}]
\label{th:err_var_heavy_intro}
For any $\mathcal{P}$, and any $\alpha>0$, the $\varepsilon$-DP mechanism in Theorem \ref{th:err_var_intro} takes $n=\tilde{O}\left(\frac{\mu_4}{\alpha^2}+ \inf_{k\ge 4}\frac{\mu_k^{2/(k-2)}}{\varepsilon \alpha^{k/(k-2)}}\right)$ samples and returns a $\tilde{\sigma}^2$ such that $|\tilde{\sigma}^2-\sigma^2|\le \alpha$. 
\end{theorem}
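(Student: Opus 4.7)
}
The plan is to apply Theorem~\ref{th:err_var_intro} with a carefully chosen $\xi$ in the $\min_\xi$, then translate the resulting error bound into a sample-complexity statement. Throughout, set $\beta$ to a small constant and treat polylogarithmic factors in $n, 1/\alpha, \log\mathrm{IQR}, \log(1/\varphi(1/16))$ as absorbed by the $\tilde{O}(\cdot)$ notation.

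First, I would handle the sampling error $\sqrt{\mu_4\log(1/\beta)/n}$ in Theorem~\ref{th:err_var_intro}. Requiring this to be at most $\alpha/2$ gives the condition $n = \Omega(\mu_4/\alpha^2)$, which accounts for the first summand in the claimed sample complexity. Next, I would fix an arbitrary $k\ge 4$ (the mechanism is oblivious to $k$, so we can take the infimum at the end), and choose
\[\xi = C\cdot \left(\mu_k/\alpha\right)^{2/(k-2)}\]
for a sufficiently large constant $C$. The key step is to bound the bias-like term $\bigl|\E[Z>2\sigma^2+\xi]\bigr|$, where $Z=(X-X')^2$. Using the elementary inequality $\E[|X-X'|^k]\le 2^k\mu_k$ (from the triangle inequality applied to $X-X'=(X-\mu)-(X'-\mu)$ and the $c_r$-inequality), we have $\E[Z^{k/2}]\le 2^k\mu_k$. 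Then a truncation argument
\[\E\!\left[Z\cdot\mathbf{1}[Z>T]\right] \le \E\!\left[Z\cdot(Z/T)^{k/2-1}\right]\le \frac{\E[Z^{k/2}]}{T^{k/2-1}}\le \frac{2^k\mu_k}{T^{k/2-1}}\]
with $T=2\sigma^2+\xi\ge\xi$ shows that this contribution is $O(\alpha)$ for our choice of $\xi$.

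Second, I would bound the privacy-noise term $\frac{\xi}{\varepsilon n}\log\!\left(\frac{1}{\beta}\log\frac{\gamma(\varepsilon n,\beta/7)}{\varphi(1/16)}\right)$ by $\alpha$. Since the $\log(\log(\cdots))$ factor is absorbed by $\tilde{O}$, this simplifies to $\xi/(\varepsilon n)=O(\alpha)$, i.e., $n=\tilde{\Omega}\bigl(\xi/(\varepsilon\alpha)\bigr)=\tilde{\Omega}\bigl(\mu_k^{2/(k-2)}/(\varepsilon\alpha^{k/(k-2)})\bigr)$, which is the second summand. Taking the infimum over $k\ge 4$ yields the stated bound.

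Finally, I would verify that the chosen $\xi$ satisfies the admissibility constraint $\xi\ge 5\,\gamma(\varepsilon n,\beta/7)^2+2\sigma^2$ in the $\min_\xi$ of Theorem~\ref{th:err_var_intro}. For $2\sigma^2$: Cauchy--Schwarz gives $\sigma^2\le\mu_4^{1/2}$, and since $n\ge \mu_4/\alpha^2$ we have $\alpha\le\sqrt{\mu_4}$, so $\sigma^2\le\sqrt{\mu_4}=O(\mu_k^{2/k})\le O(\xi)$ (using $\mu_4^{k/4}\le\mu_k$ by Jensen and $k\ge 4$). For the $\gamma^2$ term: a standard moment/tail argument (Markov on $|X-\mu|^k$) shows $\gamma(\varepsilon n,\beta/7)=\tilde{O}\bigl((\varepsilon n\,\mu_k)^{1/k}\bigr)$, so $\gamma^2=\tilde{O}\bigl((\varepsilon n\,\mu_k)^{2/k}\bigr)$; plugging in the sample-complexity lower bound on $n$ and doing the arithmetic shows $\gamma^2=\tilde{O}(\xi)$, which is absorbed by making the constant $C$ in $\xi$ large enough.

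The main obstacle I expect is this final verification of the admissibility constraint: the constraint $\xi\ge 5\gamma(\varepsilon n,\beta/7)^2$ couples $\xi$ to $n$, which in turn depends on $\xi$ through the privacy term, and closing this loop cleanly for arbitrary $k\ge 4$ requires the moment-tail bound on $\gamma$ and a careful check that the polylogarithmic slack absorbed in $\tilde{O}$ is enough. The bias bound via truncated moments and the accounting of the sampling term are routine by comparison.
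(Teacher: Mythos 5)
Your overall route is the same as the paper's: instantiate Theorem~\ref{th:err_var} at a single admissible $\xi$, bound the tail term via a moment bound on $Z$ (your $\E[Z^{k/2}]\le 2^k\mu_k$ is the paper's Lemma~\ref{lm:connection_central_moment}, and your truncation step plays the role of Lemma~\ref{lm:obs_k_moment}), bound $\gamma(\varepsilon n,\beta/7)$ by $O\left((\varepsilon n\,\mu_k/\beta)^{1/k}\right)$ as in Lemma~\ref{lm:obs_m_range}, and convert the error bound into a sample complexity. The substantive difference is your fixed choice $\xi = C(\mu_k/\alpha)^{2/(k-2)}$, and this is where the obstacle you flagged is a genuine gap rather than a technicality. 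The admissibility constraint $\xi\ge 5\gamma(\varepsilon n,\beta/7)^2+2\sigma^2$ requires, in the worst case where $\gamma(\varepsilon n,\beta/7)=\Theta\left((\varepsilon n\,\mu_k)^{1/k}\right)$, that $\varepsilon n \lesssim \mu_k^{2/(k-2)}\alpha^{-k/(k-2)}$ --- an \emph{upper} bound on $n$. But the theorem must hold at $n=\tilde{\Theta}\left(\mu_4/\alpha^2+\mu_k^{2/(k-2)}/(\varepsilon\alpha^{k/(k-2)})\right)$, and for $k>4$ and small $\alpha$ the term $\mu_4/\alpha^2$ dominates, so $\varepsilon n$ exceeds $\mu_k^{2/(k-2)}\alpha^{-k/(k-2)}$ by a polynomial (not polylogarithmic) factor and your $\xi$ becomes inadmissible; the same failure occurs whenever more samples are available than the privacy term requires. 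In addition, your justification of the $2\sigma^2$ part (``since $n\ge\mu_4/\alpha^2$ we have $\alpha\le\sqrt{\mu_4}$'') is a non sequitur: a lower bound on $n$ says nothing about $\alpha$ versus $\sqrt{\mu_4}$, so the large-$\alpha$ case needs a separate argument under your parameterization.

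The repair is exactly the paper's choice: take $\xi$ adaptive to $n$, namely $\xi=c(\varepsilon n\,\mu_k)^{2/k}$ (equivalently $\xi=\max\left\{C(\mu_k/\alpha)^{2/(k-2)},\,5\gamma(\varepsilon n,\beta/7)^2+2\sigma^2\right\}$). Then admissibility is automatic from Lemma~\ref{lm:obs_m_range} together with $\sigma^2\le\mu_k^{2/k}\le(\varepsilon n\,\mu_k)^{2/k}$, your truncation bound still gives a tail term $O\left(\mu_k^{2/k}/(\varepsilon n)^{1-2/k}\right)$, and the privacy term is $\tilde{O}\left((\varepsilon n\,\mu_k)^{2/k}/(\varepsilon n)\right)=\tilde{O}\left(\mu_k^{2/k}/(\varepsilon n)^{1-2/k}\right)$, which is decreasing in $n$. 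Hence the error bound holds for every $n$ above the polylogarithmic threshold and translates directly into the claimed sample complexity; your $\xi$ and the paper's coincide precisely at the critical sample size, and your moment and sampling-error accounting is otherwise correct.
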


This is the first private variance estimator for heavy-tailed distributions.

\subsubsection{IQR Estimation}
\label{sec:our_results_scale}

Our IQR estimator is very simple: Discretize $\mathbb{R}$ using an appropriate bucket size return $\tilde{X}_{3n/4} - \tilde{X}_{n/4}$ using Theorem \ref{th:err_infinite_quantile_intro}.  We show that it achieves the following sample complexity:

\begin{theorem}[Theorem \ref{th:err_scale}]
\label{th:IQR_intro}
There exists an $\varepsilon$-DP mechanism such that for any $\mathcal{P}$ and any $\alpha>0$, it takes
\begin{equation}\label{eq:IQR_intro}
n=\tilde{O}\left(\frac{1}{\varepsilon}\log{|\mu|+\sigma + \gamma(n) \over \varphi(1/16)}+\frac{1}{\varepsilon\alpha\cdot \theta(\alpha/4)}\log {\gamma(n) \over \varphi(1/16)}+\frac{1}{(\alpha\cdot\theta(\alpha/4))^2}+\frac{\mathrm{IQR}}{\alpha}\right)
\end{equation}
samples and returns an $\widetilde{\mathrm{IQR}}$ such that $|\widetilde{\mathrm{IQR}} - \mathrm{IQR}| \le \alpha$. 
\end{theorem}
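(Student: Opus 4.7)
The plan is to execute the two-step algorithm described right after the theorem. First, invoke the budget mechanism of Theorem~\ref{th:err_budget_intro} with privacy $\varepsilon/3$ to obtain $\overline{\mathrm{IQR}}$ satisfying $\tfrac14\varphi(1/16)\le\overline{\mathrm{IQR}}\le\mathrm{IQR}$. Set the bucket width $b = \overline{\mathrm{IQR}}/n$ and form the bucketed dataset $D_b = \{\lfloor X_i/b\rfloor\}_{i=1}^n$ over $\mathbb{Z}$. Then invoke the infinite-domain DP quantile estimator of Theorem~\ref{th:err_infinite_quantile_intro} on $D_b$ twice, with privacy $\varepsilon/3$ each, at ranks $\lfloor n/4\rfloor$ and $\lceil 3n/4\rceil$, and release $\widetilde{\mathrm{IQR}} = b\cdot(\tilde X_{3n/4}-\tilde X_{n/4})$. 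Privacy is immediate by basic composition, so the substance lies in the utility analysis.

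I would decompose the total error into three independent pieces. (i) \emph{Discretization error}: bucketing distorts each of the two quartile values by at most $b\le\mathrm{IQR}/n$, contributing at most $2b$ to $|\widetilde{\mathrm{IQR}}-\mathrm{IQR}|$; capping this by $\alpha/4$ yields the term $\mathrm{IQR}/\alpha$ in \eqref{eq:IQR_intro}. (ii) \emph{DP rank error}: Theorem~\ref{th:err_infinite_quantile_intro} gives rank error $t=O(\tfrac{1}{\varepsilon}\log(\gamma(D_b)/\beta))$. A rank displacement of $t$ around a population quartile translates, via the local-mass quantity $\theta$, into a value displacement of at most $\alpha/4$ provided $t/n \lesssim \alpha\cdot\theta(\alpha/4)$; solving for $n$ produces the second term of \eqref{eq:IQR_intro} once $\log\gamma(D_b)$ is re-expressed in the right variables (see below). (iii) \emph{Sampling error at the empirical quartile}: a Chernoff bound on the number of samples falling in a window of width $\alpha/4$ around each population quartile yields $|X_\tau - F^{-1}(\tau/n)|\le\alpha/4$ with constant probability once $n\gtrsim 1/(\alpha\cdot\theta(\alpha/4))^2$, which is the third term.

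It remains to control $\gamma(D_b)$ and $\mathrm{rad}(D_b)$ so as to verify the sample-size precondition of Theorem~\ref{th:err_infinite_quantile_intro} and to obtain the first term of \eqref{eq:IQR_intro}. A Chebyshev/tail argument gives that with probability $\ge 1-\beta/18$ every $X_i$ lies within $|\mu|+O(\sigma)+O(\gamma(n,1/18))$ of the origin, so $\mathrm{rad}(D)\le |\mu|+\sigma+O(\gamma(n,1/18))$. Since $b\ge \varphi(1/16)/(4n)$, this gives
\[
\mathrm{rad}(D_b),\ \gamma(D_b)\ \le\ \tilde O\!\left(\frac{|\mu|+\sigma+\gamma(n,1/18)}{\varphi(1/16)}\right),
\]
so $\tfrac{1}{\varepsilon}\log(\mathrm{rad}(D_b)/\beta)$ becomes the first term of \eqref{eq:IQR_intro} (absorbing the extra $\log n$ into $\tilde O$), and $\log(\gamma(D_b)/\beta)$ in the rank-error expression in (ii) turns into $\log(\gamma(n,1/18)/\varphi(1/16))$, giving the stated second term after substitution. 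The sample-size requirement of Theorem~\ref{th:err_budget_intro} itself is dominated by the first term of \eqref{eq:IQR_intro}, so no new term arises from step one.

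The main obstacle is the bookkeeping in the last paragraph: $\gamma(D_b)$ has to be bounded using only quantities that appear on the right-hand side of \eqref{eq:IQR_intro} (in particular the high-probability width $\gamma(n,1/18)$ and the lower bound $\varphi(1/16)$ on $\overline{\mathrm{IQR}}$), and this bound must be tight enough to keep the multiplicative blow-up from bucketing inside a single logarithm. Everything else is a standard combination of Chernoff-style concentration for the empirical quartile, the rank-to-value conversion through $\theta(\alpha/4)$, and the union bound over the three failure events (radius tail, budget step, and the two quantile calls), each absorbed with a constant fraction of $\beta$.
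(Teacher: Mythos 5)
Your proposal is correct and follows essentially the same route as the paper: the same algorithm (bucket size $\overline{\mathrm{IQR}}/n$ from the budget step, two infinite-domain quantile calls, basic composition), the same bounds on $\mathrm{rad}(D)$ and $\gamma(D)$ to verify the quantile theorem's precondition and produce the first term, and the same three error sources. The only cosmetic difference is that you separate the rank-to-value conversion via $\theta(\alpha/4)$ from the Chernoff concentration of the empirical quartiles, whereas the paper packages both into a single lemma (Lemma~\ref{lm:obs_quantile_data_and_distribution}) whose sample-size condition is exactly the sum of your terms (ii) and (iii).
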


Here, $\theta(\alpha)$ is the average value of $f(x)$ in an interval of width $\alpha$ near $F^{-1}(1/4)$ and $F^{-1}(3/4)$ (formal definition given in Section~\ref{sec:scale}).
The previous IQR estimator \cite{dwork2009differential} only satisfies $(\varepsilon,\delta)$-DP. Their sample complexity is\footnote{\cite{dwork2009differential} defines $\theta(\cdot)$ as the minimum value of $f(x)$ in a small interval near $F^{-1}(1/4)$ and $F^{-1}(3/4)$, but their proof  still works even if it is defined as the average value, which makes the result stronger.}
\begin{equation}
\label{eq:IQR_prior}
n = \tilde{O}\left(\frac{1}{\left(\theta(2 n^{-1/3})\right)^6}+\frac{1}{\mathrm{IQR}^3}+\frac{1}{\alpha^3} + \exp\left(\frac{\mathrm{IQR}}{\varepsilon \alpha}\right)\right).
\end{equation}

To simplify the comparison between \eqref{eq:IQR_intro} and \eqref{eq:IQR_prior}, we consider a well-behaved $\mathcal{P}$ where $\theta(\alpha) = \Omega(1/\mathrm{IQR})$ (e.g., for  Gaussians, we have $\theta(\alpha)=\Theta(1/\mathrm{IQR})=\Theta(1/\sigma)$ for all $\alpha \le \mathrm{IQR}$) and ignore the logarithmic terms. Then \eqref{eq:IQR_intro} simplifies to $\tilde{O}\left({\mathrm{IQR} \over \varepsilon \alpha} + {\mathrm{IQR}^2 \over \alpha^2}\right)$ while \eqref{eq:IQR_prior} becomes  $\tilde{O}\left(\frac{1}{\mathrm{IQR}^3} + \mathrm{IQR}^6+\frac{1}{\alpha^3}\right.$ $\left.+\exp\left(\frac{\mathrm{IQR}}{\varepsilon \alpha}\right)\right)$.  
Note that their sampling error $\mathrm{IQR}^6 + {1\over \alpha^2} \ge (\mathrm{IQR}^6)^{1/3} \cdot ({1\over \alpha^3})^{2/3} = {\mathrm{IQR}^2 \over \alpha^2}$, while their privacy term $\exp\left(\frac{\mathrm{IQR}}{\varepsilon \alpha}\right)$ is exponentially worse than ours.
In particular, we get the right convergence rate $\alpha \propto {1/ (\varepsilon n)}$ for the privacy noise, which agrees with that for $\mu$ and $\sigma^2$.  On the other hand, their rate is $\alpha \propto 1/(\varepsilon \log n)$.

\subsection{Other Related Work}
\label{sec:related_work}

Many works on mean estimators extend to higher dimensions, such as \cite{KamathLSU19,cai2019cost,bun2019private,biswas2020coinpress,kamath2020private,huang2021instance,aden2021sample,kamath2021private,hopkins2021efficient,brown2021covariance,liu2021differential,ashtiani2021private,kothari2021private}. Using the idea of \cite{huang2021instance} but replacing Gaussian mechanism with Laplace mechanism, we can extend our pure-DP estimator to the multivariate case. However, it does not get the optimal privacy term $\tilde{O}(d/(\varepsilon n))$.
In fact, the problem is open even under A1/A2/A3 (assuming multivariate Gaussians for A3). \cite{kamath2020private} achieve the optimal $\tilde{O}(d/(\varepsilon n))$ but their algorithm runs in exponential time; the mechanism in \cite{hopkins2021efficient} runs in polynomial time but its privacy error is $\tilde{O}(\sqrt{d/(\varepsilon n)})$. Besides, \cite{liu2021robust} propose a solution for robust mean estimation under differential privacy.  The mean estimation problem has also been studied in the \textit{local model} of DP~\cite{duchi2013local,duchi2018minimax,gaboardi2019locally,duchi2019lower,joseph2019locally}, which is also an interesting direction to look at.

Covariance estimation in high dimensions has also received a lot of attention. \cite{KamathLSU19,bun2019private,biswas2020coinpress} consider multivariate Gaussian distributions and make similar boundedness assumptions like A1/A2. \cite{aden2021sample,kamath2021private,liu2021differential,ashtiani2021private,kothari2021private} do not need such assumptions but they relax the privacy notion to approximate DP. \cite{dong2022differentially,amin2019differentially} study the covariance for the data with bounded norms, which is even stronger than A1/A2. \cite{chaudhuri2013near,dwork2014analyze,sheffet2017differentially,upadhyay2018price} study private PCA or OLS, which can also be used to estimate covariance. However, they also assume that the data have bounded norms.

In the empirical setting, worst-case optimality does not make sense for functions whose global sensitivity is very large or $\infty$, which is the case for the empirical mean $\mu(D)$ where $D$ is drawn from an unbounded domain.  Instance-optimality is thus more suitable, but as pointed out by \cite{asi2020instance}, strict instance-optimality is not possible, who therefore propose a natural relaxation by considering a small neighborhood.  Nevertheless, for functions like $\mu(D)$, the neighborhood has to be restricted to avoid degeneration into worst-case optimality \cite{huang2021instance}, as we explain in Section \ref{sec:opt}.  Besides, as mentioned in Section~\ref{sec:our_results_empirical}, our empirical estimator can be used to answer self-join-free aggregation queries in a relational database. Answering aggregation queries has also been extensively studied in database community~\cite{mcsherry2009privacy,narayan2012djoin,Palamidessi2012DifferentialPF,proserpio2014calibrating,arapinis2016sensitivity,johnson2018towards,kotsogiannis2019privatesql,tao2020computing,dong21:residual,dong2021nearly,dong2022r2t}. For more details, please see~\cite{dong2022r2t}.

\subsection{Open Problems}
\label{sec:limitations}
The first open problem, obviously, is to extend our result to high dimensions. As mentioned in Section~\ref{sec:related_work}, the challenge here is to achieve the optimal dependency on $d$.
Another interesting direction is that, since the utility guarantees of our estimators depend on the parameters of $\mathcal{P}$ to be estimated, we cannot output confidence intervals. One possible solution is to derive privatized upper bounds of these parameters, but it may be challenging to make these upper bounds as tight as possible.

\subsection{Organization}
\label{sec:org}

The paper is organized as follows. In Section~\ref{sec:preliminary}, we define certain concepts introduced above more formally, as well as some building blocks for our algorithm. In Section~\ref{sec:empirically_setting}, we present our estimators in the empirical setting. In Section~\ref{sec:mean},~\ref{sec:var} and~\ref{sec:scale}, we describe our universal estimators for mean, variance, and IQR respectively in the statistical setting.

\section{Preliminaries}
\label{sec:preliminary}

\subsection{Notation}
\label{sec:notation}

Given a multiset $D = \{X_1, \dots, X_n\} \in \mathbb{R}^n$ (we reorder $D$ such that $X_1\le \cdots \le X_n$), we introduce the following notation: Its \textit{support} is $\mathrm{supp}(D)$, \textit{range} is $\mathcal{R}(D) = [X_1, X_n]$, \textit{width} is $\gamma(D) = X_{n}-X_{1}$, and \textit{radius} is $\mathrm{rad}(D) = \max_i |X_i|$. It is clear that $\mathcal{R}(D)\subseteq [-\mathrm{rad}(D),\mathrm{rad}(D)]$, hence $\gamma(D)\le 2 \cdot \mathrm{rad}(D)$, but $\mathrm{rad}(D)$ can be arbitrary larger than $\gamma(D)$. For any $\mathcal{S}\subseteq \mathbb{R}$, let $\left|D\cap \mathcal{S}\right| =\left|\{1\le i\le n \mid X_i\in D\cap \mathcal{S}\}\right|$.

Given a continuous probability distribution $\mathcal{P}$ over $\mathbb{R}$, in addition to $\mu$, $\sigma^2$,  $\mathrm{IQR}$ defined in Section \ref{sec:intro}, we also need the following quantities: For any $k\geq 2$, the $\textit{$k$th-central moment}$ is $\mu_k = \E_{X\sim P}[|X-\mu|^{k}]$.  In particular, $\mu_2 = \sigma^2$. For any $\beta\in(0,1)$, the width of the \textit{highest density region} at level $\beta$ is
\[\varphi(\beta) = \inf\left\{a_2-a_1\left|a_1,a_2\in \mathbb{R},a_2>a_1,\int_{a_1}^{a_2}f(x)\,\mathrm{d}x \geq \beta\right.\right\}.\]
We will only need $\varphi(\beta)$ for some constant $\beta$.  Note that $\varphi(1/2) \le \mathrm{IQR} \le 4\sigma$ (the first inequality is by definition and the second is by Chebyshev's inequality).  For most $\mathcal{P}$, the three quantities are close (e.g., for a Gaussian $\mathcal{P}$, the three are all within a constant factor from each other), although the gap can be arbitrarily large for an ill-behaved $\mathcal{P}$.

For any $m\in \mathbb{N}$ and $\beta\in(0,1)$, define the \textit{$(m,\beta)$-statistical width} of $\mathcal{P}$ as
\[\gamma(m,\beta) = \inf \left\{ \lambda \in \mathbb{R} \left| \Pr_{D\sim \mathcal{P}^m}\left[ \gamma(D) \ge \lambda\right] \leq \beta \right.\right\}.\]
Note that
\[\gamma\left(2,\frac{3}{4}\right)\leq \mathrm{IQR}\leq \gamma\left(\log_{\frac{4}{3}}(2/\beta),\beta\right).\]
The first inequality is because for $X\sim \mathcal{P}$, with probability $\frac{1}{2}$, $X\in[F^{-1}(1/4),F^{-1}(3/4)]$; the second inequality follows from the fact that $X\in [-\infty,F^{-1}(1/4)]$ and $X\in[F^{-1}(3/4),\infty]$ each happens with probability $\frac{1}{4}$, plus a union bound.

For $X\in\mathcal{P}$ and any $x\in\mathbb{R}$, define
\[\E[X\lessgtr x]:= \E_{X\sim \mathcal{P}}\left[(X-x)\mathbb{I}(X\lessgtr x)\right].\]

Finally, we introduce the following shorthand: For any $a,b\in \mathbb{R}$, let $[a\pm b] := [a-b,a+b]$. For interval $[l,r]$ and $b\in \mathbb{R}$, let $[l,r]\pm b := [l-b,r+b]$.  Define $[N] := \{0,1,\dots, N\}$.

\subsection{Differential Privacy}
\label{sec:dp}

The DP definition has already been introduced in Section~\ref{sec:intro}. The following two properties of DP are well-known:

\begin{lemma} [Post Processing \cite{dwork2006calibrating}]
\label{lm:post_processing_dp}
If $\mathcal{M}:\mathcal{X}^n\rightarrow \mathcal{Y}$ satisfies $\varepsilon$-DP and $\mathcal{M}':\mathcal{Y}\rightarrow \mathcal{Z}$ is any randomized mechanism, then $\mathcal{M}'(\mathcal{M}(D))$ satisfies $\varepsilon$-DP.
\end{lemma}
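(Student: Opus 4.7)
The plan is to reduce the post-processing claim to the DP guarantee of $\mathcal{M}$ on a carefully chosen measurable set in $\mathcal{Y}$. The key observation is that any randomized mechanism $\mathcal{M}'$ can be written as a deterministic function of its input together with an independent source of randomness, i.e., $\mathcal{M}'(y) = h(y, R)$ where $R$ is sampled independently of $\mathcal{M}$ and $h$ is deterministic. Conditioning on the value of $R$ thus turns $\mathcal{M}'$ into a deterministic post-processor, which is the easy case and the whole point of the reduction.

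First I would fix arbitrary neighboring inputs $D \sim D'$ and an arbitrary (measurable) $\mathcal{S} \subseteq \mathcal{Z}$, so that the goal becomes bounding $\Pr[\mathcal{M}'(\mathcal{M}(D)) \in \mathcal{S}]$ by $e^\varepsilon \Pr[\mathcal{M}'(\mathcal{M}(D')) \in \mathcal{S}]$. Using the decomposition above, for each realization $r$ of $R$ define the preimage set $\mathcal{T}_r = \{y \in \mathcal{Y} : h(y, r) \in \mathcal{S}\} \subseteq \mathcal{Y}$. Since $R$ is independent of $\mathcal{M}(D)$ and $\mathcal{M}(D')$, conditioning on $R = r$ gives $\Pr[\mathcal{M}'(\mathcal{M}(D)) \in \mathcal{S} \mid R = r] = \Pr[\mathcal{M}(D) \in \mathcal{T}_r]$, and analogously for $D'$.

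Next I apply the $\varepsilon$-DP guarantee of $\mathcal{M}$ to the set $\mathcal{T}_r$, which yields $\Pr[\mathcal{M}(D) \in \mathcal{T}_r] \leq e^\varepsilon \Pr[\mathcal{M}(D') \in \mathcal{T}_r]$ for every $r$. Taking expectation over $R$ on both sides preserves the inequality and, by the tower property, recovers $\Pr[\mathcal{M}'(\mathcal{M}(D)) \in \mathcal{S}] \leq e^\varepsilon \Pr[\mathcal{M}'(\mathcal{M}(D')) \in \mathcal{S}]$, which is exactly the $\varepsilon$-DP condition for the composed mechanism. Since $D, D', \mathcal{S}$ were arbitrary, this completes the argument.

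The only real obstacle is the measure-theoretic bookkeeping: one must justify that $\mathcal{T}_r$ is measurable in $\mathcal{Y}$ for each $r$ and that the map $r \mapsto \Pr[\mathcal{M}(D) \in \mathcal{T}_r]$ is measurable so that Fubini/Tonelli can be invoked when integrating over $R$. This is standard under the usual Borel structure on $\mathcal{Y}$ and $\mathcal{Z}$, and for the discrete setting considered elsewhere in the paper the argument collapses to a straightforward $\sum_y \Pr[\mathcal{M}(D)=y]\Pr[\mathcal{M}'(y)\in\mathcal{S}]$ calculation with no subtleties.
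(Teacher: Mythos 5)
Your argument is correct: conditioning on the post-processor's independent randomness reduces to the deterministic case, where applying the $\varepsilon$-DP guarantee of $\mathcal{M}$ to the preimage set $\mathcal{T}_r$ and averaging over $r$ gives exactly the required inequality (with $\delta=0$ there is no accumulation issue, and the measurability remarks are the standard ones). The paper does not prove this lemma at all---it cites it as a known property of DP---and your proof is the canonical argument found in the cited literature, so there is nothing to reconcile.
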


\begin{lemma} [Basic Composition \cite{dwork2006calibrating}]
\label{lm:basic_composition_dp}
If $\mathcal{M}_1:\mathcal{X}^n\rightarrow \mathcal{Y}$ satisfies $\varepsilon_1$-DP and $\mathcal{M}_2:\mathcal{X}^n\times \mathcal{Y}\rightarrow \mathcal{Z}$ satisfies $\varepsilon_2$-DP, then $\mathcal{M}_2(D,\mathcal{M}_1(D))$ satisfies $(\varepsilon_1 + \varepsilon_2)$-DP.
\end{lemma}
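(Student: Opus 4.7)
The plan is to prove this by a direct unfolding of the DP inequality \eqref{eq:DP}, marginalizing over the intermediate output $y = \mathcal{M}_1(D)$. Fix any two neighboring inputs $D \sim D'$ and any measurable $\mathcal{S} \subseteq \mathcal{Z}$. I would begin by conditioning on the value of $\mathcal{M}_1(D)$ and writing
\[
\Pr[\mathcal{M}_2(D, \mathcal{M}_1(D)) \in \mathcal{S}] = \int_{y \in \mathcal{Y}} \Pr[\mathcal{M}_1(D) = y] \cdot \Pr[\mathcal{M}_2(D, y) \in \mathcal{S}] \, \mathrm{d}\nu(y),
\]
where $\nu$ is a common dominating measure on $\mathcal{Y}$ and $\Pr[\mathcal{M}_1(D) = y]$ is understood as the corresponding density. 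This factorization uses the standard convention that $\mathcal{M}_1$ and $\mathcal{M}_2$ draw their internal randomness independently.

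Next I would apply the two DP guarantees pointwise inside the integral. For every fixed $y \in \mathcal{Y}$, the $\varepsilon_1$-DP of $\mathcal{M}_1$ gives $\Pr[\mathcal{M}_1(D) = y] \le e^{\varepsilon_1}\Pr[\mathcal{M}_1(D') = y]$ (almost everywhere), and the $\varepsilon_2$-DP of $\mathcal{M}_2$ applied with second argument fixed to $y$ and event $\{\mathcal{M}_2(\cdot, y) \in \mathcal{S}\}$ gives $\Pr[\mathcal{M}_2(D, y) \in \mathcal{S}] \le e^{\varepsilon_2}\Pr[\mathcal{M}_2(D', y) \in \mathcal{S}]$. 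Substituting both bounds into the integrand pulls a factor of $e^{\varepsilon_1 + \varepsilon_2}$ outside the integral, leaving exactly $\Pr[\mathcal{M}_2(D', \mathcal{M}_1(D')) \in \mathcal{S}]$, which yields the claimed $(\varepsilon_1+\varepsilon_2)$-DP bound.

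The only delicate point I anticipate is the measure-theoretic handling of the pointwise density inequality when $\mathcal{Y}$ is continuous: strictly speaking, \eqref{eq:DP} is a statement about probabilities of sets, not of points, so I would justify the densitywise inequality by working with Radon--Nikodym derivatives with respect to a common dominating measure, an argument that is standard and causes no real obstacle. Everything else reduces to Fubini and monotonicity of the integral, so no further technicalities arise.
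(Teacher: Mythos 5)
Your argument is correct: this lemma is stated in the paper as a known result with a citation to \cite{dwork2006calibrating} and is not proved there, and your proof is the standard one — condition on the intermediate output $y$, use that pure ($\delta=0$) DP for $\mathcal{M}_1$ implies the pointwise density bound $\nu$-almost everywhere via Radon--Nikodym derivatives, apply $\varepsilon_2$-DP of $\mathcal{M}_2(\cdot,y)$ for each fixed $y$, and integrate. No gaps; the only caveat worth keeping in mind is that the densitywise step you flag is exactly where this argument would need modification for $(\varepsilon,\delta)$-DP, but for pure DP as stated it goes through as you describe.
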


For any function $Q$, its local sensitivity at $D$ is
\[\mathrm{LS}_Q(D) = \sup_{D\sim D'}|Q(D)-Q(D')|\]
and the global sensitivity is
\[\mathrm{GS}_Q =\sup_{D}\mathrm{LS}_Q(D).\]

A basic pure DP mechanism is the Laplace mechanism:

\begin{lemma}[Laplace Mechanism]
\label{lm:lap}
The mechanism 
\[\mathcal{M}_Q(D)=Q(D)+\mathrm{Lap}(\mathrm{GS}_Q/\varepsilon)\]
preserves $\varepsilon$-DP, where $\mathrm{Lap}(\mathrm{GS}_Q/\varepsilon)$ is a random variable drawn from the Laplace distribution with scale $\mathrm{GS}_Q/\varepsilon$.
\end{lemma}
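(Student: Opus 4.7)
The plan is to verify the DP inequality in its pointwise (density-ratio) form and then integrate. First, I would write down the density of the output: since $\mathrm{Lap}(b)$ has pdf $\frac{1}{2b}\exp(-|x|/b)$ with $b = \mathrm{GS}_Q/\varepsilon$, the density of $\mathcal{M}_Q(D)$ evaluated at a point $y \in \mathbb{R}$ is
\[
p_D(y) = \frac{\varepsilon}{2\,\mathrm{GS}_Q}\exp\!\left(-\frac{\varepsilon\,|y - Q(D)|}{\mathrm{GS}_Q}\right).
\]
For neighboring $D \sim D'$ I would form the ratio $p_D(y)/p_{D'}(y)$; the normalizing constant cancels and the exponent becomes $\frac{\varepsilon}{\mathrm{GS}_Q}\bigl(|y - Q(D')| - |y - Q(D)|\bigr)$.

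Next, I would bound this ratio by applying the (reverse) triangle inequality: $|y - Q(D')| - |y - Q(D)| \le |Q(D) - Q(D')|$. By the definition of global sensitivity, $|Q(D) - Q(D')| \le \mathrm{GS}_Q$ for all neighboring $D, D'$, so the exponent is at most $\varepsilon$, giving $p_D(y)/p_{D'}(y) \le e^{\varepsilon}$ for every $y$.

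Finally, to obtain the DP inequality for arbitrary measurable $\mathcal{S} \subseteq \mathbb{R}$, I would integrate the pointwise bound: $\Pr[\mathcal{M}_Q(D) \in \mathcal{S}] = \int_{\mathcal{S}} p_D(y)\,\mathrm{d}y \le e^{\varepsilon}\int_{\mathcal{S}} p_{D'}(y)\,\mathrm{d}y = e^{\varepsilon}\Pr[\mathcal{M}_Q(D') \in \mathcal{S}]$, which is exactly $\varepsilon$-DP with $\delta = 0$ as in \eqref{eq:DP}. There is no real obstacle here; the only place requiring care is checking the (reverse) triangle inequality step and confirming that $\mathrm{GS}_Q$ (rather than $\mathrm{LS}_Q$) is the right quantity, since the DP guarantee must hold uniformly over all neighboring pairs, not just a specific one. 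No composition or post-processing is needed, so Lemmas \ref{lm:post_processing_dp} and \ref{lm:basic_composition_dp} do not enter the argument.
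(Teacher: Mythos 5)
Your proof is correct and is the standard argument for the Laplace mechanism (pointwise density ratio bounded via the reverse triangle inequality and $\mathrm{GS}_Q$, then integrating over $\mathcal{S}$), which is exactly what the paper relies on when it states Lemma~\ref{lm:lap} without proof as a classical result. Your remark that $\mathrm{GS}_Q$ rather than $\mathrm{LS}_Q$ is the correct calibration, since the bound must hold uniformly over all neighboring pairs, is the right point of care and is consistent with the paper's discussion in Section~\ref{sec:dp}.
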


Below we omit the subscript $Q$ if the context is clear.

We also need the following result, which shows that privacy can be amplified by sampling.

\begin{theorem}[Sampling Amplification~\cite{balle2018privacy}]
\label{th:sampling_amplification}
Let $\eta\in(0,1)$.  Given an $\varepsilon$-DP mechanism $\mathcal{M}$, define $\mathcal{S}_{\eta}$ as the operation of sampling $\eta n$ samples from $D$ without replacement, then $\mathcal{M}(\mathcal{S}_{\eta}(D))$ preserves $\left(\log(1+\eta(e^{\varepsilon}-1))\right)$-DP.
\end{theorem}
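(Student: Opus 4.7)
The plan is a standard coupling argument on the subsampling randomness. Fix any neighboring datasets $D \sim D'$ that differ at a single position $i^\star$ (say $D$ contains $d$ and $D'$ contains $d'$ there, with the other $n-1$ records identical). Couple the two runs of $\mathcal{S}_{\eta}$ so that they draw the \emph{same} random subset $I \subseteq \{1,\ldots,n\}$ of size $\eta n$. Condition on whether $i^\star \in I$: with probability $1-\eta$ it is not, in which case $\mathcal{S}_{\eta}(D)$ and $\mathcal{S}_{\eta}(D')$ are literally identical, so $\mathcal{M}$ produces the same output distribution; with probability $\eta$ it is, and then the two subsamples differ only in the $d \leftrightarrow d'$ slot, so by $\varepsilon$-DP of $\mathcal{M}$ the resulting output laws are $\varepsilon$-indistinguishable.

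Next, I would package this decomposition into a mixture bound on the overall output. Writing the law of $\mathcal{M}(\mathcal{S}_{\eta}(D))$ as $(1-\eta)\mu_0 + \eta \mu_1$ and that of $\mathcal{M}(\mathcal{S}_{\eta}(D'))$ as $(1-\eta)\mu_0 + \eta \mu_1'$, with $\mu_0$ the common ``differing record excluded'' distribution and $\mu_1,\mu_1'$ satisfying $\mu_1(S) \le e^{\varepsilon}\mu_1'(S)$ pointwise (and symmetrically), the target inequality is
\[
(1-\eta)\mu_0(S) + \eta \mu_1(S) \;\le\; \bigl(1+\eta(e^{\varepsilon}-1)\bigr)\bigl((1-\eta)\mu_0(S) + \eta \mu_1'(S)\bigr)
\]
for every measurable $S$. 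The cleanest way to establish it, following Balle, Barthe and Gaboardi, is via the coupling reformulation of $\varepsilon$-DP: one writes $\mu_1 = (1-\alpha)\mu_1' + \alpha\nu$ for some probability measure $\nu$ and $\alpha$ controlled by $\varepsilon$, then mixes this with the $(1-\eta,\eta)$ weights so that the ``common part'' of the subsampling pools with the $(1-\alpha)\mu_1'$ component on both sides, isolating the amplified factor $1+\eta(e^{\varepsilon}-1)$ on the genuinely differing residue.

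The main obstacle is exactly this last step. A naive substitution of the pointwise bound $\mu_1(S) \le e^{\varepsilon}\mu_1'(S)$ into the mixture only yields a factor of $e^{\varepsilon}$ (attained when $\mu_0(S) = 0$), not the amplified $1+\eta(e^{\varepsilon}-1)$. Capturing the dilution by the common component $\mu_0$ requires exploiting the joint coupling between $\mu_1$ and $\mu_1'$ (or equivalently the convexity of the hockey-stick divergence underlying pure DP) rather than only a likelihood-ratio bound; this is precisely the divergence-level calculation carried out in \cite{balle2018privacy}. As a sanity check I would verify the boundary behavior: $\eta = 1$ gives $\log(1+(e^{\varepsilon}-1)) = \varepsilon$, recovering the original guarantee, while $\eta \to 0$ gives $\varepsilon' \sim \eta(e^{\varepsilon}-1) \sim \eta\varepsilon$ for small $\varepsilon$, matching the intuition that an arbitrarily small subsample leaks arbitrarily little.
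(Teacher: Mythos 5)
This theorem is imported by the paper directly from \cite{balle2018privacy} with no proof given, so the only question is whether your sketch would stand on its own --- and as written it does not quite close. Your setup is the standard and correct one: couple the subsample index set $I$, condition on whether the differing index $i^\star$ lands in $I$, and write the two output laws as mixtures $(1-\eta)\mu_0+\eta\mu_1$ and $(1-\eta)\mu_0+\eta\mu_1'$ with a common ``excluded'' component $\mu_0$. You also correctly diagnose that $\mu_1(S)\le e^{\varepsilon}\mu_1'(S)$ alone only yields the unamplified factor $e^{\varepsilon}$. But at exactly that point you hand the argument back to the cited paper (``this is precisely the divergence-level calculation carried out in \cite{balle2018privacy}''), so the one step that actually produces the factor $1+\eta(e^{\varepsilon}-1)$ is asserted rather than proved.

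The missing ingredient is concrete and elementary: you also need $\mu_1(S)\le e^{\varepsilon}\mu_0(S)$, i.e.\ the ``included'' branch is $\varepsilon$-close not only to $\mu_1'$ but to the common branch $\mu_0$ as well. For sampling without replacement this follows from a second coupling that swaps $i^\star$ with a uniformly random index outside $I$: each subsample containing the differing record is paired with one that omits it and differs from it in a single slot, so $\varepsilon$-DP of $\mathcal{M}$ transfers through the averaging. With both bounds in hand the amplification is a two-line computation: $\mu_1\le \mu_1'+(e^{\varepsilon}-1)\min(\mu_0,\mu_1')$ and $\min(\mu_0,\mu_1')\le(1-\eta)\mu_0+\eta\mu_1'$, hence
\[
(1-\eta)\mu_0(S)+\eta\mu_1(S)\;\le\;\bigl(1+\eta(e^{\varepsilon}-1)\bigr)\bigl((1-\eta)\mu_0(S)+\eta\mu_1'(S)\bigr),
\]
which is the claimed $\log\left(1+\eta(e^{\varepsilon}-1)\right)$-DP guarantee; no hockey-stick divergence machinery is required. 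Your decomposition into $\mu_1=(1-\alpha)\mu_1'+\alpha\nu$ can be made to work too, but only after the same observation that the residual $\nu$ can be charged against the whole mixture (via $\mu_0$), which is the point your write-up leaves open. The boundary checks at $\eta=1$ and $\eta\to 0$ are fine but do not substitute for this step.
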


Note that for small $\varepsilon$, $\log(1+\eta(e^{\varepsilon}-1))\approx \eta \varepsilon$.

\subsection{Optimality}
\label{sec:opt}

The high-probability error of using $\mathcal{M}(D)$ to approximate $Q(D)$ is defined as
\[\mathrm{Err}(\mathcal{M},D,\beta) = \inf\left\{\lambda\in\mathbb{R}\mid \Pr\left[\left|\mathcal{M}(D)-Q(D)\right|\leq \lambda\right]\geq 1-\beta \right\}.\]
We often take $\beta$ as a constant, say $\beta=1/3$; in this case we simply write $\mathrm{Err}(\mathcal{M},D)$.

The Laplace mechanism is worst-case optimal.  However, for any function $Q$ with $\mathrm{GS}=\infty$, such as the empirical mean $\mu(D)$ when $D$ is taken from an unbounded domain, this optimality notion is meaningless.  For such a $Q$, \textit{instance-optimality} is more appropriate and much stronger:

\begin{definition}[Instance-optimality]
Define the per-instance lower bound:
\[\mathcal{L}_{\text{ins}}(D)=\inf_{\mathcal{M}'}\mathrm{Err}(\mathcal{M}',D).\] Then a DP mechanism $\mathcal{M}$ is $c$-instance-optimal if
\[\mathrm{Err}(\mathcal{M},D)\leq c\cdot \mathcal{L}_{\text{ins}}(D)\]
for every $D$, where $c$ is the optimality ratio, which may depend on $D$.
\end{definition}

Unfortunately, $\mathcal{L}_{\text{ins}}(D)=0$ for every $D$ due to the trivial DP mechanism $\mathcal{M}'(\cdot) \equiv Q(D)$. Thus, instance-optimal DP mechanisms do not exist unless $Q$ is trivial (i.e., $Q(D)$ is the same for all $D$).  Thus, the following natural relaxation has been proposed:

\begin{definition}[Neighborhood-optimality~\cite{asi2020instance,dong2021nearly}]
Define the neighborhood lower bound:
\[\mathcal{L}_{\text{nbr}}(D)=\inf_{\mathcal{M}'}\sup_{D':D' \sim D }\mathrm{Err}(\mathcal{M}',D').\]
Then a DP mechanism $\mathcal{M}$ is $c$-neighborhood-optimal if
\[\mathrm{Err}(\mathcal{M},D)\leq c\cdot \mathcal{L}_{\text{nbr}}(D),\]
for every $D$.
\end{definition}

\cite{vadhan2017complexity} show that $\mathcal{L}_{\text{nbr}}(D)=\Theta(\mathrm{LS}(D))$ for every $D$. For the empirical mean $\mu(D)$, we have $\mathrm{LS}(D)=\infty$, since one can change an element in $D$ arbitrarily to obtain $D'$.  Thus this relaxation is ``too much''.  To fix the issue, the idea is to restrict the neighborhood:

\begin{definition}[Inward-neighborhood-optimality~\cite{huang2021instance}]
Define the inward-neighborhood lower bound:
\[\mathcal{L}_{\text{in-nbr}}(D)=\inf_{\mathcal{M}'}\max_{D':D\sim D',\mathrm{supp}(D')\subseteq \mathrm{supp}(D) }\mathrm{Err}(\mathcal{M}',D').\]
Then a DP mechanism $\mathcal{M}$ is $c$-inward-neighborhood-optimal if 
\[\mathrm{Err}(\mathcal{M},D)\leq c\cdot \mathcal{L}_{\text{in-nbr}}(D),\]
for every $D$. 
\end{definition}

Note that the restricted neighborhood is only concerned with the utility of $\mathcal{M}$, which still has to meet the standard privacy requirement over all $D\sim D'$.  

For any function $Q$, $\mathcal{L}_{\text{in-nbr}}(D)$ is always finite, as $D$ can only have a finite number of inward neighbors (thus $\sup_{D'}$ is replaced by $\max_{D'}$). In particular, for the empirical mean $\mu(D)$, we have $\mathcal{L}_{\text{in-nbr}}(D) = \Theta(\gamma(D)/n)$ \cite{huang2021instance}.

\subsection{The Sparse Vector Technique}
\label{sec:svt}

\begin{algorithm}[t]
\label{alg:svt}
\LinesNumbered 
\caption{$\mathrm{SVT}$.}
\KwIn{$T$, $\varepsilon$, $Q_1(D),Q_2(D),\dots$}
$\tilde{T} \gets T+\mathrm{Lap}(2/\varepsilon)$\;
    \For{$i\gets 1,2,\dots$}{
    $\tilde{Q}_i(D)\gets Q_i(D)+\mathrm{Lap}(4/\varepsilon)$\;
        \If{$\tilde{Q}_i(D)>\tilde{T}$}{
            Break\;
        }
    }
\Return $i$\;
\end{algorithm}

The Sparse Vector Technique ($\mathrm{SVT}$)~\cite{dwork2009complexity} has as input a (possibly infinite) sequence of queries, $Q_1,Q_2,\dots$, where each query has global sensitivity $1$, and a threshold $T$. It aims to find the first query whose answer is above $T$. The detailed algorithm is given in Algorithm \ref{alg:svt}.  The $\mathrm{SVT}$ has been shown to satisfy $\varepsilon$-DP and enjoy the following error guarantee, which says that it will not stop until it gets close to $T$.

\begin{lemma}[\cite{dwork2014algorithmic}]
\label{lm:err_SVT}
Suppose there exists a $k_1$ less than the length of the query sequence such that for all $i=1,\dots,k_1$, $Q_i(D)\leq T-\frac{8}{\varepsilon}\log(2k_1/\beta)$.
Then with probability at least $1-\beta$, $\mathrm{SVT}$ returns an $i\ge k_1+1$.
\end{lemma}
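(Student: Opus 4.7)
The plan is to show that on a high-probability event controlling the fresh Laplace draws, the noisy query $\tilde Q_i(D)$ stays below the noisy threshold $\tilde T$ for every $i \le k_1$, so the loop in Algorithm~\ref{alg:svt} cannot break before index $k_1+1$. The argument is a standard concentration plus union bound, with constants calibrated to the noise scales $2/\varepsilon$ and $4/\varepsilon$ that appear in the mechanism and to the cushion $\frac{8}{\varepsilon}\log(2k_1/\beta)$ assumed on the queries.

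First, I would control the single Laplace draw defining $\tilde T$. Using $\Pr[|\mathrm{Lap}(b)| > t] = e^{-t/b}$, with probability at least $1-\beta/2$ we have $|\mathrm{Lap}(2/\varepsilon)| \le \frac{2}{\varepsilon}\log(2/\beta)$, and in particular
\[
\tilde T \ge T - \frac{2}{\varepsilon}\log(2/\beta).
\]
Next, for each $i \in \{1,\dots,k_1\}$ the fresh noise satisfies $|\mathrm{Lap}(4/\varepsilon)| \le \frac{4}{\varepsilon}\log(2k_1/\beta)$ with probability at least $1-\beta/(2k_1)$; a union bound over the $k_1$ queries shows this holds simultaneously for all such $i$ with probability at least $1-\beta/2$. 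Hence, jointly with probability at least $1-\beta$,
\[
\tilde Q_i(D) \le Q_i(D) + \frac{4}{\varepsilon}\log(2k_1/\beta) \qquad \text{for every } i \le k_1.
\]

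Combining the two bounds with the hypothesis $Q_i(D) \le T - \frac{8}{\varepsilon}\log(2k_1/\beta)$ gives, on this event,
\[
\tilde Q_i(D) \;\le\; T - \frac{4}{\varepsilon}\log(2k_1/\beta), \qquad \tilde T \;\ge\; T - \frac{2}{\varepsilon}\log(2/\beta),
\]
so $\tilde Q_i(D) \le \tilde T$ for every $i \le k_1$ as soon as $2\log(2k_1/\beta) \ge \log(2/\beta)$, which clearly holds for $k_1 \ge 1$ and $\beta \le 1$. Therefore none of the first $k_1$ iterations triggers the break condition, and the returned index is at least $k_1+1$.

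The only real place arithmetic matters is the split of the failure budget between the one threshold draw ($\beta/2$) and the $k_1$ query draws (each $\beta/(2k_1)$), chosen so that the assumed gap $\frac{8}{\varepsilon}\log(2k_1/\beta)$ absorbs both the $\frac{4}{\varepsilon}\log(2k_1/\beta)$ contributed by the union-bounded query noise and the $\frac{2}{\varepsilon}\log(2/\beta)$ contributed by the threshold noise. This constant-chasing is the only mild obstacle; no further ingredients beyond Laplace tail bounds and a union bound are needed.
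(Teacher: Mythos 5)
Your proof is correct, and it is essentially the standard argument behind this lemma: the paper itself does not prove it but cites \cite{dwork2014algorithmic}, whose proof is exactly your Laplace-tail-plus-union-bound computation (bound $|\mathrm{Lap}(2/\varepsilon)|$ once for $\tilde{T}$, bound the $k_1$ draws of $\mathrm{Lap}(4/\varepsilon)$ via a union bound, and check the cushion $\frac{8}{\varepsilon}\log(2k_1/\beta)$ absorbs both). It also mirrors the technique the paper uses for the companion result, Lemma~\ref{lm:simple_obs_SVT}, so nothing further is needed.
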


However, as will be clear later, we will actually need a complementary result that guarantees that SVT will stop in time.  The following lemma gives such a result.  More importantly, it also yields a utility guarantee on the returned query.  

\begin{lemma}
\label{lm:simple_obs_SVT}
If there exists a $k_2$ such that $Q_{k_2}(D)\geq T+\frac{6}{\varepsilon}\log(2/\beta)$, then with probability at least $1-\beta$, $\mathrm{SVT}$ returns an $i\leq k_2$ such that $Q_i(D)\ge T-{6 \over \varepsilon}\log(2k_2/\beta)$.
\end{lemma}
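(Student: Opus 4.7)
The plan is to condition on a ``good event'' under which all the Laplace noises in the algorithm behave well, and then to argue deterministically. I will invoke the one-sided Laplace tail bound $\Pr[\mathrm{Lap}(b) > t] = \frac{1}{2} e^{-t/b}$ (for $t \ge 0$) four times to define: (i) $\mathrm{Lap}_T \le \frac{2}{\varepsilon}\log(2/\beta)$, where $\mathrm{Lap}_T$ denotes the noise added to the threshold; (ii) $\mathrm{Lap}_{k_2} \ge -\frac{4}{\varepsilon}\log(2/\beta)$, for the noise added to query $Q_{k_2}$; (iii) $\mathrm{Lap}_T \ge -\frac{2}{\varepsilon}\log(2/\beta)$; and (iv) $\mathrm{Lap}_i \le \frac{4}{\varepsilon}\log(2k_2/\beta)$ simultaneously for every $i=1,\dots,k_2$. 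Each of (i)--(iii) fails with probability at most $\beta/4$, and a union bound in (iv) costs at most another $\beta/4$, so the intersection of all four events holds with probability at least $1-\beta$.

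The stopping half will follow from events (i) and (ii): the hypothesis $Q_{k_2}(D) - T \ge \frac{6}{\varepsilon}\log(2/\beta)$ rearranges to
\[\tilde{Q}_{k_2}(D) - \tilde{T} = (Q_{k_2}(D) - T) + \mathrm{Lap}_{k_2} - \mathrm{Lap}_T \ge \frac{6}{\varepsilon}\log(2/\beta) - \frac{4}{\varepsilon}\log(2/\beta) - \frac{2}{\varepsilon}\log(2/\beta) = 0,\]
so $\mathrm{SVT}$ must terminate at some index $i^\star \le k_2$. For the utility half, the stopping rule at $i^\star$ gives $\tilde{Q}_{i^\star}(D) > \tilde{T}$, i.e.\ $Q_{i^\star}(D) > T + \mathrm{Lap}_T - \mathrm{Lap}_{i^\star}$, and events (iii) and (iv) bound the right-hand side below by $T - \frac{2}{\varepsilon}\log(2/\beta) - \frac{4}{\varepsilon}\log(2k_2/\beta) \ge T - \frac{6}{\varepsilon}\log(2k_2/\beta)$, as required.

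The one subtle point worth flagging is that event (iv) genuinely requires the union bound, because the identity of $i^\star$ is determined by the noises themselves and cannot be fixed in advance; this is precisely why the looser factor $\log(2k_2/\beta)$ appears in the utility conclusion, whereas the tighter $\log(2/\beta)$ suffices for the stopping conclusion (which concerns only the single designated index $k_2$). Beyond this bookkeeping, I do not foresee any real obstacle: the argument is essentially the ``mirror image'' of Lemma \ref{lm:err_SVT}, running a lower-tail analysis on the query that triggers termination rather than an upper-tail analysis on the queries that do not.
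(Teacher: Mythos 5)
Your proposal is correct and follows essentially the same route as the paper's own proof: a two-sided bound on the threshold noise (your events (i) and (iii), each costing $\beta/4$), a one-sided lower bound on the noise at query $k_2$ to force termination by index $k_2$, and a union-bounded upper bound on the noises of the first $k_2$ queries to obtain $Q_i(D)\ge T-\frac{6}{\varepsilon}\log(2k_2/\beta)$, with the identical $\beta/4$ budget split. The only cosmetic nit is that your stopping computation yields $\tilde{Q}_{k_2}(D)-\tilde{T}\ge 0$ while Algorithm 1 breaks on the strict inequality $\tilde{Q}_i(D)>\tilde{T}$; since the Laplace noises are continuous this is a measure-zero issue (or simply define the good events with strict inequalities), and the paper's version avoids it only by carrying strict inequalities through the same calculation.
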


\begin{proof}
First, by the tail bound of the Laplace distribution, with probability at least $1-\frac{\beta}{2}$, 
\begin{equation}
\label{lm:simple_obs_SVT_1}
|\tilde{T}-T|< \frac{2}{\varepsilon}\log(2/\beta).
\end{equation}
And with probability at least $1-\frac{\beta}{4}$,
\begin{equation}
\label{lm:simple_obs_SVT_2}
\tilde{Q}_{k_2}(D)> Q_{k_2}(D)- \frac{4}{\varepsilon}\log(2/\beta).
\end{equation}
By a union bound over (\ref{lm:simple_obs_SVT_1}) and (\ref{lm:simple_obs_SVT_2}), together with the given condition $Q_{k_2}(D)\geq T+\frac{6}{\varepsilon}\log(2/\beta)$, we have that with probability at least $1-{3\over 4}\beta$, $\tilde{Q}_{k_2}(D)> \tilde{T}$, which implies $i\leq k_2$.

To show $Q_i(D)\geq T-\frac{6}{\varepsilon}\log(2k_2/\beta)$, we also require the following condition, which will be shown to hold with probability at least $1-{\beta\over 4}$.  Consider each $j=1,\dots, k_2$. We have
\[\Pr\left[\tilde{Q}_{j}(D)\geq Q_{j}(D) + \frac{4}{\varepsilon}\log{2k_2\over\beta}\right] =\Pr\left[\mathrm{Lap}\left({4\over \varepsilon}\right)\geq  \frac{4}{\varepsilon}\log{2k_2\over\beta}\right]\le \frac{\beta}{4k_2}.\]
By a union bound over all $j$, we have that, with probability at least $1-\frac{\beta}{4}$, $\tilde{Q}_{j}(D)<Q_{j}(D) + \frac{4}{\varepsilon}\log(2k_2/\beta)$ for all $j$. By further combining with (\ref{lm:simple_obs_SVT_1}), we have $Q_i(D)\geq T-\frac{6}{\varepsilon}\log(2k_2/\beta)$.
\end{proof}

\subsection{The Inverse Sensitivity Mechanism}
\label{sec:INV}

The \textit{inverse sensitivity mechanism} ($\mathrm{INV}$)~\cite{asi2020instance} answers a query $Q$ with a discrete output range $\mathcal{Y}$. Given $Q$ and $D$, it returns a $y\in \mathcal{Y}$ such that there exists $D'$ not too far from $D$ and $Q(D')=y$. Concretely, for any $D$ and any $y\in \mathcal{Y}$, define the path length:
\[\mathrm{len}(Q,D,y) = \min_{D'}\{d(D,D'):Q(D')=y\},\]
where $d(D,D')$ is the number of different elements between $D$ and $D'$. $\mathrm{INV}$ instantiates the exponential mechanism with $\mathrm{len}$ as the score function:
\[\Pr(\mathrm{INV}(Q,D) = y) = \frac{\exp\left(-\varepsilon\cdot\mathrm{len}(Q,D,y)/2\right)}{\sum_{y'\in \mathcal{Y}}\exp\left(-\varepsilon\cdot \mathrm{len}(Q,D,y')/2\right)}.\]

The utility of INV follows from that of the exponential mechanism:

\begin{lemma}[\cite{asi2020instance}]
\label{lm:err_INV}
For any $D$ and $\beta$, with probability at least $1-\beta$, $\mathrm{INV}$ returns a $y$ such that there exists a $D'$ with $d(D,D')\leq \frac{2}{\varepsilon}\log(|\mathcal{Y}|/\beta)$ and $Q(D') = y$.
\end{lemma}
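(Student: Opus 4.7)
The plan is to recognize $\mathrm{INV}$ as an instance of the exponential mechanism with score function $s(D,y) = -\mathrm{len}(Q,D,y)$, and then invoke the standard high-probability utility guarantee for the exponential mechanism. The statement of the lemma is essentially the translation of that guarantee through the definition of $\mathrm{len}$.

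First I would record two structural facts. (i) The global sensitivity of $\mathrm{len}(Q,\cdot,y)$ is at most $1$: for any neighboring $D \sim D''$, if $D'$ realizes the minimum in $\mathrm{len}(Q,D,y)$ via a path of length $\ell$, then $D' \to D \to D''$ gives a path of length $\ell+1$ from $D''$ to some dataset with $Q$-value $y$, and symmetrically. Hence $|\mathrm{len}(Q,D,y)-\mathrm{len}(Q,D'',y)|\le 1$. (ii) The maximum score is $0$, attained at $y^\star := Q(D)$, since $\mathrm{len}(Q,D,Q(D))=0$ by taking $D'=D$.

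Next, apply the tail bound for the exponential mechanism. For any threshold $t>0$, the probability that $\mathrm{INV}$ outputs some $y$ with $\mathrm{len}(Q,D,y) \ge t$ is
\[
\Pr\bigl[\mathrm{len}(Q,D,\mathrm{INV}(Q,D))\ge t\bigr] \;\le\; \frac{\sum_{y:\mathrm{len}(Q,D,y)\ge t}\exp(-\varepsilon\,\mathrm{len}(Q,D,y)/2)}{\sum_{y'}\exp(-\varepsilon\,\mathrm{len}(Q,D,y')/2)} \;\le\; \frac{|\mathcal{Y}|\cdot\exp(-\varepsilon t/2)}{\exp(0)},
\]
where in the denominator I kept only the term $y'=y^\star$, which contributes $\exp(0)=1$. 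Setting the right-hand side equal to $\beta$ and solving gives $t = \tfrac{2}{\varepsilon}\log(|\mathcal{Y}|/\beta)$.

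Thus with probability at least $1-\beta$ the output $y = \mathrm{INV}(Q,D)$ satisfies $\mathrm{len}(Q,D,y) \le \tfrac{2}{\varepsilon}\log(|\mathcal{Y}|/\beta)$. Unpacking the definition of $\mathrm{len}$, this is exactly the existence of some $D'$ with $d(D,D') \le \tfrac{2}{\varepsilon}\log(|\mathcal{Y}|/\beta)$ and $Q(D')=y$, proving the claim. There is no real obstacle here: the only mild subtlety is verifying that $\mathrm{len}(Q,D,\cdot)$ has sensitivity $1$ (so the exponential-mechanism bound applies with $\Delta=1$) and noting that the ``optimal'' score $0$ is always achieved, which is what lets us bound the denominator below by $1$.
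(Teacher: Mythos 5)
Your proof is correct and is essentially the argument the paper relies on: the paper simply cites this as the standard utility guarantee of the exponential mechanism from \cite{asi2020instance}, and your tail bound (lower-bounding the normalizer by the term $y^\star=Q(D)$ with $\mathrm{len}=0$, union-bounding the numerator by $|\mathcal{Y}|e^{-\varepsilon t/2}$, and unpacking the definition of $\mathrm{len}$) is exactly that derivation. The sensitivity-1 observation is only needed for privacy, not for this utility bound, but including it does no harm.
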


\begin{algorithm}
\label{alg:finite_domain_quantile}
\LinesNumbered 
\caption{$\mathtt{FiniteDomainQuantile}$.}
\KwIn{$D$, $\tau$, $\mathcal{X}$, $\varepsilon$, $\beta$}

\uIf{$\tau\leq \frac{2}{\varepsilon}\log\left(|\mathcal{X}|/\beta\right)$}{
    $\tau'=\frac{2}{\varepsilon}\log(|\mathcal{X}|/\beta)$\;
}
\uElseIf{$\tau\geq n- \frac{2}{\varepsilon}\log(|\mathcal{X}|/\beta)$}{
    $\tau'=n- \frac{2}{\varepsilon}\log\left(|\mathcal{X}|/\beta\right)$\;
}
\Else{
    $\tau'=\tau$\;
}
Run $\mathrm{INV}$ to find the $\tau'$-quantile of $D$.
\end{algorithm}

INV can be used to find a privatized quantile $X_{\tau}$ of $D$, if $D$ are taken from a finite ordered domain $\mathcal{X}$, where $\mathrm{len}(Q,D,y)$ is simply the number of elements of $D$ that are between $X_{\tau}$ and $y$.  Since $\mathrm{len}(Q,D,y)$ only changes when $y$ passes some element in $D$, the exponential mechanism can be implemented in $O(n)$ time (given $D$ sorted) as opposed to $O(|\mathcal{Y}|)$.  Some care has to be taken if $\tau$ is too close to $1$ or $n$, in which case INV may return something arbitrarily bad.  The details are shown in Algorithm~\ref{alg:finite_domain_quantile}, which enjoys a rank error guarantee:

\begin{lemma}
\label{lm:err_known_quantile}
Given $\varepsilon,\beta$ and a finite ordered domain $\mathcal{X}$, for any $D\in \mathcal{X}^n$ and any $1\leq \tau\leq n$, if $n>\frac{4}{\varepsilon}\log(|\mathcal{X}|/\beta)$, then with probability at least $1-\beta$, $\mathtt{FiniteDomainQuantile}$ returns an $\Tilde{X}_{\tau}$ such that
\[X_{\tau-\frac{4}{\varepsilon}\log(|\mathcal{X}|/\beta)} \leq \Tilde{X}_{\tau}\leq X_{\tau+\frac{4}{\varepsilon}\log(|\mathcal{X}|/\beta)}.\]
\end{lemma}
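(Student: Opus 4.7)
The plan is to chain Lemma~\ref{lm:err_INV}, which is an off-the-shelf utility bound for the inverse sensitivity mechanism, with the elementary observation that Hamming distance between two datasets controls how much a quantile can move. Concretely, I would set $t:=\frac{2}{\varepsilon}\log(|\mathcal{X}|/\beta)$ and first invoke Lemma~\ref{lm:err_INV} on the instance used inside $\mathtt{FiniteDomainQuantile}$: the query is the $\tau'$-quantile and the output range is $\mathcal{Y}=\mathcal{X}$, so with probability at least $1-\beta$ the returned value $\tilde{X}_{\tau'}$ is the exact $\tau'$-quantile of some dataset $D'$ with $d(D,D')\le t$. I would condition on this event for the remainder of the argument.

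Next I would argue that changing a single record in a sorted dataset shifts the rank of any fixed value by at most one (insertion adds one, deletion removes one, replacement is their composition). Iterating this, the rank of $\tilde{X}_{\tau'}$ in $D$ lies in $[\tau'-t,\tau'+t]$, which, with the boundary conventions $X_i=X_1$ for $i<1$ and $X_i=X_n$ for $i>n$, translates into
\[
X_{\tau'-t}\le \tilde{X}_{\tau'}\le X_{\tau'+t}.
\]

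The final step is a three-way case analysis matching the branches of the algorithm, showing that this bound implies the desired $X_{\tau-2t}\le\tilde{X}_\tau\le X_{\tau+2t}$ in each branch. In the branch $\tau\le t$ I would plug in $\tau'=t$, so the bound becomes $X_0\le \tilde{X}_{\tau'}\le X_{2t}$; since $\tau-2t\le 0$ and $\tau+2t\ge 2t$, the boundary conventions plus monotonicity of the sorted sequence close the argument. The branch $\tau\ge n-t$ is handled symmetrically with $\tau'=n-t$, giving $X_{n-2t}\le \tilde{X}_{\tau'}\le X_n$ and using $\tau+2t\ge n$. In the middle branch $\tau'=\tau$, and the claim is immediate. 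The hypothesis $n>2t=\frac{4}{\varepsilon}\log(|\mathcal{X}|/\beta)$ is invoked only to guarantee that the two clipping regions are disjoint so that $\tau'$ is well-defined.

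I do not foresee a real obstacle. The only mild bookkeeping is the factor of $2$ that appears because $\tau'$ may itself differ from $\tau$ by up to $t$, doubling the rank-error inflation relative to the error bound supplied by Lemma~\ref{lm:err_INV}. All probabilistic content is absorbed into that single invocation, and DP is inherited automatically from $\mathrm{INV}$ (which is the exponential mechanism) together with post-processing, so it is not part of this utility lemma's proof.
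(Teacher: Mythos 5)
Your proposal is correct and is essentially the paper's own argument: the paper's proof is a one-liner invoking Lemma~\ref{lm:err_INV} together with the observation that $|\tau-\tau'|\le \frac{2}{\varepsilon}\log(|\mathcal{X}|/\beta)$, which is exactly the decomposition you use (INV's distance bound translated into a rank shift of at most $\frac{2}{\varepsilon}\log(|\mathcal{X}|/\beta)$, plus another $\frac{2}{\varepsilon}\log(|\mathcal{X}|/\beta)$ from the clipping of $\tau$ to $\tau'$). You merely spell out the rank-perturbation step and the branch-by-branch bookkeeping that the paper leaves implicit, so there is nothing to fix.
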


\begin{proof}
Follows from Lemma~\ref{lm:err_INV} and the fact that $|\tau -  \tau'| \le \frac{2}{\varepsilon}\log(|\mathcal{X}|/\beta)$.
\end{proof}

\cite{asi2020instance} also propose a continuous version of SVT and \cite{smith2011privacy} uses a similar idea to estimate a quantile in a bounded real value domain. However, as the domain is infinite, those algorithms do not have any utility guarantee in the empirical setting.

\subsection{The Clipped Mean Estimator}
\label{sec:clipped_mean}

A standard idea for dealing with an unbounded domain is to clip all values into a bounded range $[l,r]$.  Define 
\[\mathrm{Clip}\left(X,[l,r]\right)=\begin{cases}
l, & \text{if }X<l;\\
X, & \text{if }l\leq X \leq r;\\
r, &\text{if } X>r.
\end{cases}\]
Let
\[\mathrm{Clip}(D,[l,r]) = \{\mathrm{Clip}\left(X_i,[l,r]\right) \mid X_i\in D \}.\]
Then the clipped mean estimator is
\[\mathrm{ClippedMean}(D,[l,r]) = \mu(\mathrm{Clip}(D, [l,r])).\]

It is obvious that $\mathrm{ClippedMean}(\cdot,[l,r])$ has global sensitivity $(r-l)/n$.  Thus, $\mathrm{ClippedMean}(D,[l,r]) + \mathrm{Lap}\left({r-l \over \varepsilon n}\right)$ satisfies $\varepsilon$-DP.

\subsection{Inequalities}
\label{sec:concentration_bounds}

We will need the following inequalities:

\begin{lemma}[Chernoff's inequality]
\label{lm:chernoff}
Given $k$ independent Bernoulli random variables $X_1, \cdots, X_k$ and $\bar{X}=\sum_{i=1}^k X_i$ and $\E[\bar{X}]= \mu$, then for any $0\leq\rho\leq 1$,
\[\Pr\left[\bar{X}\leq (1-\rho)\mu\right]\leq \exp\left(-\frac{\rho^2\mu}{2}\right),\]
and
\[\Pr\left[\bar{X}\geq (1+\rho)\mu\right]\leq \exp\left(-\frac{\rho^2\mu}{3}\right).\]
\end{lemma}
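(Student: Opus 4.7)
The plan is to prove both tails by the standard exponential moment (Chernoff--Bernstein) argument. Let $p_i = \Pr[X_i = 1]$, so $\mu = \sum_i p_i$. For any $t \in \mathbb{R}$, independence gives
\[
\E\!\left[e^{t\bar{X}}\right] \;=\; \prod_{i=1}^k \bigl(1 - p_i + p_i e^t\bigr) \;\le\; \prod_{i=1}^k \exp\!\bigl(p_i(e^t - 1)\bigr) \;=\; \exp\!\bigl(\mu(e^t - 1)\bigr),
\]
where the inequality uses $1 + x \le e^x$. This is the only ``analytic'' input needed for both bounds.

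For the upper tail, I would apply Markov's inequality to $e^{t\bar{X}}$ with $t > 0$:
\[
\Pr\!\bigl[\bar{X} \ge (1+\rho)\mu\bigr] \;\le\; e^{-t(1+\rho)\mu}\,\E\!\left[e^{t\bar{X}}\right] \;\le\; \exp\!\bigl(\mu(e^t - 1) - t(1+\rho)\mu\bigr).
\]
Optimizing in $t$ by choosing $t = \log(1+\rho)$ yields the classical bound $\bigl(e^{\rho}/(1+\rho)^{1+\rho}\bigr)^{\mu}$. It then suffices to verify the elementary inequality $(1+\rho)\log(1+\rho) - \rho \ge \rho^2/3$ for $\rho \in [0,1]$, which follows by comparing derivatives of both sides at $\rho = 0$ or equivalently by a two-term Taylor bound.

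For the lower tail, I would run the same argument with $t < 0$, writing $\Pr[\bar{X} \le (1-\rho)\mu] \le e^{t(1-\rho)\mu}\,\E[e^{t\bar{X}}]$ and choosing $t = \log(1-\rho)$ (valid since $\rho < 1$; the boundary $\rho = 1$ can be handled separately as the event $\bar{X} \le 0$). This gives $\bigl(e^{-\rho}/(1-\rho)^{1-\rho}\bigr)^{\mu}$, and the proof concludes with the elementary inequality $(1-\rho)\log(1-\rho) + \rho \ge \rho^2/2$ on $[0,1]$, again verifiable by checking values at $\rho = 0$ and comparing derivatives.

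There is no real obstacle here: both the MGF bound and the two scalar inequalities are textbook. The only mild subtlety is making sure the Bernoullis are allowed to have different parameters (the lemma does not assume they are i.i.d.), which the product form of the MGF handles cleanly, and the edge case $\rho = 1$ in the lower tail, which is trivial since $\bar{X} \ge 0$ always.
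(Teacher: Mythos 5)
Your proof is correct; it is the standard Chernoff argument (bound the moment generating function using $1+x\le e^x$, apply Markov's inequality, optimize $t$, then use the elementary scalar inequalities $(1+\rho)\log(1+\rho)-\rho\ge\rho^2/3$ and $(1-\rho)\log(1-\rho)+\rho\ge\rho^2/2$ on $[0,1]$). The paper states this lemma as a known textbook inequality without proof, so there is no paper argument to compare against. One trivial blemish: in the lower-tail Markov step the bound should read $\Pr\left[\bar{X}\le(1-\rho)\mu\right]\le e^{-t(1-\rho)\mu}\,\E\left[e^{t\bar{X}}\right]$ for $t<0$ (you wrote $e^{+t(1-\rho)\mu}$), but your optimized expression $\left(e^{-\rho}/(1-\rho)^{1-\rho}\right)^{\mu}$ is exactly what the correct computation yields, so this is only a sign typo and the argument, including the $\rho=1$ edge case, goes through.
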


\begin{lemma}[Bernstein's inequality]
\label{lm:bernstein}
Let $X_1, \cdots, X_k$ be independent, zero-mean random variables such that $|X_i|\leq t$ for all $i$, and $\sigma^2 = \sum_{i=1}^k\E[X_i^2]$, then for any $\lambda$,
\[\Pr\left[\left|\sum_{i=1}^kX_i \right|\geq \lambda \right]\leq 2\exp\left(-\frac{\lambda^2/2}{\sigma^2+t\lambda/3}\right).\]
\end{lemma}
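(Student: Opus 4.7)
The plan is to follow the standard Chernoff/moment-generating-function argument; Bernstein's inequality is a classical result and I will just outline the proof so that the reader can trust the stated form. Let $S = \sum_{i=1}^k X_i$. First I would apply the exponential Markov inequality: for any $s > 0$,
\[
\Pr[S \ge \lambda] \le e^{-s\lambda}\, \E\!\left[e^{sS}\right] = e^{-s\lambda}\prod_{i=1}^k \E\!\left[e^{sX_i}\right],
\]
using independence of the $X_i$. A symmetric bound holds for $\Pr[-S \ge \lambda]$, so a union bound will produce the factor of $2$ at the end.

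Next I would bound each factor $\E[e^{sX_i}]$ by Taylor-expanding and exploiting the bounded support $|X_i|\le t$ together with $\E[X_i]=0$. Writing $e^{sX_i} = 1 + sX_i + \sum_{m\ge 2} (sX_i)^m/m!$ and taking expectations,
\[
\E\!\left[e^{sX_i}\right] = 1 + \sum_{m\ge 2} \frac{s^m\, \E[X_i^m]}{m!} \le 1 + \sum_{m\ge 2} \frac{s^m\, t^{m-2}\, \E[X_i^2]}{m!},
\]
where I used $|\E[X_i^m]|\le t^{m-2}\E[X_i^2]$. Comparing $m!$ with $2\cdot 3^{m-2}$ bounds the tail of the series by a geometric one, giving, for $0 < s < 3/t$,
\[
\E\!\left[e^{sX_i}\right] \le 1 + \frac{s^2\,\E[X_i^2]/2}{1 - st/3} \le \exp\!\left(\frac{s^2\,\E[X_i^2]/2}{1 - st/3}\right).
\]
Taking the product over $i$ and using $\sigma^2 = \sum_i \E[X_i^2]$,
\[
\E\!\left[e^{sS}\right] \le \exp\!\left(\frac{s^2 \sigma^2 / 2}{1 - st/3}\right).
\]

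Finally I would optimize the Chernoff bound by choosing $s = \lambda/(\sigma^2 + t\lambda/3)$, which satisfies $st/3 < 1$; direct substitution gives
\[
\Pr[S \ge \lambda] \le \exp\!\left(-\frac{\lambda^2/2}{\sigma^2 + t\lambda/3}\right),
\]
and a union bound with the analogous bound on $-S$ yields the factor of $2$ in the statement. The only slightly delicate step is the bookkeeping of constants in the Taylor tail (getting the $t\lambda/3$ rather than, say, $t\lambda$), but this is routine; everything else is immediate from independence and the moment bound $|\E[X_i^m]|\le t^{m-2}\E[X_i^2]$.
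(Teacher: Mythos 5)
Your proposal is the standard moment-generating-function (Chernoff) proof of Bernstein's inequality, and it is correct: the moment bound $|\E[X_i^m]|\le t^{m-2}\E[X_i^2]$, the comparison $m!\ge 2\cdot 3^{m-2}$, and the choice $s=\lambda/(\sigma^2+t\lambda/3)$ do yield exactly the stated exponent, with the factor $2$ from the union bound over $\pm S$. The paper states this lemma as a classical inequality without proof, so there is no alternative argument to compare against; your outline fills that gap correctly.
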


\begin{lemma}[H\"{o}lder's inequality]
\label{lm:holder}
Given  two random variables $X_1$, $X_2$ over $\mathbb{R}$, for any $k>1$, 
\[\E\left[|X_1 X_2|\right]\leq \left(\E\left[|X_1|^k\right]\right)^{1/k}\left(\E\left[|X_2|^{k/(k-1)}\right]\right)^{1-1/k}.\]
\end{lemma}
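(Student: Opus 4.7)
The plan is to reduce the inequality to the pointwise \emph{Young's inequality}: for nonnegative reals $a,b$ and conjugate exponents $p,q>1$ satisfying $1/p+1/q=1$, one has $ab\le a^p/p+b^q/q$. Setting $p=k$ and $q=k/(k-1)$ gives $1/p+1/q=1/k+(k-1)/k=1$, so these are valid conjugate exponents, and Young's inequality will supply exactly the pointwise bound needed to take expectations.

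First I would dispose of the degenerate cases. If $\E[|X_1|^k]=0$, then $X_1=0$ almost surely, so $\E[|X_1 X_2|]=0$ and the bound is trivial; the same holds for $X_2$. If either moment is $+\infty$, the right-hand side is $+\infty$ and there is nothing to prove. So I may assume
\[A:=\bigl(\E[|X_1|^k]\bigr)^{1/k}\in(0,\infty),\qquad B:=\bigl(\E[|X_2|^{k/(k-1)}]\bigr)^{1-1/k}\in(0,\infty).\]

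Next I would establish Young's inequality from the concavity of $\log$: for any $x,y>0$ and $\lambda\in(0,1)$, $\log(\lambda x+(1-\lambda)y)\ge \lambda\log x+(1-\lambda)\log y$. Applying this with $x=a^p$, $y=b^q$, $\lambda=1/p$ (so $1-\lambda=1/q$) and exponentiating yields $ab\le a^p/p+b^q/q$, valid for all $a,b\ge 0$ by continuity. I would then apply this pointwise with $a=|X_1|/A$, $b=|X_2|/B$, $p=k$, $q=k/(k-1)$:
\[\frac{|X_1 X_2|}{AB}\;\le\;\frac{|X_1|^k}{k\,A^k}\;+\;\frac{(k-1)\,|X_2|^{k/(k-1)}}{k\,B^{k/(k-1)}}.\]

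Finally, taking expectations on both sides and using $A^k=\E[|X_1|^k]$ and $B^{k/(k-1)}=\E[|X_2|^{k/(k-1)}]$, the right-hand side collapses to $1/k+(k-1)/k=1$, giving $\E[|X_1 X_2|]\le AB$, which is exactly the claimed inequality. Since this is a textbook result, there is no genuine obstacle; the only mild subtlety is remembering to normalize by $A$ and $B$ before applying Young's inequality, which is what makes the two error contributions balance to sum to $1$ after taking expectations.
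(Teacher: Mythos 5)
Your proof is correct: the paper states Lemma~\ref{lm:holder} as a standard textbook inequality in its preliminaries without providing a proof, and your argument via Young's inequality (derived from concavity of $\log$) applied to the normalized variables $|X_1|/A$ and $|X_2|/B$ is the canonical derivation, with the degenerate zero and infinite-moment cases handled appropriately. Nothing further is needed.
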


\section{Problems in the Empirical Setting}
\label{sec:empirically_setting}

In this section, we design $\varepsilon$-DP mechanisms for estimating $\mu(D)$ and $X_{\tau}$, where $D$ is taken from $\mathbb{Z}$.  We will first obtain $\tilde{\mathcal{R}}(D)$, a privatized $\mathcal{R}(D)$, and then invoke INV and the clipped mean estimator.  It turns out that the instance optimality ratio crucially depends on how well $\tilde{\mathcal{R}}(D)$ approximates $\mathcal{R}(D)$. Finally, we discuss the case when the domain is $\mathbb{R}$.

\subsection{Estimate Radius}
\label{sec:radius}

Before estimating $\mathcal{R}(D)$, we first estimate $\mathrm{rad}(D)$. We will show how to obtain a $\widetilde{\mathrm{rad}}(D)$ such that $\widetilde{\mathrm{rad}}(D)\leq 2\cdot \mathrm{rad}(D)$ while  $[-\widetilde{\mathrm{rad}}(D), \widetilde{\mathrm{rad}}(D)]$ covers all but $O\left(\log\log(\mathrm{rad}(D))\right)$ elements of $D$.

Let $\mathrm{Count}(D,x)=\left|D\cap [-x,x]\right|$.  It is easy to see that $\mathrm{Count}(\cdot,x)$ has the global sensitivity $1$ for any $x$, while $\mathrm{rad}(D)$ is exactly the smallest $x$ such that $\mathrm{Count}(D, x)\ge n$.  Thus, a natural idea is to feed the query sequence $\mathrm{Count}(D, x)$ for $x=0,1,2,4,8,\dots$ to SVT with a threshold of $T=n$.  However, doing so suffers from the ``late stop'' problem, i.e., SVT may stop at a $\widetilde{\mathrm{rad}}(D)$ that is too large due to the exponential growth rate of $x$. On the other hand, reducing the growth rate increases the length of the query sequence, degrading the utility of SVT.  Inspired by Lemma~\ref{lm:simple_obs_SVT}, we use  $T=n-6\log(2/\beta)/\varepsilon$ so that $\mathrm{SVT}$ will stop at the ``right'' place. The details are shown in Algorithm~\ref{alg:radius}.

\begin{algorithm}
\LinesNumbered 
\label{alg:radius}
\caption{$\mathtt{InfiniteDomainRadius}$.}
\KwIn{$D$, $\varepsilon$, $\beta$}
$\tilde{i}= \mathrm{SVT}\left(n-\frac{6}{\varepsilon}\log(2/\beta),\varepsilon, \mathrm{Count}(D,0), \mathrm{Count}(D,2^0), \mathrm{Count}(D,2^1),\dots\right)$\;
\eIf{$\tilde{i}=1$}{
    $\widetilde{\mathrm{rad}}(D)=0$\;
}{
    $\widetilde{\mathrm{rad}}(D)=2^{\tilde{i}-2}$\;
}
\Return $\widetilde{\mathrm{rad}}(D)$\;
\end{algorithm}

The privacy of $\mathtt{InfiniteDomainRadius}$ follows from that of the SVT and the post-processing property of DP.  We analyze its utility below:

\begin{theorem}
\label{th:err_infinite_radius}
For any $D\in \mathbb{Z}^n$, with probability at least $1-\beta$, $\mathtt{InfiniteDomainRadius}$ returns a $\widetilde{\mathrm{rad}}(D)$ such that $\widetilde{\mathrm{rad}}(D)\leq 2\cdot \mathrm{rad}(D)$ 
and
\[\left|D\cap\overline{\left[-\widetilde{\mathrm{rad}}(D),\widetilde{\mathrm{rad}}(D)\right]} \right| = O\left(\frac{1}{\varepsilon}\log\left(\log\left(\mathrm{rad}(D)\right)/\beta\right)\right).\]
\end{theorem}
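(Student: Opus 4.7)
The plan is to reduce everything to a single application of Lemma~\ref{lm:simple_obs_SVT} with the ``doubling'' query sequence $\mathrm{Count}(D,0),\mathrm{Count}(D,2^0),\mathrm{Count}(D,2^1),\dots$ and the cleverly chosen threshold $T=n-\tfrac{6}{\varepsilon}\log(2/\beta)$. The key observation is that this threshold is picked exactly so that $T+\tfrac{6}{\varepsilon}\log(2/\beta)=n$, and the only way to satisfy the hypothesis $Q_{k_2}(D)\ge T+\tfrac{6}{\varepsilon}\log(2/\beta)$ of Lemma~\ref{lm:simple_obs_SVT} is $Q_{k_2}(D)=n$, i.e., $[-2^{k_2-2},2^{k_2-2}]$ must already cover all of $D$. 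This forces $k_2$ to be roughly $\log_2 \mathrm{rad}(D)$, and the $\log k_2$ factor that appears in the utility conclusion of the lemma is precisely what yields the claimed $\log\log(\mathrm{rad}(D))$ bound.

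Concretely, I would set $k_2=\lceil\log_2\mathrm{rad}(D)\rceil+2$, so that $2^{k_2-2}\ge\mathrm{rad}(D)$ and hence $Q_{k_2}(D)=\mathrm{Count}(D,2^{k_2-2})=n$, matching the lemma's hypothesis exactly. Invoking Lemma~\ref{lm:simple_obs_SVT} then gives, with probability $1-\beta$, both $\tilde{i}\le k_2$ and $Q_{\tilde{i}}(D)\ge T-\tfrac{6}{\varepsilon}\log(2k_2/\beta)$. The first inequality implies $\widetilde{\mathrm{rad}}(D)=2^{\tilde i-2}\le 2^{k_2-2}=2^{\lceil\log_2\mathrm{rad}(D)\rceil}\le 2\cdot \mathrm{rad}(D)$, which is the first claim.

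For the second claim, I would simply rewrite the count of outside points as
\[
\left|D\cap\overline{[-\widetilde{\mathrm{rad}}(D),\widetilde{\mathrm{rad}}(D)]}\right|=n-\mathrm{Count}(D,\widetilde{\mathrm{rad}}(D))=n-Q_{\tilde i}(D),
\]
and then use the second half of Lemma~\ref{lm:simple_obs_SVT} together with $n-T=\tfrac{6}{\varepsilon}\log(2/\beta)$ to bound this by
\[
n-T+\tfrac{6}{\varepsilon}\log(2k_2/\beta)=\tfrac{6}{\varepsilon}\log(2/\beta)+\tfrac{6}{\varepsilon}\log(2k_2/\beta)=O\!\left(\tfrac{1}{\varepsilon}\log(\log(\mathrm{rad}(D))/\beta)\right),
\]
since $k_2=O(\log\mathrm{rad}(D))$.

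The proof is essentially routine once the right $k_2$ is identified; there is no real analytic obstacle. The only small points of care are the degenerate case $\mathrm{rad}(D)=0$ (handled by the $\tilde i=1$ branch of the algorithm and by the convention $\log x=1$ for $x\le e$ declared in the footnote, so the bound still reads $O(\log(1/\beta)/\varepsilon)$) and the verification that $2^{\lceil\log_2\mathrm{rad}(D)\rceil}\le 2\cdot\mathrm{rad}(D)$, which is immediate from $\lceil y\rceil\le y+1$. Privacy of the mechanism is inherited directly from the $\varepsilon$-DP guarantee of $\mathrm{SVT}$ and the post-processing property (Lemma~\ref{lm:post_processing_dp}), so it does not need to be re-derived here.
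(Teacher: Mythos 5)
Your proposal is correct and follows essentially the same route as the paper's proof: a single invocation of Lemma~\ref{lm:simple_obs_SVT} with $k_2 \approx \log_2(\mathrm{rad}(D))+2$ (so that $Q_{k_2}(D)=n$ meets the hypothesis exactly, given $T=n-\tfrac{6}{\varepsilon}\log(2/\beta)$), yielding both $\widetilde{\mathrm{rad}}(D)\le 2\cdot\mathrm{rad}(D)$ and the $O\left(\tfrac{1}{\varepsilon}\log(\log(\mathrm{rad}(D))/\beta)\right)$ bound on the uncovered points, with the $\mathrm{rad}(D)=0$ case treated separately just as in the paper. No issues.
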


\begin{proof}
We consider two cases: $\mathrm{rad}(D) = 0$ and $\mathrm{rad}(D)\in[2^{j-1},2^{j}]$ for some $j\in \mathbb{N}$. In the first case, $\mathrm{Count}(D,0) = n $. By Lemma~\ref{lm:simple_obs_SVT}, with probability at least $1-\beta$, we have $\widetilde{\mathrm{rad}}(D) = 0$, and both conclusions hold.

In the second case, $\mathrm{Count}(D,2^j) = n$. Plugging in Lemma~\ref{lm:simple_obs_SVT} with $T= n-\frac{6}{\varepsilon}\log(2/\beta)$ and $k_2 = \log_2(2^j)+2$, we have, with probability at least $1-\beta$,
\begin{equation*}
\label{eq:th:err_range_1}
\widetilde{\mathrm{rad}}(D)\leq 2^j\leq 2 \cdot \mathrm{rad}(D).
\end{equation*}
and 
\[\mathrm{Count}(D,\widetilde{\mathrm{rad}}(D))\geq n-\frac{6}{\varepsilon}\log(2/\beta)-\frac{6}{\varepsilon} \log\left(2\left(\log_2(2^j)+2\right)/\beta\right).\]
\end{proof} 

\subsection{Estimate Range}
\label{sec:data_range}

To find a good privatized range $\tilde{\mathcal{R}}(D)$, we first search for an $\tilde{X}$ that is very likely located inside $\mathcal{R}(D)$, which can be done using INV to find a privatized median over a finite domain, as most data have been covered in $[-\widetilde{\mathrm{rad}}(D),\widetilde{\mathrm{rad}}(D)]$.  Next, we shift $D$ to be centered around $\tilde{X}$, and run $\mathtt{InfiniteDomainRadius}$ again. The detailed algorithm is shown in Algorithm~\ref{alg:range}.

\begin{algorithm}
\LinesNumbered 
\label{alg:range}
\caption{$\mathtt{InfiniteDomainRange}$.}
\KwIn{$D$, $\varepsilon$, $\beta$}
$\widetilde{\mathrm{rad}}(D) = \mathtt{InfiniteDomainRadius}(D,\frac{\varepsilon}{8},\frac{\beta}{3})$\;
$D' = \mathrm{Clip}\left(D,[-\widetilde{\mathrm{rad}}(D),\widetilde{\mathrm{rad}}(D)]\right)$\;
$\tilde{X} = \mathtt{FiniteDomainQuantile}\left(D',\frac{n}{2},\mathbb{Z}\cap \left[-\widetilde{\mathrm{rad}}(D),\widetilde{\mathrm{rad}}(D)\right],\frac{\varepsilon}{8}, \frac{\beta}{3}\right)$\;
$D'' = D - \tilde{X}$\;
$\widetilde{\mathrm{rad}}(D'') = \mathtt{InfiniteDomainRadius}(D'',\frac{3\varepsilon}{4},\frac{\beta}{3})$\;
$\tilde{\mathcal{R}}(D) = [\tilde{X}-\widetilde{\mathrm{rad}}(D''),\tilde{X}+\widetilde{\mathrm{rad}}(D'')]$\;
\Return $\tilde{\mathcal{R}}(D)$\;
\end{algorithm}

The privacy of $\mathtt{InfiniteDomainRange}$ follows from basic composition. Its utility is summarized by the following theorem:

\begin{theorem}
\label{th:err_infinite_range}
Given $\varepsilon$, $\beta$, for any $D\in \mathbb{Z}^n$, if
\[n>\frac{c_1}{\varepsilon}\log\left(\mathrm{rad}(D)/\beta\right),\]
where $c_1$ is a universal constant, then with probability at least $1-\beta$, $\mathtt{InfiniteDomainRange}$ returns a range $\tilde{\mathcal{R}}(D)$ such that
\[|\tilde{\mathcal{R}}(D)|\leq 4\cdot \gamma(D),\]
and
\[\left|D\cap\overline{\tilde{\mathcal{R}}(D)}\right| = O\left(\frac{1}{\varepsilon}\log\left(\log\left(\gamma(D)\right)/\beta\right)\right).\]
\end{theorem}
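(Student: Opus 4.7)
The plan is to track the algorithm step by step, bound the failure probability of each subroutine by $\beta/3$, and use a union bound. The privacy side is immediate from basic composition of the three subroutines (budgets $\varepsilon/8 + \varepsilon/8 + 3\varepsilon/4 = \varepsilon$), so the focus is entirely on utility.

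First I would invoke Theorem~\ref{th:err_infinite_radius} on the first call to $\mathtt{InfiniteDomainRadius}$ with parameters $(\varepsilon/8,\beta/3)$. This yields an event $E_1$ of probability at least $1-\beta/3$ on which $\widetilde{\mathrm{rad}}(D)\le 2\,\mathrm{rad}(D)$ and the number of ``outliers'' $k_{\mathrm{out}}=|D\cap\overline{[-\widetilde{\mathrm{rad}}(D),\widetilde{\mathrm{rad}}(D)]}|$ is $O\!\left(\tfrac{1}{\varepsilon}\log(\log(\mathrm{rad}(D))/\beta)\right)$. Next, observe that $D'$ is obtained by clipping $D$, so when $D'$ is sorted, its first $k_l\le k_{\mathrm{out}}$ entries all equal $-\widetilde{\mathrm{rad}}(D)$, its last $k_r\le k_{\mathrm{out}}$ entries all equal $\widetilde{\mathrm{rad}}(D)$, and the entries at ranks $k_l+1,\dots,n-k_r$ are the corresponding original $X_i$'s.

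Next I would apply Lemma~\ref{lm:err_known_quantile} to the median query on $D'$ over the finite domain $\mathcal{X}=\mathbb{Z}\cap[-\widetilde{\mathrm{rad}}(D),\widetilde{\mathrm{rad}}(D)]$, which has size at most $4\,\mathrm{rad}(D)+1$. This yields an event $E_2$ of probability at least $1-\beta/3$ on which the returned $\tilde{X}$ is a value whose rank in $D'$ lies in $[n/2-t,\,n/2+t]$ for some $t=O\!\left(\tfrac{1}{\varepsilon}\log(\mathrm{rad}(D)/\beta)\right)$. This is the crux of the argument: I need $\tilde{X}\in\mathcal{R}(D)=[X_1,X_n]$. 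Since both $k_{\mathrm{out}}$ and $t$ are $O\!\left(\tfrac{1}{\varepsilon}\log(\mathrm{rad}(D)/\beta)\right)$, the hypothesis $n>\tfrac{c_1}{\varepsilon}\log(\mathrm{rad}(D)/\beta)$ for a sufficiently large constant $c_1$ forces $k_l+t<n/2$ and $k_r+t<n/2$. Hence the chosen rank lies strictly between $k_l+1$ and $n-k_r$, so $\tilde{X}$ equals one of the unclipped original $X_i$'s, in particular $\tilde{X}\in[X_1,X_n]$.

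Once $\tilde{X}\in\mathcal{R}(D)$ is secured, the rest is mechanical. For $D''=D-\tilde{X}$ we have $\mathrm{rad}(D'')=\max_i|X_i-\tilde{X}|\le X_n-X_1=\gamma(D)$. Applying Theorem~\ref{th:err_infinite_radius} to $D''$ with parameters $(3\varepsilon/4,\beta/3)$ yields an event $E_3$ of probability at least $1-\beta/3$ on which $\widetilde{\mathrm{rad}}(D'')\le 2\,\mathrm{rad}(D'')\le 2\gamma(D)$ and $|D''\cap\overline{[-\widetilde{\mathrm{rad}}(D''),\widetilde{\mathrm{rad}}(D'')]}|=O\!\left(\tfrac{1}{\varepsilon}\log(\log(\gamma(D))/\beta)\right)$. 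A union bound over $E_1,E_2,E_3$ gives overall success probability $1-\beta$. On this event, $|\tilde{\mathcal{R}}(D)|=2\widetilde{\mathrm{rad}}(D'')\le 4\gamma(D)$, and the count of $D$-points outside $\tilde{\mathcal{R}}(D)$ equals the count of $D''$-points outside $[-\widetilde{\mathrm{rad}}(D''),\widetilde{\mathrm{rad}}(D'')]$, yielding the stated outlier bound.

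The main obstacle is the middle paragraph: certifying that the finite-domain median $\tilde{X}$ actually lands inside $[X_1,X_n]$. The subtlety is that the median is computed on the clipped $D'$, not $D$, so naively the rank-$n/2$ element of $D'$ could still be one of the clipped boundary values. Controlling this is exactly why we need the lower bound $n=\Omega(\log(\mathrm{rad}(D)/\beta)/\varepsilon)$ in the hypothesis: it simultaneously absorbs the outlier count from the first radius call and the rank error of the finite-domain quantile, both of which scale as $\log(\mathrm{rad}(D)/\beta)/\varepsilon$ (the outlier count is actually $\log\log$, but the quantile rank error is the binding term).
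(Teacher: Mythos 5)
Your proof is correct and takes essentially the same route as the paper's: the first radius call, the finite-domain median used to certify $\tilde{X}\in[X_1,X_n]$ (which the paper asserts in one line and you usefully spell out via the clipped-rank argument), the second radius call on the shifted data $D''$ with $\mathrm{rad}(D'')\le\gamma(D)$, and a union bound over the three $\beta/3$ events. One small imprecision: $\mathtt{FiniteDomainQuantile}$ returns a domain value sandwiched between the clipped data values at ranks $n/2\pm t$, not necessarily an element of $D$ itself, but since under your condition $k_l+t<n/2$ and $k_r+t<n/2$ both of those rank positions hold unclipped original values, the needed conclusion $\tilde{X}\in[X_1,X_n]$ follows all the same.
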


\begin{proof}
Let $D'' = \{X''_1,\dots,X'_n\}$. First, by Theorem~\ref{th:err_infinite_radius}, with probablity at least $1-\frac{\beta}{3}$,
\begin{equation}
\label{eq:th:err_infinite_radius_1}
\widetilde{\mathrm{rad}}(D)\leq 2\cdot \mathrm{rad}(D),
\end{equation}
and
\begin{equation}
\label{eq:th:err_infinite_radius_2}
\left|D\cap \overline{\left[-\widetilde{\mathrm{rad}}(D),\widetilde{\mathrm{rad}}(D)\right]}\right|=O\left( \frac{1}{\varepsilon}\log\left(\log\left(\mathrm{rad}(D)\right)/\beta\right)\right).
\end{equation}

Then, by Lemma~\ref{lm:err_known_quantile}, (\ref{eq:th:err_infinite_radius_1}), (\ref{eq:th:err_infinite_radius_2}), and setting $c_1$ sufficiently large, with probability at least $1-\frac{\beta}{3}$,
\begin{equation*}
X_1\leq \tilde{X}\leq X_n,
\end{equation*}
which further implies that 
\begin{equation}
\label{eq:th:err_infinite_radius_3}
\mathrm{rad}(D'')\leq \gamma(D).
\end{equation}

Finally, based on Theorem~\ref{th:err_infinite_radius}, with probability at least $1-\frac{\beta}{3}$,
\begin{equation}
\label{eq:th:err_infinite_radius_4}
\widetilde{\mathrm{rad}}(D'')\leq 2\cdot \mathrm{rad}(D''),
\end{equation}
and
\begin{equation}
\label{eq:th:err_infinite_radius_5}
\left|D''\cap\overline{\left[-\widetilde{\mathrm{rad}}(D''),\widetilde{\mathrm{rad}}(D'')\right]}
\right|\leq O\left(\frac{1}{\varepsilon}\log\left(\log\left(\mathrm{rad}(D'')\right)/\beta\right)\right).
\end{equation}

The first part of the conclusion follows from (\ref{eq:th:err_infinite_radius_3}) and (\ref{eq:th:err_infinite_radius_4}); the second part follows from (\ref{eq:th:err_infinite_radius_3}) and (\ref{eq:th:err_infinite_radius_5}).
\end{proof}

\subsection{Mean Estimation}
\label{sec:mean_empirical}

\begin{algorithm}[t]
\label{alg:infinite_domain_mean}
\LinesNumbered 
\caption{$\mathtt{InfiniteDomainMean}$.}
\KwIn{$D$, $\varepsilon$, $\beta$}
$\tilde{\mathcal{R}}(D) = \mathtt{InfiniteDomainRange}(D,\frac{4\varepsilon}{5},\frac{\beta}{2})$\;
$\tilde{\mu}(D) =\mathrm{ClippedMean}(D,\tilde{\mathcal{R}}(D)) +\mathrm{Lap}\left(5|\tilde{\mathcal{R}}(D)|/(\varepsilon n)\right)$\;
\Return $\tilde{\mu}(D)$\;
\end{algorithm}

With a good $\tilde{\mathcal{R}}(D)$, we can now do mean estimation over an infinite domain. The algorithm is shown in Algorithm~\ref{alg:infinite_domain_mean}.  Its privacy follows from basic composition, while its utility guarantee is as follows:

\begin{theorem}
\label{th:err_infinite_mean}
Given $\varepsilon$, $\beta$, for any $D\in\mathbb{Z}^n$, if
\[n>\frac{c_1}{\varepsilon}\log\left(\mathrm{rad}(D)/\beta\right),\]
where $c_1$ is a universal constant, then with probability at least $1-\beta$, $\mathtt{InfiniteDomainMean}$ returns a $\tilde{\mu}(D)$ such that
\[\left|\tilde{\mu}(D)-\mu(D)\right| = O\left(\frac{\gamma(D)}{\varepsilon n}\log\left(\log\left(\gamma(D)\right)/\beta\right)\right).\]
\end{theorem}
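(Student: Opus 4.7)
The plan is to decompose the total error into two additive pieces: the deterministic clipping bias $\mathrm{ClippedMean}(D,\tilde{\mathcal{R}}(D))-\mu(D)$ and the Laplace noise $Z \sim \mathrm{Lap}(5|\tilde{\mathcal{R}}(D)|/(\varepsilon n))$, and then bound each piece using Theorem~\ref{th:err_infinite_range}. Privacy follows routinely by basic composition: $\mathtt{InfiniteDomainRange}$ is $(4\varepsilon/5)$-DP, and conditioned on its output $\tilde{\mathcal{R}}(D)$, the clipped mean has global sensitivity $|\tilde{\mathcal{R}}(D)|/n$, so the Laplace step is $(\varepsilon/5)$-DP, giving $\varepsilon$-DP overall.

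For utility, I would first invoke Theorem~\ref{th:err_infinite_range} with failure parameter $\beta/2$, which (since $n$ satisfies the stated lower bound) guarantees $|\tilde{\mathcal{R}}(D)|\le 4\gamma(D)$ and $|D\cap\overline{\tilde{\mathcal{R}}(D)}| = O(\log(\log\gamma(D)/\beta)/\varepsilon)$ with probability $\ge 1-\beta/2$. For the Laplace term, a standard tail bound yields $|Z|\le (5|\tilde{\mathcal{R}}(D)|/(\varepsilon n))\log(2/\beta) = O(\gamma(D)\log(1/\beta)/(\varepsilon n))$ with probability $\ge 1-\beta/2$. A union bound then handles both failure events.

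The main obstacle, and the only non-routine step, is bounding the clipping bias. Naively, each clipped point can shift the sum by the entire width $|\tilde{\mathcal{R}}(D)|$, which would give the wrong dependence on $\gamma(D)$. The key observation is that the privatized median $\tilde{X}$ computed inside $\mathtt{InfiniteDomainRange}$ lies in $[X_1,X_n]$ with high probability (this is exactly what the proof of Theorem~\ref{th:err_infinite_range} establishes in inequality~\eqref{eq:th:err_infinite_radius_3}), and $\tilde{X}\in\tilde{\mathcal{R}}(D)$ by the very construction $\tilde{\mathcal{R}}(D)=[\tilde{X}\pm\widetilde{\mathrm{rad}}(D'')]$. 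Hence for every $X_i\in D$, we have $|X_i-\tilde{X}|\le \gamma(D)$, so the clipping distance $|\mathrm{Clip}(X_i,\tilde{\mathcal{R}}(D))-X_i|$ is at most $\gamma(D)$, regardless of the actual width of $\tilde{\mathcal{R}}(D)$.

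Combining these bounds, the clipping bias is at most $\gamma(D)\cdot |D\cap\overline{\tilde{\mathcal{R}}(D)}|/n = O(\gamma(D)\log(\log\gamma(D)/\beta)/(\varepsilon n))$, which dominates the Laplace contribution since $\log(1/\beta)\le \log(\log\gamma(D)/\beta)$. Summing and applying a union bound over the (at most three) high-probability events completes the proof. The lower bound on $n$ is needed only to invoke Theorem~\ref{th:err_infinite_range}; everything downstream is conditional on its success.
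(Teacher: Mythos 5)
Your proof is correct and follows essentially the same route as the paper's: decompose the error into clipping bias plus Laplace noise, invoke Theorem~\ref{th:err_infinite_range} (with parameter $\beta/2$) for the width bound and outlier count, and bound each clipped point's displacement by $\gamma(D)$. The only cosmetic difference is your justification of that per-point bound via the privatized median $\tilde{X}\in[X_1,X_n]$ (which requires peeking into the subroutine's proof), whereas the paper deduces the same fact from the theorem-level guarantee that at least one element of $D$ stays inside $\tilde{\mathcal{R}}(D)$; both arguments are valid.
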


\begin{proof}
There are two sources of errors: the bias caused by the clipping and the Laplace noise.

By Theorem~\ref{th:err_infinite_range} and setting $c_1$ sufficiently large, we have that with probability at least $1-\frac{\beta}{2}$,
\begin{equation}
\label{eq:th:err_infinite_mean_1}
|\tilde{\mathcal{R}}(D)|\leq 4\cdot \gamma(D),
\end{equation}
and there are at most $O\left(\log\left(\log\left(\gamma(D)\right)/\beta\right)/\varepsilon\right)$ elements in $D$ outside $\tilde{\mathcal{R}}(D)$ and at least one inside $\tilde{\mathcal{R}}(D)$.  Thus, clipping each outlier causes at most $\gamma(D)/n$ bias and the total bias is $O\left(\log\left(\log\left(\gamma(D)\right)/\beta\right)\gamma(D)/(\varepsilon n)\right)$.

The Laplace noise can be bounded by plugging (\ref{eq:th:err_infinite_mean_1}) into the tail bound of the Laplace distribution, which yields $O\left(\log\left(1/\beta\right)\gamma(D)/(\varepsilon n)\right)$. 
\end{proof}

Recall from Section~\ref{sec:dp} and~\ref{sec:opt} that for the empirical mean $\mu(D)$, $\mathcal{L}_{\text{in-nbr}}(D)=\Omega(\gamma(D)/n)$  for every $D$. This means that $\mathtt{InfiniteDomainMean}$ is inward-neighborhood optimal with an optimality ratio of $c=O(\log\log(\gamma(D))/\varepsilon)$ for constant $\beta$. Below, we show that this $c$ is worst-case optimal in the finite-domain case.  In particular, it implies that the optimality ratio cannot be independent of $D$.

\begin{theorem}
\label{th:opt_mean}
For the empirical mean $\mu(D)$, given any $\varepsilon$, any integer $N\geq 1$, and any $n>\log\log_2(N)/\varepsilon$, for any $\varepsilon$-DP mechanism $\mathcal{M}:[N]^n\rightarrow\mathbb{R}$, there exists $D\in[N]^n$, such that
\[\mathrm{Err}(\mathcal{M},D)\geq\frac{\gamma(D)}{3\varepsilon n}\log\log_2(N).\]
\end{theorem}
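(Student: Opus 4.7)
The plan is to prove this lower bound by a packing argument based on group privacy. Assume for contradiction that the $\varepsilon$-DP mechanism $\mathcal{M}$ satisfies $\mathrm{Err}(\mathcal{M},D) < \frac{\gamma(D)}{3\varepsilon n}\log\log_2(N)$ for every $D\in[N]^n$. The goal is to exhibit a family $D^{(1)},\ldots,D^{(k)}$ whose high-probability output sets are pairwise disjoint but whose pairwise Hamming distance is small enough that group privacy forces too much overlap in the distributions of $\mathcal{M}(D^{(i)})$.

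Set $L = \log\log_2(N)$ and $c = L/(3\varepsilon)$, and take $t = \lceil 2c\rceil + 1$. For $i=1,\ldots,k$ with $k = \lfloor \log_4(N)\rfloor$, define $D^{(i)}\in[N]^n$ to consist of $t$ copies of $v_i := 4^i$ together with $n-t$ copies of $0$. The hypothesis $n > L/\varepsilon$ guarantees $t\le n$, so the construction is feasible. Then $\mu(D^{(i)}) = tv_i/n$, $\gamma(D^{(i)}) = v_i$, and any pair $D^{(i)},D^{(j)}$ differ in exactly the $t$ spike positions, so their Hamming distance is $t$.

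Under the contradiction hypothesis, the ``good'' interval $G_i := \bigl[\mu(D^{(i)})\pm v_i c/n\bigr] = (v_i/n)\cdot[t-c,\, t+c]$ satisfies $\Pr[\mathcal{M}(D^{(i)})\in G_i] > 2/3$. Because $t > 2c$, each $G_i$ lies in the positive range, and the left endpoint of $G_{i+1}$ equals $4v_i(t-c)/n$, which strictly exceeds the right endpoint $v_i(t+c)/n$ of $G_i$ (this is equivalent to $3(t-c)>t+c$, i.e., $t>2c$). So the $G_i$ are pairwise disjoint. By group privacy over $t$ differing records, $\Pr[\mathcal{M}(D^{(1)})\in G_i] > \tfrac{2}{3}e^{-\varepsilon t}$ for every $i$, and summing over disjoint $G_i$ gives $1 \ge \sum_{i=1}^k \Pr[\mathcal{M}(D^{(1)})\in G_i] > \tfrac{2k}{3}e^{-\varepsilon t}$, hence $k < \tfrac{3}{2}e^{\varepsilon t}$. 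Since $\varepsilon t \le 2L/3 + O(\varepsilon)$, the right-hand side is $O\bigl((\log_2 N)^{2/3}\bigr)$, which contradicts $k = \Theta(\log_2 N)$ for $N$ large enough. For small $N$ the claimed bound is dominated by the trivial inward-neighborhood lower bound $\Omega(\gamma(D)/(\varepsilon n))$ and follows from that.

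The main obstacle is choosing $t$ so that two competing demands are simultaneously met. Disjointness of $G_i$ requires $t\gtrsim c$, otherwise each $G_i$ has relative width $\ge 1$ around its center and the intervals collapse onto one another. On the other hand, $t$ enters the group privacy factor $e^{\varepsilon t}$, which caps the packing size $k$; taking $t$ too large forces $e^{\varepsilon t}\gg \log N$ and precludes a contradiction. The sweet spot $t\asymp L/\varepsilon$ is dictated directly by the $\log\log_2(N)/\varepsilon$ factor in the target bound, and verifying that this choice yields both disjointness (via $t>2c$) and contradiction (via $e^{\varepsilon t}=(\log_2 N)^{O(1)}$ against $k=\Theta(\log_2 N)$) is the crux of the argument.
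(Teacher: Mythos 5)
Your overall strategy (packing plus group privacy) is the right one and is in the same spirit as the paper's proof, but as written the argument does not establish the stated bound for all $N$, only for astronomically large $N$. With your parameters, the Hamming distance between any two packed datasets is $t=\lceil 2c\rceil+1$ with $c=\frac{1}{3\varepsilon}\log\log_2 N$, so the group-privacy factor is $e^{-\varepsilon t}\approx e^{-2}(\log_2 N)^{-2/3}$, and since you compare the total mass of $\mathcal{M}(D^{(1)})$ on the disjoint intervals against $1$, you need $k=\lfloor\log_4 N\rfloor \gtrsim \tfrac32 e^{2}(\log_2 N)^{2/3}$, i.e.\ $\log_2 N$ on the order of $10^4$, before any contradiction appears. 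For every smaller $N$ you defer to ``the trivial $\Omega(\gamma(D)/(\varepsilon n))$ lower bound,'' but that does not close the gap: the theorem asserts the explicit constant $\frac{\gamma(D)}{3\varepsilon n}\log\log_2 N$, which in the deferred range can be several times $\gamma(D)/(\varepsilon n)$ (e.g.\ $\log\log_2 N\approx 9$ near the threshold), and the inward-neighborhood bound you invoke is only $\Theta(\gamma(D)/n)$ with no $1/\varepsilon$ and no matching constant. So the moderate-$N$ regime --- which is really the whole content of the theorem --- is not proved. Two smaller issues: your feasibility claim that $n>\log\log_2(N)/\varepsilon$ forces $t\le n$ can fail when $n$ is barely above $\log\log_2(N)/\varepsilon$ and that quantity is small (e.g.\ $\log\log_2(N)/\varepsilon=1.6$, $n=2$ gives $t=3$), and your disjointness computation is off: with spacing factor $4$ the condition is $4(t-c)>t+c$, i.e.\ $t>5c/3$, not $3(t-c)>t+c$; this slip is harmless since $t>2c$ still suffices.

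The reason the paper's version works for every $N$ while yours does not is structural. The paper packs around the all-zeros baseline $D^{(0)}$: each $D^{(i)}$ is obtained by changing $\frac{1}{\varepsilon}\log\log_2 N$ zeros to $2^i$, the accuracy intervals have radius one third of their centers so factor-$2$ spacing already makes them disjoint (hence $\log_2 N$ datasets rather than $\log_4 N$), and, crucially, the baseline's own output is pinned at $0$ with probability $\ge 2/3$, so the packed mass only has to exceed $1/3$ rather than $1$. Then $\log_2 N\cdot e^{-\log\log_2 N}\cdot\frac23=\frac23>\frac13$ closes the argument exactly, with no ``$N$ large enough.'' If you add such a pinned baseline (or otherwise exploit that one dataset must place most of its mass on a single point well separated from all the intervals) and reduce the number of changed coordinates to $\frac{1}{\varepsilon}\log\log_2 N$, your construction can be repaired to give the claimed constant for all $N$ satisfying the hypothesis.
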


\begin{proof}
We use a packing argument by constructing a sequence of $\log_2(N)+1$ datasets: $D^{(0)}$,$D^{(1)}$,$\dots$, $D^{(\log_{2}(N))}$. $D^{(0)}$ contains all $0$'s. For each $i=1,\dots,\log_2(N)$, $D^{(i)}$ is constructed by changing $\log\log_2(N)/\varepsilon$ number of $0$'s in $D^{(0)}$ to $2^i$. It can be verified that  
\begin{equation}
\label{eq:th:opt_mean_1}
\mu(D^{(i)}) = \frac{2^i}{\varepsilon n}\log\log_2(N).
\end{equation}

We argue that, for any $\varepsilon$-DP mechanism $\mathcal{M}$, there exists at least one $D^{(i)}$ such that
\begin{equation}
\label{eq:th:opt_mean_2}
\mathrm{Err}(\mathcal{M},D^{(i)})\geq \frac{\gamma(D^{(i)})}{3\varepsilon n}\log\log_2(N) =\frac{2^i}{3\varepsilon n}\log\log_2(N) .
\end{equation}
We prove this by contradiction. If (\ref{eq:th:opt_mean_2}) does not hold, then
\begin{align}
\nonumber
\frac{1}{3} \geq & \mathsf{Pr}[\mathcal{M}(D^{(0)})\neq 0]
\\
\geq &\sum_{1\leq i\leq \log_2(N)} \mathsf{Pr}\left[\mathcal{M}(D^{(0)})\in \left(\frac{2}{3}\cdot 2^i\cdot \frac{1}{\varepsilon n} \log\log_2(N),\frac{4}{3} \cdot 2^i \cdot \frac{1}{\varepsilon n} \log\log_2(N)\right)\right]
\label{eq:th:opt_mean_3}
\\
\nonumber
\geq & \sum_{1\leq i\leq \log_2(N)}\left( e^{-\varepsilon \log(\log_2(N))/\varepsilon} \mathsf{Pr}\left[\mathcal{M}(D^{(i)})\in \left(\frac{2}{3}\cdot 2^i\cdot \frac{1}{\varepsilon n} \log\log_2(N),\frac{4}{3} \cdot 2^i \cdot \frac{1}{\varepsilon n} \log\log_2(N)\right)\right]\right)
\\
\label{eq:th:opt_mean_4}
\geq &\log_2(N) \cdot \frac{1}{\log_2(N)}\cdot \frac{2}{3}
\\
\nonumber
=& \frac{2}{3},
\end{align}
which causes a contradiction. (\ref{eq:th:opt_mean_3}) is because the intervals $\left(\frac{2}{3}\cdot 2^i\cdot \frac{1}{\varepsilon n} \log\log_2(N),\frac{4}{3} \cdot 2^i \cdot \frac{1}{\varepsilon n} \log\log_2(N)\right)$ for all $i=1,\dots,\log_2(N)$ are disjoint. (\ref{eq:th:opt_mean_4}) follows from (\ref{eq:th:opt_mean_1}) and the hypothesis.
\end{proof}

\subsection{Quantile Estimation}
\label{sec:quantile}

\begin{algorithm}
\label{alg:infinite_domain_quantile}
\LinesNumbered 
\caption{$\mathtt{InfiniteDomainQuantile}$.}
\KwIn{$D$, $\tau$, $\varepsilon$, $\beta$}
$\tilde{\mathcal{R}}(D) = \mathtt{InfiniteDomainRange}(D,\frac{4\varepsilon}{5},\frac{\beta}{2})$\;
$D' = \mathrm{Clip}\left(D,\tilde{\mathcal{R}}(D)\right)$\;
$\tilde{X}_{\tau} = \mathtt{FiniteDomainQuantile}\left(D',\tau,\tilde{\mathcal{R}}(D),\frac{\varepsilon}{5},\frac{\beta}{2}\right)$\;
\Return $\tilde{X}_{\tau}$\;
\end{algorithm}

Similarly, to find a privatized quantile over an infinite domain, we invoke $\mathtt{FiniteDomainQuantile}$ with $\tilde{\mathcal{R}}(D)$. The algorithm is shown in Algorithm~\ref{alg:infinite_domain_quantile}. Its privacy is straightforward, while achieving $O(\log(\mathrm{rad}(D))/\varepsilon)$ rank error:

\begin{theorem}
\label{th:err_infinite_quantile}
Given $\varepsilon$, $\beta$, for any $D\in \mathbb{Z}^n$ and any $1\leq \tau\leq n$, if
\[n>\frac{c_1}{\varepsilon}\log\left(\mathrm{rad}(D)/\beta\right),\]
where $c_1$ is a universal constant, then with probability at least $1-\beta$, $\mathtt{InfiniteDomainQuantile}$ returns a value $\Tilde{X}_{\tau}$ such that
\[X_{\tau-t} \leq \Tilde{X}_{\tau}\leq X_{\tau+t},\]
where 
\[t = O\left(\frac{1}{\varepsilon}\log\left(\gamma(D)/\beta\right)\right).\]
\end{theorem}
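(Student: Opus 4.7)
The plan is to mirror the analysis of $\mathtt{InfiniteDomainMean}$ in Theorem~\ref{th:err_infinite_mean}: first use $\mathtt{InfiniteDomainRange}$ to obtain a privatized range $\tilde{\mathcal{R}}(D)$ that is short ($|\tilde{\mathcal{R}}(D)|\le 4\gamma(D)$) and covers all but very few elements of $D$, then invoke $\mathtt{FiniteDomainQuantile}$ on the clipped dataset. Privacy is immediate: $\mathtt{InfiniteDomainRange}$ is $(4\varepsilon/5)$-DP, the clipping step is post-processing, and $\mathtt{FiniteDomainQuantile}$ is $(\varepsilon/5)$-DP, so basic composition (Lemma~\ref{lm:basic_composition_dp}) yields $\varepsilon$-DP overall.

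For utility, I would first invoke Theorem~\ref{th:err_infinite_range} with parameters $(4\varepsilon/5,\beta/2)$; picking $c_1$ large enough ensures the sample-size hypothesis feeds through, so with probability at least $1-\beta/2$ one has $|\tilde{\mathcal{R}}(D)|\le 4\gamma(D)$ and
\[q \;:=\; \left|D\cap \overline{\tilde{\mathcal{R}}(D)}\right| \;=\; O\!\left(\frac{1}{\varepsilon}\log(\log\gamma(D)/\beta)\right).\]
The clipped dataset $D'$ now lies in the finite ordered domain $\tilde{\mathcal{R}}(D)\cap\mathbb{Z}$, whose size is at most $4\gamma(D)+1$. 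Applying Lemma~\ref{lm:err_known_quantile} with parameters $(\varepsilon/5,\beta/2)$ then yields, with probability at least $1-\beta/2$, a rank-error bound $X'_{\tau-s}\le \tilde{X}_\tau\le X'_{\tau+s}$ for $s=O(\log(\gamma(D)/\beta)/\varepsilon)$, where $X'_1\le\cdots\le X'_n$ is the sorted clipped dataset.

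The remaining work, and the only delicate part, is to translate the rank guarantee from $D'$ back to $D$. Writing $\tilde{\mathcal{R}}(D)=[l,r]$ and letting $a,b$ denote the numbers of elements of $D$ strictly below $l$ and strictly above $r$ respectively (so $a+b\le q$), clipping only alters ranks inside two blocks of sizes $a$ and $b$ at the two ends, leaving $X'_i=X_i$ for $a<i\le n-b$. Setting $t:=s+q$, a short case analysis on whether $\tau\pm s$ falls in the unchanged middle block or in one of the two clipped blocks, combined with $\tilde{X}_\tau\in[l,r]$, shows $X_{\tau-t}\le \tilde{X}_\tau\le X_{\tau+t}$; the boundary cases use that $\tilde{X}_\tau\le r<X_{n-b+1}$ when $b\ge 1$, while $b=0$ is handled by the footnote convention $X_i=X_n$ for $i>n$ (and symmetrically for the low end). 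Since $q=O(\log\log\gamma(D)/\varepsilon)$ is dominated by $s$, we obtain $t=O(\log(\gamma(D)/\beta)/\varepsilon)$, and a union bound over the two failure events of total probability $\beta$ completes the argument. The main obstacle is precisely this last translation: the finite-domain lemma speaks in the coordinates of $D'$, where up to $q$ elements have been collapsed onto the endpoints $l$ and $r$, and one must use the outlier count from Theorem~\ref{th:err_infinite_range} as the slack needed to absorb that collapse when converting back to the original ranks.
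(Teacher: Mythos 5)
Your proposal is correct and follows essentially the same route as the paper: invoke Theorem~\ref{th:err_infinite_range} for a short range with few outliers, then apply Lemma~\ref{lm:err_known_quantile} on the clipped data over the finite domain $\tilde{\mathcal{R}}(D)\cap\mathbb{Z}$, with a union bound over the two $\beta/2$ failure events. The only difference is that you carry out explicitly the rank translation from $D'$ back to $D$ (absorbing the $q$ clipped elements into the slack $t=s+q$), which the paper compresses into the single remark that ``the clipping does not increase this error asymptotically.''
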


\begin{proof}
By Theorem~\ref{th:err_infinite_range} and setting $c_1$ sufficiently large, with probability at least $1-\frac{\beta}{2}$, $|\tilde{\mathcal{R}}(D)|\leq 4\cdot \gamma(D)$ and $O\left(\log\left(\log\left(\gamma(D)\right)/\beta\right)/\varepsilon\right)$ values are clipped.

Under the condition of $|\tilde{\mathcal{R}}(D)|\leq 4\cdot \gamma(D)$, by Lemma~\ref{lm:err_known_quantile} and setting $c_1$ in the condition of $n$ properly, with probability at least $1-\frac{\beta}{2}$, $\mathtt{FiniteDomainQuantile}$ will only cause $ O\left(\log\left(\gamma(D)/\beta\right)/\varepsilon\right)$ rank error. The clipping does not increase this error asymptotically.
\end{proof}

The rank error of $\mathtt{FiniteDomainQuantile}$ is instance-specific, and worst-case optimal in the finite-domain case, by a reduction from the \textit{interior-point problem}. Here, given a dataset $D\in [N]^n$, we want to return any integer inside $\mathcal{R}(D)$. It has been shown that any $\varepsilon$-DP mechanism for the interior point problem requires $n=\Omega(\log(N)/\varepsilon)$ \cite{beimel2010bounds,bun2015differentially}.  Given a (finite-domain) quantile mechanism with rank error $t$, we would be able to solve the interior-point problem on datasets with $2t$ elements by returning the median.  Thus $\Omega(\log(N)/\varepsilon)$ is also a lower bound on the rank error.

\subsection{Extension to the Real Domain}
\label{sec:real_number_domain}

If $D$ are drawn from $\mathbb{R}$, we can invoke the algorithms above after discretizing $\mathbb{R}$ with a bucket size $\bucket$. This will introduce an additive error of $b$ to each value estimate and an extra $1/b$ factor to each count/rank estimate. This effect is slightly different for each particular problem, as summarized as follows.  We omit the rather straightforward proofs.

\begin{theorem}
\label{th:err_unbounded_radius}
Given $\varepsilon,\beta,\bucket$, for any $D\in \mathbb{R}^n$, with probability at least $1-\beta$, $\mathtt{InfiniteDomainRadius}$ returns a $\widetilde{\mathrm{rad}}(D)$ such that
\[\widetilde{\mathrm{rad}}(D)\leq 2\cdot \mathrm{rad}(D) + 3\bucket,\]
and
\[\left|D\cap\overline{\left[-\widetilde{\mathrm{rad}}(D),\widetilde{\mathrm{rad}}(D)\right]} \right| = O\left(\frac{1}{\varepsilon}\log\left(\log\left(\mathrm{rad}(D)/\bucket\right)/\beta\right)\right).\]
\end{theorem}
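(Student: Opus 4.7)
The plan is to reduce the real-valued case to the integer case of Theorem~\ref{th:err_infinite_radius} via a $b$-bucketization of $\mathbb{R}$. Concretely, I would rescale the SVT query sequence inside $\mathtt{InfiniteDomainRadius}$ from $\mathrm{Count}(D,0),\mathrm{Count}(D,2^0),\mathrm{Count}(D,2^1),\dots$ to $\mathrm{Count}(D,0),\mathrm{Count}(D,b),\mathrm{Count}(D,2b),\mathrm{Count}(D,4b),\dots$ (equivalently, discretize each $X_i$ to its bucket index $\lfloor X_i/b\rfloor$ and run the unchanged integer algorithm). Every query $\mathrm{Count}(\cdot,x)$ still has global sensitivity $1$ no matter what the threshold $x$ is, so the privacy guarantees of $\mathrm{SVT}$ carry over verbatim and $\widetilde{\mathrm{rad}}(D)$ inherits $\varepsilon$-DP by post-processing.

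For utility, I would mirror the case analysis in the proof of Theorem~\ref{th:err_infinite_radius}. The degenerate cases $\mathrm{rad}(D)=0$ and $0<\mathrm{rad}(D)<b$ are handled by the first two queries, giving $\widetilde{\mathrm{rad}}(D)\in\{0,b\}\le 3b$ with no outliers (up to the SVT failure probability). In the main case $\mathrm{rad}(D)\in[2^{j-1}b,\,2^j b)$ for some $j\ge 1$, we have $\mathrm{Count}(D,2^j b)=n$, so applying Lemma~\ref{lm:simple_obs_SVT} with $T=n-\tfrac{6}{\varepsilon}\log(2/\beta)$ and $k_2=j+2$ yields, with probability at least $1-\beta$, a stopping index $\tilde i\le k_2$ and
\[
\mathrm{Count}(D,\widetilde{\mathrm{rad}}(D))\ge n-O\!\left(\tfrac{1}{\varepsilon}\log(k_2/\beta)\right).
\]
Translating back, $\widetilde{\mathrm{rad}}(D)\le 2^j b\le 2\,\mathrm{rad}(D)+2b$, with a further additive $b$ absorbed to cover the fact that a bucket boundary need not coincide with $\mathrm{rad}(D)$; this gives the claimed $\widetilde{\mathrm{rad}}(D)\le 2\,\mathrm{rad}(D)+3b$. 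The outlier bound becomes $O\!\left(\tfrac{1}{\varepsilon}\log(\log(\mathrm{rad}(D)/b)/\beta)\right)$ because now $k_2=O(\log(\mathrm{rad}(D)/b))$ rather than $O(\log\mathrm{rad}(D))$.

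There is no substantive obstacle; the only thing to verify carefully is the bookkeeping of the base cases (handled by the convention $\log x=1$ for $x\le e$) and the extra $b$ slack needed to absorb the fractional part of $\mathrm{rad}(D)/b$. Once those are checked, the argument is a mechanical scaling of the integer-domain proof, which is why the paper states that it can be omitted.
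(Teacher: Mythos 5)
Your proposal is correct and follows exactly the route the paper intends (and omits as ``straightforward''): discretize $\mathbb{R}$ with bucket size $\bucket$, run the integer-domain radius algorithm via the rescaled query sequence, and invoke Lemma~\ref{lm:simple_obs_SVT} with $k_2 = O(\log(\mathrm{rad}(D)/\bucket))$, which yields the additive $3\bucket$ slack and the $1/\bucket$ factor inside the $\log\log$. The only cosmetic imprecision is the claim of ``no outliers'' in the degenerate case $0<\mathrm{rad}(D)<\bucket$ when the SVT stops at $\tilde{i}=1$, where up to $O(\frac{1}{\varepsilon}\log(1/\beta))$ points may fall outside $\{0\}$; this is still within the stated bound under the paper's convention $\log(x)=1$ for $x\le e$, so it is not a gap.
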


\begin{theorem}
\label{th:err_unbounded_range}
Given $\varepsilon,\beta,\bucket$, for any $D\in \mathbb{R}^n$, 
if
\[n>\frac{c_1}{\varepsilon}\log\left(\mathrm{rad}(D)/(\bucket \beta)\right),\]
where $c_1$ is a universal constant, then with probability at least $1-\beta$, $\mathtt{InfiniteDomainRange}$ returns a range $\tilde{\mathcal{R}}(D)$ such that
\[|\tilde{\mathcal{R}}(D)|\leq 4\cdot \gamma(D)+6\bucket ,\]
and
\[\left|D\cap\overline{\tilde{\mathcal{R}}(D)}\right| = O\left(\frac{1}{\varepsilon}\log\left(\log\left(\gamma(D)/\bucket \right)/\beta\right)\right).\]
\end{theorem}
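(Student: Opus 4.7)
The plan is to reduce to the integer-domain Theorem~\ref{th:err_infinite_range} by discretization. Given $D \in \mathbb{R}^n$ and bucket size $b$, define $D_b = \{\lfloor X_1/b\rfloor,\dots,\lfloor X_n/b\rfloor\} \in \mathbb{Z}^n$, run $\mathtt{InfiniteDomainRange}$ on $D_b$ (using the same $\varepsilon,\beta$) to obtain an integer range $[\ell,r]$, and output $\tilde{\mathcal{R}}(D) = [\ell b,(r+1)b]$. Privacy is preserved because the discretization is a data-independent post-processing step applied to each record, so neighboring $D \sim D'$ map to neighboring $D_b \sim D_b'$.

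The first step is to relate the geometry of $D_b$ to that of $D$. From $\lfloor x/b\rfloor \in [x/b - 1, x/b]$ it follows that $\mathrm{rad}(D_b) \le \mathrm{rad}(D)/b + 1$ and $\gamma(D_b) \le \gamma(D)/b + 1$. Hence the hypothesis $n > (c_1/\varepsilon)\log(\mathrm{rad}(D)/(b\beta))$ implies the hypothesis $n > (c_1'/\varepsilon)\log(\mathrm{rad}(D_b)/\beta)$ of Theorem~\ref{th:err_infinite_range} (absorbing the $+1$ into the constant). Applying that theorem, with probability at least $1-\beta$ we have $r-\ell \le 4\gamma(D_b)$ and $|D_b \cap \overline{[\ell,r]}| = O(\varepsilon^{-1}\log(\log(\gamma(D_b))/\beta))$.

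The second step is to translate these guarantees back to $D$. For the width,
\[
|\tilde{\mathcal{R}}(D)| = (r-\ell+1)b \le (4\gamma(D_b)+1)b \le 4\bigl(\gamma(D)/b + 1\bigr)b + b = 4\gamma(D) + 5b \le 4\gamma(D) + 6b.
\]
For the count, observe that $X_i \in [\ell b,(r+1)b]$ if and only if $\lfloor X_i/b\rfloor \in [\ell,r]$, so $|D \cap \overline{\tilde{\mathcal{R}}(D)}| = |D_b \cap \overline{[\ell,r]}|$, and using $\gamma(D_b) \le \gamma(D)/b + 1$ gives $\log\log \gamma(D_b) = O(\log\log(\gamma(D)/b))$, yielding the desired bound.

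There is no real obstacle here; the only thing to be careful about is that the discretization step be treated as post-processing (not part of the privacy analysis), and that the additive $+1$ slack introduced by the floor operation is absorbed cleanly into the constants — once in the sample-size condition (via choosing $c_1$ slightly larger than the $c_1$ of Theorem~\ref{th:err_infinite_range}) and once in the additive $6b$ term of the width bound. Both are routine.
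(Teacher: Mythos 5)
Your proof is correct and follows exactly the route the paper intends: Section~\ref{sec:real_number_domain} obtains Theorem~\ref{th:err_unbounded_range} by discretizing $\mathbb{R}$ with bucket size $\bucket$ and invoking the integer-domain Theorem~\ref{th:err_infinite_range}, and the paper omits this ``rather straightforward'' proof, which you have simply written out, with the $+1$ floor slack absorbed into the constant and the extra $\bucket$ terms. One cosmetic remark: the per-record map $X_i\mapsto\lfloor X_i/\bucket\rfloor$ is a pre-processing step that preserves the neighboring relation (the map back to $[\ell \bucket,(r+1)\bucket]$ is the post-processing), not itself post-processing as you phrase it, but the privacy argument is otherwise sound.
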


\begin{theorem}
\label{th:err_unbounded_mean}
Given $\varepsilon,\beta,\bucket$, for any $D$, if
\[n>\frac{c_1}{\varepsilon}\log\left(\mathrm{rad}(D)/(b\beta)\right),\]
where $c_1$ is a universal constant, then with probability at least $1-\beta$, $\mathtt{InfiniteDomainMean}$ returns a $\tilde{\mu}(D)$ such that
\[\left|\tilde{\mu}(D)-\mu(D)\right| = O\left(\frac{\gamma(D)+b}{\varepsilon n}\log\left(\log\left(\gamma(D)/\bucket\right)/\beta\right)\right).\]
\end{theorem}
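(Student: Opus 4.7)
The plan is to run exactly the same $\mathtt{InfiniteDomainMean}$ procedure, but with every underlying finite-domain subroutine (namely the exponentially growing query sequence inside $\mathtt{InfiniteDomainRadius}$ and the call to $\mathtt{FiniteDomainQuantile}$ inside $\mathtt{InfiniteDomainRange}$) reinterpreted on a $b$-spaced grid of $\mathbb{R}$: SVT is fed the queries $\mathrm{Count}(D,0),\mathrm{Count}(D,b),\mathrm{Count}(D,2b),\mathrm{Count}(D,4b),\dots$, and the median subroutine is run on the $O(\widetilde{\mathrm{rad}}(D)/b)$ grid points inside $[-\widetilde{\mathrm{rad}}(D),\widetilde{\mathrm{rad}}(D)]$. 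Privacy is unchanged, as basic composition and post-processing go through verbatim. The only thing that changes relative to Theorem~\ref{th:err_infinite_mean} is that the grid is now of length $O(\mathrm{rad}(D)/b)$ instead of $O(\mathrm{rad}(D))$, so all finite-domain bounds pick up a $b$ in their denominators.

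The first step is to invoke Theorem~\ref{th:err_unbounded_range}, whose sample-size assumption is exactly the one stated here. This yields, with probability at least $1-\beta/2$, a privatized range $\tilde{\mathcal{R}}(D)$ of width at most $|\tilde{\mathcal{R}}(D)|\le 4\gamma(D)+6b$ that contains all but
\[
k \;=\; O\!\left(\frac{1}{\varepsilon}\log\!\left(\log(\gamma(D)/b)/\beta\right)\right)
\]
elements of $D$, with at least one element strictly inside. The second step is then identical to the proof of Theorem~\ref{th:err_infinite_mean}, splitting the error in the clipped mean estimator into clipping bias and Laplace noise. Each of the at most $k$ outliers changes the empirical mean by at most $|\tilde{\mathcal{R}}(D)|/n$ upon clipping, so the total clipping bias is at most $O((\gamma(D)+b)\,k/n)$. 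The Laplace noise, by the tail bound, is at most $O(|\tilde{\mathcal{R}}(D)|/(\varepsilon n)\cdot \log(1/\beta))=O((\gamma(D)+b)/(\varepsilon n)\cdot\log(1/\beta))$ with probability at least $1-\beta/2$. Adding the two and absorbing $\log(1/\beta)$ into the $k$ bound yields the claimed
\[
|\tilde{\mu}(D)-\mu(D)| \;=\; O\!\left(\frac{\gamma(D)+b}{\varepsilon n}\log\!\left(\log(\gamma(D)/b)/\beta\right)\right).
\]

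There is no real obstacle here, but the one subtlety to highlight is that the $b$-discretization does \emph{not} introduce any extra additive $O(b)$ bias into the output: the clipped mean $\mathrm{ClippedMean}(D,\tilde{\mathcal{R}}(D))$ is still evaluated on the original real-valued points of $D$, and the grid is used only to determine $\tilde{\mathcal{R}}(D)$. Hence $b$ enters the final error bound purely through the width $|\tilde{\mathcal{R}}(D)|$ inherited from Theorem~\ref{th:err_unbounded_range}, which is exactly the $+b$ that appears inside the big-$O$.
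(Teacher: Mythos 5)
Your proposal follows the route the paper intends (the paper omits these proofs as "straightforward"): obtain $\tilde{\mathcal{R}}(D)$ from Theorem~\ref{th:err_unbounded_range} with the $b$-spaced grid, then repeat the clipping-bias-plus-Laplace-noise decomposition from the proof of Theorem~\ref{th:err_infinite_mean}, with the discretization entering only through $|\tilde{\mathcal{R}}(D)|\le 4\gamma(D)+6b$ and the $\log(\gamma(D)/b)$ domain-size factor. Your closing observation --- that $\mathrm{ClippedMean}$ is evaluated on the original real values, so no standalone additive $O(b)$ bias appears --- is exactly what makes the stated bound (with $b$ only inside the $1/(\varepsilon n)$ term) attainable, and is worth stating explicitly as you did.

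One intermediate claim needs repair: clipping an outlier does \emph{not} change the empirical mean by at most $|\tilde{\mathcal{R}}(D)|/n$; an outlier may lie much farther from $\tilde{\mathcal{R}}(D)$ than the width of $\tilde{\mathcal{R}}(D)$, so the range width is not a valid per-outlier bound. The correct argument (the one the paper uses for Theorem~\ref{th:err_infinite_mean}) is that, since at least one element of $D$ lies inside $\tilde{\mathcal{R}}(D)$ --- a fact you already invoke --- every outlier is within distance $\gamma(D)$ of the clipping boundary, so each clipped point shifts the mean by at most $\gamma(D)/n$. This gives total clipping bias $O\left(\gamma(D)\,k/n\right)\le O\left((\gamma(D)+b)\,k/n\right)$, so your final bound is unaffected; only the justification of that one step should be replaced.
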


\begin{theorem}
\label{th:err_unbounded_quantile}
Given $\varepsilon,\beta,\bucket$, for any $D\in \mathbb{R}^n$, if
\[n>\frac{c_1}{\varepsilon}\log\left(\mathrm{rad}(D)/(b\beta)\right),\]
where $c_1$ is a universal constant, then with probability at least $1-\beta$, $\mathtt{InfiniteDomainQuantile}$ returns an $\Tilde{X}_{\tau}$ such that
\[X_{\tau-t}-\bucket \leq \Tilde{X}_{\tau}\leq X_{\tau+t}+\bucket ,\]
where 
\[t = O\left(\frac{1}{\varepsilon}\log\left(\gamma(D)/(\bucket \beta)\right)\right).\]
\end{theorem}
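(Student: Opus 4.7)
The plan is to reduce to Theorem~\ref{th:err_infinite_quantile} by discretization. Let $Y_i = b\lfloor X_i/b\rfloor$ and $D_b = \{Y_1,\dots,Y_n\}$. Then $D_b \subset b\mathbb{Z}$, the ordering is preserved ($X_i \le X_j$ implies $Y_i\le Y_j$), and $Y_i \le X_i < Y_i + b$. After scaling by $1/b$, $D_b/b$ is a dataset in $\mathbb{Z}^n$ with $\mathrm{rad}(D_b/b) \le \mathrm{rad}(D)/b + 1$ and $\gamma(D_b/b) \le \gamma(D)/b + 1$. The ``discretized'' algorithm simply runs $\mathtt{InfiniteDomainQuantile}$ on $D_b/b$ and returns $b$ times the output; privacy is inherited directly from Theorem~\ref{th:err_infinite_quantile} since discretization is a data-independent preprocessing and multiplying by $b$ is post-processing.

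For utility, by Theorem~\ref{th:err_infinite_quantile} applied to $D_b/b$, provided $n > (c_1/\varepsilon)\log(\mathrm{rad}(D_b/b)/\beta)$, which is implied by our hypothesis $n > (c_1/\varepsilon)\log(\mathrm{rad}(D)/(b\beta))$ for a slightly larger universal constant, with probability $1-\beta$ the returned $\tilde{Y}_\tau/b$ satisfies $Y_{\tau-t}/b \le \tilde{Y}_\tau/b \le Y_{\tau+t}/b$ for some $t = O((1/\varepsilon)\log(\gamma(D_b/b)/\beta)) = O((1/\varepsilon)\log(\gamma(D)/(b\beta)))$. Multiplying by $b$ and using $Y_{\tau-t} > X_{\tau-t}-b$ and $Y_{\tau+t} \le X_{\tau+t}$, we obtain
\[ X_{\tau-t} - b \;\le\; \tilde{Y}_\tau \;\le\; X_{\tau+t} \;\le\; X_{\tau+t} + b, \]
which is exactly the claimed guarantee with $\tilde{X}_\tau := \tilde{Y}_\tau$.

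There is essentially no hard step here: the argument is a straightforward bookkeeping exercise relating the scale of the discretized dataset to the continuous one. The only place that needs a bit of care is absorbing the additive $+1$ in $\gamma(D_b/b) \le \gamma(D)/b + 1$ and $\mathrm{rad}(D_b/b) \le \mathrm{rad}(D)/b + 1$ into the $O(\cdot)$, and the similar absorption in the hypothesis on $n$; this is taken care of by the convention $\log(x) = 1$ for $x \le e$ and by choosing $c_1$ appropriately.
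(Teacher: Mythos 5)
Your proposal is correct and follows exactly the route the paper intends for Theorem~\ref{th:err_unbounded_quantile}: the paper omits the proof as a "straightforward" discretization, i.e., snap each value to the grid $b\mathbb{Z}$, invoke Theorem~\ref{th:err_infinite_quantile} on the rescaled integer dataset, and absorb the resulting $+1$ shifts into the $O(\cdot)$ and the constant $c_1$, which is precisely your bookkeeping argument.
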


\section{Statistical Mean Estimation}
\label{sec:mean}

In this section, we consider the statistical mean estimation problem, i.e., given an i.i.d.\ sample $D \sim \mathcal{P}^n$ for an arbitrary, unknown $\mathcal{P}$ over $\mathbb{R}$, we wish to estimate $\mu_{\mathcal{P}}$.  The idea is conceptually simple: We first discrete  $\mathbb{R}$ with an appropriate bucket size $b$; then we invoke the empirical mean estimator over $\mathbb{Z}$.  For the first step, we find a lower bound on the IQR, denoted $\overline{\mathrm{IQR}}$, as the bucket size. For the second step, it turns out that directly invoking the empirical mean estimator in Theorem \ref{th:err_infinite_mean} results in sub-optimal errors in the statistical setting; instead, we shall use a tighter range to do the clipping. 

\subsection{Estimate a Lower Bound for $\mathrm{IQR}$}
\label{sec:lower_bound_IQR}

Prior work under A2 simply uses $b=\sigma_{\min}$ as the bucket size, which would be dominated by the sampling error.  In the absence of $\sigma_{\min}$, we seek to obtain a privatized lower bound of IQR, since $\mathrm{IQR}\le 4\sigma$.  Furthermore, recall   $\gamma\left(2,\frac{3}{4}\right)\leq \mathrm{IQR}$ (Section~\ref{sec:notation}), thus if we randomly draw two values $X$, $X'$ from $\mathcal{P}$, then with probability at least $\frac{1}{4}$, we have 
\[|X-X'|\leq \mathrm{IQR}.\]

Meanwhile, we do not want a bucket size too small.  We thus relate $|X-X'|$ with $\varphi(\cdot)$.

\begin{lemma}
\label{lm:lower_bound_diff}
For any $X,X'\in \mathcal{P}$, with probability at least $1-\frac{1}{8}$, we have
\[\varphi\left(\frac{1}{16}\right)\leq |X-X'|.\]
\end{lemma}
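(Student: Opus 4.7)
The plan is to show the equivalent claim $\Pr[|X - X'| < \varphi(1/16)] \leq 1/8$ by a short integration argument that rests on a single mass bound for short intervals.

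First I would verify that for every $x \in \mathbb{R}$,
\[F(x + \varphi(1/16)) - F(x) \leq 1/16.\]
By the definition of $\varphi(1/16)$ as an infimum, any interval $(a_1, a_2)$ with $a_2 - a_1 < \varphi(1/16)$ satisfies $\int_{a_1}^{a_2} f\,\mathrm{d}x < 1/16$. Applying this to $(x,\, x + \varphi(1/16) - \eta)$ and letting $\eta \to 0^+$, the continuity of $F$ (which holds since $\mathcal{P}$ is continuous) extends the strict bound for lengths strictly below $\varphi(1/16)$ to a non-strict bound at length exactly $\varphi(1/16)$.

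Second, I would condition on $X$ and bound the probability that $X'$ lies within $\varphi(1/16)$ of it. The interval $(X - \varphi(1/16),\, X + \varphi(1/16))$ has length $2\varphi(1/16)$ and splits at $X$ into two sub-intervals of length $\varphi(1/16)$, each with $\mathcal{P}$-mass at most $1/16$ by the first step. Hence
\[\Pr\left[X' \in (X - \varphi(1/16),\, X + \varphi(1/16)) \mid X\right] \leq \tfrac{1}{16} + \tfrac{1}{16} = \tfrac{1}{8}.\]
Taking expectation over $X$ (and noting $\Pr[X = X'] = 0$ since $\mathcal{P}$ is continuous) yields $\Pr[|X - X'| < \varphi(1/16)] \leq 1/8$, giving the claim.

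The main obstacle I anticipate is the boundary case at length exactly $\varphi(1/16)$: the infimum definition of $\varphi$ only supplies a strict mass bound for strictly shorter intervals, so I will have to invoke continuity of the CDF to promote it to a non-strict bound at the critical length. Once this mass lemma is in hand, the remainder is a one-line calculation, with the factor of $2$ arising naturally from splitting the two-sided window $(X - \varphi(1/16),\, X + \varphi(1/16))$ around $X$.
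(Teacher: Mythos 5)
Your proposal is correct and follows essentially the same argument as the paper: condition on $X$, split the window $\left(X-\varphi(1/16),\,X+\varphi(1/16)\right)$ into two intervals of length $\varphi(1/16)$, bound each by mass $1/16$, and integrate over $X$. The only difference is that you carefully justify, via continuity of $F$ and the infimum definition, the mass bound at length exactly $\varphi(1/16)$, which the paper simply asserts.
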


\begin{proof}
\begin{align}
\Pr\left(|X-X'|\leq \varphi\left(\frac{1}{16}\right)\right) &= \int^{-\infty}_{\infty}f(x)\int^{-\infty}_{\infty}f(x') \mathbb{I}\left(|x-x'|\leq \varphi\left(\frac{1}{16}\right)\right) \mathrm{d}x'\mathrm{d}x 
\nonumber
\\
& = \int^{-\infty}_{\infty}f(x)\int^{x-\varphi\left(\frac{1}{16}\right)}_{x+\varphi\left(\frac{1}{16}\right)} f(x') \,\mathrm{d}x'\mathrm{d}x 
\nonumber
\\
&\leq \int^{-\infty}_{\infty}f(x) \left( \frac{1}{16} \cdot 2 \right)\mathrm{d}x
\nonumber
\\
\nonumber
&= \frac{1}{8},
\end{align}
where the inequality is by the definition of $\varphi(1/16)$: Any interval with length $\varphi\left(\frac{1}{16}\right)$ can at most contain a probability mass of $\frac{1}{16}$.
\end{proof}

To amplify the success probability, we randomly group the elements in $D$ into pairs $(X,X')$ and let $G = \{Y_1,Y_2,\dots,Y_{n'}\}$ where $n'=n/2$ and $Y_i = |X-X'|$ for each pair.  Again, suppose $Y_1\le \cdots \le Y_{n'}$.  Then certain quantiles of $G$ will satisfy our needs with probability $1-\beta$.  More precisely:

\begin{lemma}
\label{lm:obs_diff}
Given $\beta$, for any $D\in\mathcal{P}^n$, if $n>c_1\log(1/\beta)$,
where $c_1$ is a universal constant, then with probability at least $1-\beta$, we have,
\[\varphi\left(\frac{1}{16}\right)\leq Y_{\frac{5n'}{32}} \]
and
\[Y_{\frac{7n'}{32}}\leq \mathrm{IQR}.\]
\end{lemma}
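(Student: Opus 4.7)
The plan is to prove the two inequalities separately using Chernoff's inequality on Bernoulli indicators associated with the $Y_i$'s, then combine them via a union bound. Since the original sample $D$ is i.i.d.\ and we pair elements up disjointly, the $Y_1,\dots,Y_{n'}$ are themselves i.i.d.\ copies of $|X-X'|$ with $X,X'\sim \mathcal{P}$ independent, which is exactly the random variable analyzed in Lemma~\ref{lm:lower_bound_diff} and in the preceding $\gamma(2,3/4)\le \mathrm{IQR}$ calculation.

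For the lower bound, let $N_{\mathrm{small}} := |\{i : Y_i < \varphi(1/16)\}|$. By Lemma~\ref{lm:lower_bound_diff}, each indicator $\mathbb{I}(Y_i<\varphi(1/16))$ is Bernoulli with mean at most $1/8$, and they are independent, so $\E[N_{\mathrm{small}}]\le n'/8$. The event $Y_{5n'/32}\ge \varphi(1/16)$ is equivalent to $N_{\mathrm{small}} < 5n'/32 = (1+1/4)\cdot n'/8$, so by Chernoff's inequality (Lemma~\ref{lm:chernoff}) applied with $\rho=1/4$,
\[
\Pr\left[N_{\mathrm{small}}\ge \tfrac{5n'}{32}\right]\le \exp\!\left(-\tfrac{(1/4)^2}{3}\cdot \tfrac{n'}{8}\right) = \exp\!\left(-\tfrac{n'}{384}\right),
\]
which is at most $\beta/2$ as soon as $n\ge c\log(1/\beta)$ for a suitably large constant.

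For the upper bound, recall from Section~\ref{sec:notation} that $\gamma(2,3/4)\le \mathrm{IQR}$, so $\Pr[Y_i\le \mathrm{IQR}]\ge 1/4$. Let $N_{\mathrm{good}}:=|\{i:Y_i\le \mathrm{IQR}\}|$; then $\E[N_{\mathrm{good}}]\ge n'/4$, and $Y_{7n'/32}\le \mathrm{IQR}$ is equivalent to $N_{\mathrm{good}}\ge 7n'/32=(1-1/8)\cdot n'/4$. Chernoff's inequality with $\rho=1/8$ gives
\[
\Pr\left[N_{\mathrm{good}}\le \tfrac{7n'}{32}\right]\le \exp\!\left(-\tfrac{(1/8)^2}{2}\cdot \tfrac{n'}{4}\right) = \exp\!\left(-\tfrac{n'}{512}\right),
\]
again at most $\beta/2$ for $n$ larger than a constant times $\log(1/\beta)$. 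A union bound then yields both inequalities simultaneously with probability at least $1-\beta$, and choosing $c_1$ to be the larger of the two constants completes the proof.

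There is no real obstacle here; the only point worth double-checking is that pairing elements of $D$ into disjoint pairs does produce genuinely independent $Y_i$'s, which is immediate because the pairing is a fixed (data-independent) partition of an i.i.d.\ sample, and that the constants $5/32$ and $7/32$ slot cleanly between the mean bounds $1/8 = 4/32$ and $1/4 = 8/32$ so that the Chernoff deviations $\rho=1/4$ and $\rho=1/8$ are both of constant size.
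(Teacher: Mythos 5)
Your proposal is correct and follows essentially the same route as the paper's proof: bound $\Pr[Y_i<\varphi(1/16)]\le 1/8$ via Lemma~\ref{lm:lower_bound_diff} and $\Pr[Y_i\le\mathrm{IQR}]\ge 1/4$ via $\gamma(2,3/4)\le\mathrm{IQR}$, then apply Chernoff's inequality to the corresponding counts and take a union bound. You merely make explicit the constants ($\rho=1/4$, $\rho=1/8$) and the translation between order statistics and counts that the paper leaves implicit.
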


\begin{proof}
First, by Lemma~\ref{lm:lower_bound_diff}, we have
\[\E\left[\left| G\cap\left[0,\varphi\left(\frac{1}{16}\right)\right]\right|\right]\leq \frac{n'}{8}.\]
And similarly,
\[\E\left[\left| G\cap\left[0,\mathrm{IQR}\right]\right|\right]\geq \frac{n'}{4}.\]
Then both parts of the conclusion follow from Chernoff's inequality with a sufficiently large $c_1$.
\end{proof}

Therefore, we can find a quantile between $Y_{\frac{5n'}{32}}$ and $Y_{\frac{7n'}{32}}$, say $Y_{\frac{3n'}{16}}$, as $\overline{\mathrm{IQR}}$. However, we cannot use $\mathtt{InfiniteDomainQuantile}$ here as we have not discretized $\mathbb{R}$ yet.  To get out of this circular dependency, we obverse that we do not need a $\tilde{Y}_{\frac{3n'}{16}}$ with a small rank error; instead, a rough constant-factor approximation will do.  Thus, the idea is to run two instances of SVT, one with increasing thresholds and one with decreasing thresholds, as detailed in Algorithm \ref{alg:bucket_find}.

\begin{algorithm}
\LinesNumbered 
\label{alg:bucket_find}
\caption{$\mathtt{EstimateIQRLowerBound}$.}
\KwIn{$D$, $\varepsilon$, $\beta$}
$n'=\frac{n}{2}$\;
Construct $G$ from $D$\;
$\tilde{i} = \mathrm{SVT}\left(\frac{3n'}{16},\frac{\varepsilon}{2},  \mathrm{Count}(G,2^0), \mathrm{Count}(G,2^1),\mathrm{Count}(G,2^2),\dots\right)$\;
$\tilde{j} = \mathrm{SVT}\left(-\frac{3n'}{16},\frac{\varepsilon}{2}, - \mathrm{Count}(G,2^0), -\mathrm{Count}(G,2^{-1}),-\mathrm{Count}(G,2^{-2}),\dots\right)$\;
\uIf{$\tilde{i}>1$}{
    $\overline{\mathrm{IQR}} = 2^{\tilde{i}-2}$\;
}
\Else{
    $\overline{\mathrm{IQR}} = 2^{-\tilde{j}}$\;
}
\Return $\overline{\mathrm{IQR}}$\;
\end{algorithm}

The privacy of  $\mathtt{EstimateIQRLowerBound}$ is straightforward; we analyze its utility below:

\begin{theorem}
\label{th:err_bucket}
Given $\varepsilon$, $\beta$, for any $D\sim \mathcal{P}^n$, if
\[n>\frac{c_1}{\varepsilon}\log\log{1 \over \varphi(1/16)}+\frac{c_2}{\varepsilon}\log\log\left(\mathrm{IQR}\right)+\frac{c_3}{\varepsilon}\log(1/\beta),\]
where $c_1$, $c_2$, $c_3$ are universal constants, then with probability at least $1-\beta$, $\mathtt{EstimateIQRLowerBound}$ returns an $\overline{\mathrm{IQR}}$ such that,
\[\frac{1}{4}\cdot \varphi\left(\frac{1}{16}\right) \leq \overline{\mathrm{IQR}} \leq \mathrm{IQR}.\]
\end{theorem}

\begin{proof}
First, by Lemma~\ref{lm:obs_diff} and setting $c_3$ sufficiently large, with probability at least $1-\frac{\beta}{5}$,
\begin{equation}
\label{eq:th:err_bucket_1}
\varphi\left(\frac{1}{16}\right)\leq Y_{\frac{5n'}{32}}
\end{equation}
and
\begin{equation}
\label{eq:th:err_bucket_2}
Y_{\frac{7n'}{32}}\leq \mathrm{IQR}.
\end{equation}

Consider the following three cases: (1) $Y_{\frac{7n'}{32}}\leq 1$, (2) $Y_{\frac{5n'}{32}}\geq 1$, and (3) $Y_{\frac{5n'}{32}}<1<Y_{\frac{7n'}{32}}$. For case (1), by Lemma~\ref{lm:simple_obs_SVT} and by setting a large $c_3$, we have with probability at least $1-\frac{\beta}{5}$, the first $\mathrm{SVT}$ instance will stop at the first query. i.e., $\tilde{i} = 1$.
Then, by Lemma~\ref{lm:simple_obs_SVT}, (\ref{eq:th:err_bucket_1}), and setting $c_1$ and $c_3$ large enough, we have with probability at least $1-\frac{2\beta}{5}$, the second $\mathrm{SVT}$ instance will stop at the $\tilde{j}$-th query such that
\begin{equation*}
\frac{1}{2}\cdot Y_{\frac{5n'}{32}}\leq 2^{-\tilde{j}+1} \leq Y_{\frac{7n'}{32}}, \end{equation*}
which, by (\ref{eq:th:err_bucket_1}) and (\ref{eq:th:err_bucket_2}), implies 
\[\frac{1}{4}\cdot \varphi\left(\frac{1}{16}\right)\leq 2^{-\tilde{j}} \leq \frac{1}{2}\cdot \mathrm{IQR}.\]
In this case, the algorithm returns $2^{-\tilde{j}+1}$, thus the conclusion follows.

For case (2), similarly, we can derive, with probability at least $1-\frac{2\beta}{5}$, 
\begin{equation*}
Y_{\frac{5n'}{32}}\leq 2^{\tilde{i}-1} \leq 2\cdot Y_{\frac{7n'}{32}}\Rightarrow \frac{1}{2}\cdot \varphi\left(\frac{1}{16}\right)\leq 2^{\tilde{i}-2} \leq \mathrm{IQR},
\end{equation*}
and with probability at least $1-\frac{\beta}{5}$, $\tilde{j} = 1$. In this case, if $\tilde{i}\neq 1$, the algorithm will return $2^{\tilde{i}-2}$, which implies the target bound. If $\tilde{i}=1$, the target bound still holds since $2^{\tilde{i}-2} = 2^{-\tilde{j}}$.

For case (3), we have with probability at least $1-\frac{4\beta}{5}$, 
\[\frac{1}{4}\cdot \varphi\left(\frac{1}{16}\right)\leq 2^{-\tilde{j}} \leq \frac{1}{2}\cdot \mathrm{IQR},\]
and
\[\frac{1}{2}\cdot \varphi\left(\frac{1}{16}\right)\leq 2^{\tilde{i}-2} \leq \mathrm{IQR}.\]
Thus the conclusion holds no matter whether $\tilde{i}=1$ or not.
\end{proof}

\subsection{General Algorithm and Error Analysis}
\label{sec:mean_alg}

We mentioned that directly invoking $\mathtt{InfiniteDomainMean}$ over $D$, even with a good bucket size, results sub-optimal errors in the statistical setting with respect to the dependency on $\varepsilon$.  Here we give an intuitive explanation.  Recall that in $\mathtt{InfiniteDomainMean}$, we find a privatized range $\tilde{\mathcal{R}}(D)$ and use it with the clipped mean estimator. The error comes from two sources: (1) There are $\tilde{O}(1/\varepsilon)$ clipped outliers, each contributing $\gamma(D)/n$ bias.  (2) The Laplace noise is proportional to $|\tilde{\mathcal{R}}(D)|/(\varepsilon n) = O(\gamma(D)/(\varepsilon n)$.  One should thus match the two parts of errors for an optimal overall error bound.  In the empirical setting, as $D$ is arbitrary, simply using $\gamma(D)/n$ as an upper bound on the bias from clipping each outlier is already the best one can do.  In the statistical setting, however, since $D$ is an i.i.d.\ sample, this upper bound is too pessimistic.

Therefore, in the statistical setting, we try to use a tighter $\tilde{\mathcal{R}}(D)$ to perform more aggressive clipping.  The idea is to sub-sample $m$ elements from $D$ and obtain a privatized range on the sample $D'$, denoted $\tilde{\mathcal{R}}(D')$.  A smaller $m$ corresponds to more aggressive clipping, which increases the bias but reduces the noise. The optimal choice of $m$ will depend on $\mathcal{P}$, which is not possible for a universal estimator. Fortunately and somehow amazingly, $m=\varepsilon n$ turns out to be a choice that is good enough, and here is the intuition: By Theorem~\ref{th:sampling_amplification}, the privacy budget on finding $\tilde{\mathcal{R}}(D')$ can be amplified to $\varepsilon' \approx \varepsilon n/m$. Therefore, there are $\tilde{O}(1/\varepsilon') =\tilde{O}(m/(\varepsilon n))$ outliers in $D'$ outside $\tilde{\mathcal{R}}(D')$.  However, there is essentially no room for improvement when the number of outliers in $D'$ is less than $1$, i.e., it is sufficient to set $m\ge \varepsilon n$. When $m\ge \varepsilon n$, the number of outliers in $D$ is roughly $\tilde{O}(m/(\varepsilon n)) \cdot n/m = \tilde{O}(1/\varepsilon n)$, which is fixed, while a smaller $m$ reduces $|\tilde{\mathcal{R}}(D')|$.

\begin{algorithm}
\LinesNumbered 
\label{alg:mean}
\caption{$\mathtt{EstimateMean}$.}
\KwIn{$D$, $\varepsilon$, $\beta$}
$\overline{\mathrm{IQR}}=\mathtt{EstimateIQRLowerBound}(D,\frac{\varepsilon}{8},\frac{\beta}{9})$\;
Let $D'$ be a sample of $\varepsilon n$ values from $D$\;
$\varepsilon' = \log\left(\frac{e^{\varepsilon}-1}{\varepsilon}+1\right)$\;
$\tilde{\mathcal{R}}(D') = \mathtt{InfiniteDomainRange}(D',\frac{3\varepsilon'}{4},\frac{\beta}{9})$ with $b=\overline{\mathrm{IQR}}$\;
$\tilde{\mu} =\mathrm{ClippedMean}\left(D,\tilde{\mathcal{R}}(D')\right) +\mathrm{Lap}\left(8|\tilde{\mathcal{R}}(D')|/(\varepsilon n)\right)$\;
\Return $\tilde{\mu}$\;
\end{algorithm}

With the intuition above, we present our statistical mean estimator, as shown in Algorithm~\ref{alg:mean}.  Its privacy follows from Theorem~\ref{th:sampling_amplification} and basic composition.  Before analyzing its error, we first state a standard result relating $\mathcal{P}$ with its truncated version:

\begin{lemma}
\label{lm:obs_truncated_distribution}
Let $X\sim \mathcal{P}$ and $\xi\geq 0$, and let $\bar{X}$ be the following random variable:
\[\bar{X} = \begin{cases}
\mu - \xi, & \text{if }X<\mu - \xi;
\\
X, & \text{if }\mu - \xi \leq X\leq \mu +\xi;
\\
\mu + \xi, & \text{if }X>\mu +\xi.
\end{cases}\]
Let $\bar{\mu}$ and $\bar{\sigma}^2$ denote the mean and variance of $\bar{X}$. Then,
\[\bar{\sigma} \leq \sigma,\]
and
\[\mu-\bar{\mu} = \E[X<\mu-\xi]+\E[X>\mu+\xi].\]
\end{lemma}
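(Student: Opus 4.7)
The plan is to verify each claim directly from the definition of the clipping $\bar{X} = \mathrm{Clip}(X, [\mu-\xi, \mu+\xi])$, which is a well-defined map because $\mu \in [\mu-\xi, \mu+\xi]$.

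For the mean identity, I would partition the sample space into the three regions on which the piecewise definition of $\bar{X}$ is constant-or-identity. Writing
\[
\bar{\mu} = (\mu-\xi)\Pr[X<\mu-\xi] + \int_{\mu-\xi}^{\mu+\xi} x f(x)\,\mathrm{d}x + (\mu+\xi)\Pr[X>\mu+\xi],
\]
and similarly expanding $\mu = \int_{-\infty}^{\mu-\xi} x f(x)\,\mathrm{d}x + \int_{\mu-\xi}^{\mu+\xi} x f(x)\,\mathrm{d}x + \int_{\mu+\xi}^{\infty} x f(x)\,\mathrm{d}x$, the middle integrals cancel when subtracting. What remains is
\[
\mu - \bar{\mu} = \int_{-\infty}^{\mu-\xi} (x-(\mu-\xi)) f(x)\,\mathrm{d}x + \int_{\mu+\xi}^{\infty} (x-(\mu+\xi)) f(x)\,\mathrm{d}x,
\]
which, by the definition of $\E[X\lessgtr x]$ in Section~\ref{sec:notation}, is exactly $\E[X<\mu-\xi] + \E[X>\mu+\xi]$. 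Note the signs work out because the first summand is non-positive and the second is non-negative, matching the signs of the two $\E[X\lessgtr \cdot]$ terms.

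For the variance inequality, I would use the standard fact that variance is the minimizer of $c \mapsto \E[(\bar{X}-c)^2]$, so in particular $\bar{\sigma}^2 \leq \E[(\bar{X}-\mu)^2]$. Then I would observe that clipping to an interval containing $\mu$ is a $1$-Lipschitz map with $\mu$ as a fixed point, hence $|\bar{X}-\mu| \leq |X-\mu|$ pointwise. Taking expectations gives $\E[(\bar{X}-\mu)^2] \leq \E[(X-\mu)^2] = \sigma^2$, and combining yields $\bar{\sigma} \leq \sigma$.

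There is really no hard step here; the only thing to be careful about is sign-tracking in the mean identity (the conclusion is an equation among signed quantities, not an inequality on magnitudes) and the use of $\mu \in [\mu-\xi,\mu+\xi]$, which is exactly what makes clipping non-expansive around $\mu$.
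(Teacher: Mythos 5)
Your proof is correct. The paper gives no proof of Lemma~\ref{lm:obs_truncated_distribution} at all (it is invoked as ``a standard result''), and your verification is precisely the standard one the authors had in mind: for the mean identity, the middle integrals cancel and the two leftover tail integrals are, by the paper's definition of $\E[X\lessgtr x]$, exactly $\E[X<\mu-\xi]$ and $\E[X>\mu+\xi]$ with the signs intact; for the variance, $\bar{\sigma}^2=\min_c \E[(\bar{X}-c)^2]\le \E[(\bar{X}-\mu)^2]\le \E[(X-\mu)^2]=\sigma^2$, using that clipping to an interval containing $\mu$ is $1$-Lipschitz with $\mu$ as a fixed point. One phrasing nit: the mean, not the variance, is the minimizer of $c\mapsto \E[(\bar{X}-c)^2]$ --- the variance is the minimum value, which is what your inequality actually uses.
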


We are now ready to analyze the error of $\mathtt{EstimateMean}$. 

\begin{theorem}
\label{th:err_mean}
Given $\varepsilon$, $\beta$, for any $D\sim \mathcal{P}^n$, if 
\begin{align*}
n>&\frac{c_1}{\varepsilon}\log\log{1 \over \varphi(1/16)}+\frac{c_2}{\varepsilon}\log\log\left(\mathrm{IQR}\right)+\frac{c_3}{\varepsilon}\log{1 \over \beta} +\frac{c_4}{\varepsilon}\log{|\mu|+\sigma+\gamma(\varepsilon n,\beta/9) \over \varphi(1/16)},
\end{align*}
where $c_1$, $c_2$, $c_3$, and $c_4$ are universal constants, then with probability at least $1-\beta$, $\mathtt{EstimateMean}$ returns a value $\tilde{\mu}$ such that
\begin{align}
\nonumber
|\mu-\tilde{\mu}|
=& O\left(\min_{\xi\geq10\cdot\gamma\left(\varepsilon n,\frac{\beta}{9}\right)+2\sigma} \left(\left|\E\left[X<\mu-\xi\right]+\E\left[X>\mu+\xi\right]\right|+\frac{\xi}{\varepsilon n}\log\left({1\over \beta}\log{\gamma(\varepsilon n,\beta/9) \over \varphi(1/16)}\right)\right)+\sigma\sqrt{\frac{\log(1/\beta)}{n}}\right).
\label{eq:th:err_mean_1}
\end{align}
\end{theorem}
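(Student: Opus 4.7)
I would first check $\varepsilon$-DP by basic composition: $\mathtt{EstimateIQRLowerBound}$ uses budget $\varepsilon/8$ (Theorem~\ref{th:err_budget}); $\mathtt{InfiniteDomainRange}$ runs on the subsample $D'$ of rate $\varepsilon$ with budget $3\varepsilon'/4$, which by Theorem~\ref{th:sampling_amplification} and the stated choice of $\varepsilon'$ amplifies to $3\varepsilon/4$-DP on $D$; and the Laplace step uses $\varepsilon/8$, treating $\tilde{\mathcal{R}}(D')$ as post-processing. For the utility analysis, I would condition on a union of five events, each of probability at least $1-\beta/9$: (i) $\tfrac{1}{4}\varphi(1/16)\leq\overline{\mathrm{IQR}}\leq\mathrm{IQR}\leq 4\sigma$ by Theorem~\ref{th:err_budget}; (ii) $\gamma(D')\leq\gamma(\varepsilon n,\beta/9)$ by definition of the statistical width; (iii) $\mathrm{rad}(D')=O(|\mu|+\sigma+\gamma(\varepsilon n,\beta/9))$ by Chebyshev, under which the precondition of Theorem~\ref{th:err_unbounded_range} becomes the sample-size lower bound in the theorem statement; (iv) by Theorem~\ref{th:err_unbounded_range} applied with bucket size $\overline{\mathrm{IQR}}$ and amplified budget $\Theta(1)$, $|\tilde{\mathcal{R}}(D')|\leq 4\gamma(D')+6\overline{\mathrm{IQR}}\leq 4\gamma(\varepsilon n,\beta/9)+24\sigma$ while at most $O(\log\log(\gamma(\varepsilon n,\beta/9)/\varphi(1/16)))$ points of $D'$ fall outside $[L,R]:=\tilde{\mathcal{R}}(D')$; and (v) the Laplace tail bound $|\mathrm{Lap}(8|\tilde{\mathcal{R}}(D')|/(\varepsilon n))|=O(|\tilde{\mathcal{R}}(D')|\log(1/\beta)/(\varepsilon n))$.

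\textbf{Error decomposition.} The plan is to write $\tilde{\mu}-\mu=(\mathrm{ClippedMean}(D,[L,R])-\bar{\mu}_{[L,R]})+(\bar{\mu}_{[L,R]}-\mu)+\mathrm{Lap}$, where $\bar{\mu}_{[L,R]}:=\E_{X\sim\mathcal{P}}[\mathrm{Clip}(X,[L,R])]$. The first (sample-concentration) term is handled by conditioning on $D'$ and applying Bernstein (Lemma~\ref{lm:bernstein}) to the $(1-\varepsilon)n$ independent points of $D\setminus D'$; combined with $\mathrm{Var}(\mathrm{Clip}(X,[L,R]))\leq\sigma^2$ (Lemma~\ref{lm:obs_truncated_distribution}), this yields $O(\sigma\sqrt{\log(1/\beta)/n})$, matching the sampling-error summand. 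For any admissible $\xi\geq 10\gamma(\varepsilon n,\beta/9)+2\sigma$, one has $|[L,R]|=O(\xi)$, so the Laplace term is $O(\xi\log(1/\beta)/(\varepsilon n))$. The population-bias term $\bar{\mu}_{[L,R]}-\mu=\E[X<L]+\E[X>R]$ is compared to the ``ideal'' symmetric bias by splitting each tail integral at $\mu\mp\xi$: this rewrites the term as $\E[X<\mu-\xi]+\E[X>\mu+\xi]$ plus corrections of the form $(\mu-\xi-L)\Pr[X<\mu-\xi]$ and $\int_{\mu-\xi}^{L}(x-L)f(x)\,\mathrm{d}x$ together with their right-tail analogues, each of size $O(\xi)$ times a tail probability. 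The tail probabilities are controlled via the $O(\log\log(\gamma/\varphi(1/16)))$ bound on $|D'\setminus[L,R]|$ together with uniform concentration over intervals, contributing an additional $O(\xi\log\log(\gamma(\varepsilon n,\beta/9)/\varphi(1/16))/(\varepsilon n))$, which combines with the Laplace tail into the noise summand $\tfrac{\xi}{\varepsilon n}\log\bigl(\tfrac{1}{\beta}\log\tfrac{\gamma(\varepsilon n,\beta/9)}{\varphi(1/16)}\bigr)$. Taking the infimum over admissible $\xi$ yields the stated bound.

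\textbf{Main obstacle.} The subtlest step is the population-bias term: because $[L,R]$ is centered at a privatized sample median of $D'$ rather than at $\mu$, neither inclusion $[L,R]\supseteq[\mu-\xi,\mu+\xi]$ nor its reverse holds uniformly over admissible $\xi$, so a direct monotonicity comparison against the symmetric $\xi$-clipped bias fails. The fix is the four-piece tail decomposition above, which exploits the fact that any misalignment between $[L,R]$ and $[\mu-\xi,\mu+\xi]$ has magnitude $O(|[L,R]|)=O(\xi)$, and is multiplied by a tail probability small enough (from the $\log\log$ outlier bound on $D'$ and uniform convergence over intervals) to be absorbed into the noise budget rather than inflating the bias. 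A secondary technicality is the dependence of $[L,R]$ on $D'\subset D$; isolating the $\varepsilon n$ in-sample contributions (worth at most $\varepsilon|[L,R]|$, which is dominated by the Laplace scale) and applying Bernstein to the independent remainder handles it cleanly.
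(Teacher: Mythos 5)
Your overall architecture matches the paper's (privacy accounting, bucket size from $\mathtt{EstimateIQRLowerBound}$, range from the rate-$\varepsilon$ subsample, clipped mean plus Laplace), and your population-level decomposition around $\bar{\mu}_{[L,R]}$ could be made to work, but two steps are wrong or underspecified as written. First, the in-sample contribution: bounding the $\varepsilon n$ points of $D'$ crudely by $\varepsilon\,|[L,R]|$ and declaring this ``dominated by the Laplace scale'' $8|[L,R]|/(\varepsilon n)$ requires $\varepsilon^2 n=O(1)$; the theorem is claimed for all $\varepsilon\in(0,1)$, and for constant $\varepsilon$ and large $n$ your bound is $\Theta(\gamma(\varepsilon n,\beta/9))$, far above the stated error. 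Relatedly, your ``main obstacle'' claim is off: under the good events the inclusion $[L,R]\subseteq[\mu-\xi,\mu+\xi]$ \emph{does} hold for every admissible $\xi\geq 10\gamma(\varepsilon n,\beta/9)+2\sigma$, because $|[L,R]|\leq 10\gamma(\varepsilon n,\beta/9)$ and $[L,R]$ must contain at least one point of $D'\cap[\mu\pm 2\sigma]$ (constant fraction of $D'$ lies there, only $O(\log\log)$ points are outside $[L,R]$). The paper leans on exactly this containment: it clips first at the deterministic interval $[\mu\pm\xi]$ (so Bernstein applies to all of $D$ with no dependence on $D'$), equates $\mathrm{ClippedMean}(D,[L,R])=\mathrm{ClippedMean}(\bar{D},[L,R])$, and charges the remaining discrepancy only to points of $D$ outside $[L,R]$ at cost $2\xi/n$ each. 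Your proposal needs this (or an equivalent) repair to handle the $D'$ points.

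Second, the tail masses $\Pr[X<L]$ and $\Pr[X>R]$. You want them of order $\frac{1}{\varepsilon n}\log\bigl(\frac{1}{\beta}\log\frac{\gamma(\varepsilon n,\beta/9)}{\varphi(1/16)}\bigr)$ from the $O(\log\log)$ outlier count in $D'$ ``together with uniform concentration over intervals,'' but a DKW/VC-type uniform bound only gives additive error $\tilde{O}(1/\sqrt{\varepsilon n})$, and relative-error uniform bounds cost an extra $\log(\varepsilon n)$ factor; either way you do not obtain the stated noise term $\frac{\xi}{\varepsilon n}\log\bigl(\frac{1}{\beta}\log\frac{\gamma}{\varphi(1/16)}\bigr)$, only a weaker one with $\log(\varepsilon n)$. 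The paper avoids uniformity entirely: it fixes $\zeta=\Theta\bigl(\frac{1}{\varepsilon n}\log\bigl(\frac{1}{\beta}\log\frac{\gamma(\varepsilon n,\beta/9)}{\varphi(1/16)}\bigr)\bigr)$, applies plain Chernoff to the counts of $D'$ in the two \emph{fixed} tails $(-\infty,F^{-1}(\zeta)]$ and $[F^{-1}(1-\zeta),\infty)$ (each with expectation $\Theta(\log(\cdot))$), and concludes $[F^{-1}(\zeta),F^{-1}(1-\zeta)]\subseteq\tilde{\mathcal{R}}(D')$; this simultaneously bounds both tail masses by $\zeta$ and, via one more Chernoff bound on $D$, bounds $|D\cap\overline{\tilde{\mathcal{R}}(D')}|$ by $O(\frac{1}{\varepsilon}\log(\cdot))$, which is what feeds the clipping-bias term. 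An equivalent fix in your language: $L$ is at most the $(k{+}1)$-st order statistic of $D'$ with $k=O(\log(\cdot))$, and the population mass below that order statistic is $O((k+\log(1/\beta))/(\varepsilon n))$ with high probability. With these two repairs your argument closes, and it then essentially coincides with the paper's proof.
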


We first explain each term in the theorem before presenting its proof.  The first two terms in the requirement of $n$ are from finding the bucket size, and the last one is for estimating $\tilde{\mathcal{R}}(D')$. In the error bound, all the terms in the $\min_{\xi}$ are due to privacy, while the last term is the sampling error.  We would like to emphasize that although the requirement on $n$ and the error bound depend on $\mathcal{P}$ (they have to), the algorithm does not need any \textit{a priori} assumptions on $\mathcal{P}$.  Furthermore, some of the dependencies can be improved if certain assumptions are made on $\mathcal{P}$.  For instance, if $\sigma_{\mathrm{min}}$ is given, then there is no need to find a bucket size and the first two terms in the requirement on $n$ will disappear, while the $\varphi\left(\frac{1}{16}\right)$ in both the requirement on $n$ and the error bound will be replaced by $\sigma_{\min}$. 

\begin{proof}
For convenience, denote $D' = \{X_1',\dots,X_{\varepsilon n}'\}$, where $X'_1 \le \cdots \le X_{\varepsilon n}'$. 

First, by Theorem~\ref{th:err_bucket} and setting $c_1$, $c_2$, and $c_3$ large enough, we have with probability at least $1-\frac{\beta}{9}$, 
\begin{equation}
\label{eq:th:err_mean_1}
\frac{1}{4}\cdot \varphi\left(\frac{1}{16}\right) \leq \bucket\leq \mathrm{IQR}.
\end{equation}

By the definition of $\gamma(\varepsilon n,\beta/9)$, we have with probability at least $1-\frac{\beta}{9}$,
\begin{equation}
\label{eq:th:err_mean_2}
\gamma(D')\leq \gamma\left(\varepsilon n,\frac{\beta}{9}\right).
\end{equation}

Consider any $X\in\mathcal{P}$. Since with probability at least $\frac{3}{4}$, $X\in[\mu-2\sigma,\mu+2\sigma]$, by setting $c_3$ large enough, with probablity at least $1-\frac{\beta}{9}$, we have
\begin{equation}
\label{eq:th:err_mean_3}
\left|D'\cap [\mu-2\sigma,\mu+2\sigma]\right| = \Omega(\varepsilon n).
\end{equation}
Combining with (\ref{eq:th:err_mean_2}), we have
\begin{equation}
\label{eq:th:err_mean_4}
\mathrm{rad}(D')\leq |\mu|+2\sigma+\mathrm{rad}\left(\varepsilon n,\frac{\beta}{9}\right).
\end{equation}

Recall from Section~\ref{sec:notation} that $\mathrm{IQR}\leq \gamma\left(\log_{\frac{4}{3}}(2/\beta),\beta\right)$. Thus by setting $c_3$ large enough, we have
\begin{equation}
\label{eq:th:err_mean_5}
\mathrm{IQR}\leq \gamma\left(\varepsilon n,\frac{\beta}{9}\right).
\end{equation}

Then, by (\ref{eq:th:err_mean_4}), Theorem~\ref{th:err_unbounded_range} and setting $c_3$ and $c_4$ large enough, we have with probability at least $1-\frac{\beta}{9}$,
\begin{equation}
\label{eq:th:err_mean_6}
|\tilde{\mathcal{R}}(D')|\leq 4\cdot \gamma(D')+6\bucket \leq 10\cdot \gamma\left(\varepsilon n,\frac{\beta}{9}\right),
\end{equation}
where the last inequality is by (\ref{eq:th:err_mean_1}), (\ref{eq:th:err_mean_2}), and (\ref{eq:th:err_mean_5}); and
\begin{align}
\left|D'\cap \overline{\tilde{\mathcal{R}}(D')}\right| = O\left(\frac{1}{\varepsilon'}\log\left({1\over \beta}\log {\gamma(D') \over \bucket}\right)\right)
\leq  c_5 \log\left({1\over \beta} \log{\gamma(\varepsilon n, \beta/9) \over \varphi(1/16)}\right),
\label{eq:th:err_mean_7}
\end{align}
for some constant $c_5$ large enough, where the second inequality is by (\ref{eq:th:err_mean_1}) and (\ref{eq:th:err_mean_2}).

Combining (\ref{eq:th:err_mean_3}) and (\ref{eq:th:err_mean_7}) and setting $c_3$ and $c_4$ large enough, we have
\begin{equation}
\label{eq:th:err_mean_8}
\left|D'\cap [\mu-2\sigma, \mu+2\sigma]\cap \tilde{\mathcal{R}}(D')\right| \ge 1,
\end{equation}
so $[\mu-2\sigma, \mu+2\sigma]$ and $\tilde{\mathcal{R}}(D')$ overlap.

Next, define $\zeta = \frac{2c_5}{\varepsilon n}\log\left({10 \over \beta} \log{\gamma(\varepsilon n,\beta/9) \over \varphi(1/16)}\right)$. By setting $c_3$ and $c_4$ large enough, we can ensure $\zeta<0.5$. Now we consider the interval $(-\infty,F^{-1}(\zeta)]$. By Chernoff's inequality, with probability at least $1-\frac{\beta}{9}$,
\begin{equation}
\label{eq:th:err_mean_9}
\left|D'\cap (-\infty,F^{-1}(\zeta)]\right|\geq c_5 \log\left({9\over \beta}\log{\gamma(\varepsilon n,\beta/9) \over \varphi(1/16)}\right).
\end{equation}
Similarly, with probability at least $1-\frac{\beta}{9}$,
\begin{equation}
\label{eq:th:err_mean_10}
\left|D'\cap [F^{-1}(1-\zeta),\infty)\right|\geq c_5 \log\left({9\over \beta}\log{\gamma(\varepsilon n,\beta/9) \over \varphi(1/16)}\right).
\end{equation}
Combining (\ref{eq:th:err_mean_7}), (\ref{eq:th:err_mean_9}), (\ref{eq:th:err_mean_10}), we have,
\[\left[F^{-1}(\zeta),F^{-1}(1-\zeta)\right]\subseteq\tilde{R}(D').\]
Furthermore, by Chernoff's inequality, with probability at least $1-\frac{\beta}{9}$,
\begin{align}
&\left|D\cap \left[F^{-1}(\zeta),F^{-1}(1-\zeta)\right]\right| \geq n- \frac{8c_5}{\varepsilon}\log\left( {10\over \beta}\log{\gamma(\varepsilon n,\beta/9) \over \varphi(1/16)}\right)
\nonumber
\\
\nonumber
\Rightarrow \qquad&
\left|D\cap \tilde{\mathcal{R}}(D')\right| \geq n- \frac{8c_5}{\varepsilon}\log\left( {10\over \beta}\log{\gamma(\varepsilon n,\beta/9) \over \varphi(1/16)}\right)
\\
\label{eq:th:err_mean_11}
\Rightarrow \qquad &\left|D\cap \overline{\tilde{\mathcal{R}}(D')}\right|=O\left(\frac{1}{\varepsilon}\log\left( {1\over \beta}\log{\gamma(\varepsilon n,\beta/9) \over \varphi(1/16)}\right) \right).
\end{align}

Now, we start analyzing the error. Given any $\xi\geq 10\cdot\gamma\left(\varepsilon n,\frac{\beta}{9}\right)+2\sigma$, let $\bar{\mathcal{P}}$ be the distribution after truncating $\mathcal{P}$ into the interval $[\mu-\xi,\mu+\xi]$. Denote the mean and variance of $\bar{\mathcal{P}}$ as $\bar{\mu}$ and $\bar{\sigma}^2$. By Lemma~\ref{lm:obs_truncated_distribution}, we have
\begin{equation*}
\bar{\sigma}\leq \sigma,
\end{equation*}
and 
\begin{equation*}
\mu-\bar{\mu} = \E[X<\mu-\xi]+\E[X>\mu+\xi].
\end{equation*}

Since $|\mu-\tilde{\mu}|\leq |\mu-\bar{\mu}|+|\bar{\mu}-\tilde{\mu}|$, it remains to bound $|\bar{\mu}-\tilde{\mu}|$.
Denote $\bar{D} = \mathrm{Clip}(D, [\mu-\xi,\mu+\xi])$.  The error can be broken down into two parts:
\begin{align*}
|\bar{\mu}-\tilde{\mu}| \leq \left|\bar{\mu}-\mu(\bar{D})\right|+\left|\mu(\bar{D})-\tilde{\mu}\right|.
\end{align*}

The first part is the sampling error. By Bernstein's inequality, we have with probability at least $1-\frac{\beta}{9}$,
\begin{equation*}
\left|\bar{\mu}-\mu(\bar{D})\right| = O\left(\bar{\sigma}\sqrt{\frac{\log(1/\beta)}{n}}+\frac{\xi\log(1/\beta)}{n}\right)= O\left(\sigma\sqrt{\log(1/\beta) \over n}+\frac{\xi\log(1/\beta)}{n}\right).
\end{equation*}

The second part can be further divided into two sub-parts:
\begin{align*}
\left|\mu(\bar{D})-\tilde{\mu}\right| = & \left|\mu(\bar{D})-\left(\mathrm{Lap}\left(\frac{8|\tilde{\mathcal{R}}(D')|}{\varepsilon n}\right)+\mathrm{ClippedMean}\left(D,\tilde{\mathcal{R}}(D')\right)\right) \right|
\\
\leq & \left|\mathrm{Lap}\left(\frac{8|\tilde{\mathcal{R}}(D')|}{\varepsilon n}\right)\right|+\left|\mu(\bar{D})-\mathrm{ClippedMean}\left(D,\tilde{\mathcal{R}}(D')\right)\right|
\\
=& \left|\mathrm{Lap}\left(\frac{8|\tilde{\mathcal{R}}(D')|}{\varepsilon n}\right)\right|+ \left|\mu(\bar{D})-\mathrm{ClippedMean}\left(\bar{D},\tilde{\mathcal{R}}(D')\right)\right|.
\end{align*}
The last equality is because, by (\ref{eq:th:err_mean_6}), (\ref{eq:th:err_mean_8}) together with $\xi\geq 10 \cdot \gamma\left(\varepsilon n,\frac{\beta}{9}\right)+2\sigma$, we have $\tilde{\mathcal{R}}(D')\subseteq [\mu-\xi,\mu+\xi]$, thus 
\[\mathrm{ClippedMean}\left(D,\tilde{\mathcal{R}}(D')\right)=\mathrm{ClippedMean}\left(\bar{D},\tilde{\mathcal{R}}(D')\right).\]

For the first sub-part, with probability at least $1-\frac{\beta}{9}$,
\begin{equation*}
\left|\mathrm{Lap}\left(\frac{8|\tilde{\mathcal{R}}(D')|}{\varepsilon n}\right)\right|\leq \frac{8\log(9/\beta)|\tilde{\mathcal{R}}(D')|}{\varepsilon n}= O\left(\frac{\log(1/\beta)}{\varepsilon n}\gamma\left(\varepsilon n,\frac{\beta}{9}\right)\right).
\end{equation*}
The first inequality is because of the tail bound of the Laplace distribution while the second one is by (\ref{eq:th:err_mean_6}).

The second sub-part is because we clip some outliers in $\bar{D}$ out of $\tilde{\mathcal{R}}(D')$. By (\ref{eq:th:err_mean_11}) and the fact that each outlier will contribute a bias at most $2\xi/n$. Therefore,
\begin{equation*}
\left|\mu(\bar{D})-\mathrm{ClippedMean}\left(\bar{D},\tilde{\mathcal{R}}(D')\right)\right| = O\left(\frac{\xi}{\varepsilon n} \log\left({1\over \beta} \log{\gamma(\varepsilon n,\beta/9)  \over \varphi(1/16)}\right)\right).
\end{equation*}
\end{proof}

\subsection{Error Bounds for Specific Distribution Families}
\label{sec:mean_error_specific}

To facilitate the comparison with prior work, below we derive simplified (and possibly looser) versions of Theorem \ref{th:err_mean} for certain distribution families.  These simplified bounds can be easily rewritten into the sample complexity results stated in Section \ref{sec:intro}.  We also set $\beta$ as $\frac{1}{3}$.

\paragraph{Gaussian distributions.}
For a Gaussian $\mathcal{P}$, we have $\varphi(\beta) = \Theta(\sigma)$, $\mathrm{IQR} = \Theta(\sigma)$, and $\gamma(\varepsilon n,\beta/9)  = O\left(\sigma\sqrt{\log(\varepsilon n)}\right)$ by the standard Gaussian tail bound. In addition, due to its symmetry, $\E[X<\mu-\xi]+\E[X>\mu+\xi] = 0$ for any $\xi$.  Fixing $\xi = c\sigma \sqrt{\log(\varepsilon n)}$ for some large constant $c$, Theorem \ref{th:err_mean} simplifies into:

\begin{theorem}
\label{th:err_mean_gaussian}
Given $\varepsilon$, $\beta$, for any $D\sim \mathcal{P}^n$, where $\mathcal{P}$ is a Gaussian distribution, if 
\[n> \frac{c_1}{\varepsilon} \log\log\sigma + \frac{c_2}{\varepsilon}\log\log {1\over \sigma}+\frac{c_3}{\varepsilon}\log{|\mu| \over \sigma},\]
where $c_1$, $c_2$, $c_3$ are universal constants, then
\begin{align*}
\mathrm{Err}(\mathtt{EstimateMean},D)=& O\left(\frac{\sigma}{\sqrt{n}}+\frac{\sigma}{\varepsilon n}\log\log\log(\varepsilon n)\sqrt{\log(\varepsilon n)}\right).
\end{align*}
\end{theorem}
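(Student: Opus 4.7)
The plan is to specialize Theorem~\ref{th:err_mean} by plugging in Gaussian-specific estimates for the distributional quantities appearing there, and then choosing a single good $\xi$ (rather than optimizing over $\xi$) to obtain the simplified bound. Throughout we take $\beta=1/3$ as stated.

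First I would compute the three quantities that Theorem~\ref{th:err_mean} depends on. For a Gaussian with variance $\sigma^{2}$, the density is bounded by $\Theta(1/\sigma)$ and attains this order on any $\Theta(\sigma)$-width interval centered near the mean, so $\varphi(1/16)=\Theta(\sigma)$. The interquartile range is also $\Theta(\sigma)$ by standard Gaussian quantile calculations. For the statistical width, the maximum of $\varepsilon n$ i.i.d.\ samples from $\mathcal{N}(\mu,\sigma^{2})$ lies within $\mu+O\!\left(\sigma\sqrt{\log(\varepsilon n/\beta)}\right)$ with probability at least $1-\beta/9$ (and symmetrically for the minimum), which gives $\gamma(\varepsilon n,\beta/9)=O\!\left(\sigma\sqrt{\log(\varepsilon n)}\right)$ after absorbing the constant $\beta$. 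Finally, by the symmetry of the Gaussian density about $\mu$, $\E[X<\mu-\xi]+\E[X>\mu+\xi]=0$ for every $\xi\ge 0$.

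Next I would pick a single $\xi$ that satisfies the feasibility constraint $\xi\geq 10\gamma(\varepsilon n,\beta/9)+2\sigma$ in the minimization of Theorem~\ref{th:err_mean}. Setting $\xi=c\,\sigma\sqrt{\log(\varepsilon n)}$ for a sufficiently large constant $c$ works. With this choice, the bias term $|\E[X<\mu-\xi]+\E[X>\mu+\xi]|$ vanishes by symmetry, and the privacy contribution becomes
\[
\frac{\xi}{\varepsilon n}\log\!\left(\tfrac{1}{\beta}\log\tfrac{\gamma(\varepsilon n,\beta/9)}{\varphi(1/16)}\right)
\;=\; O\!\left(\frac{\sigma\sqrt{\log(\varepsilon n)}}{\varepsilon n}\,\log\!\left(\log\!\left(\sqrt{\log(\varepsilon n)}\right)\right)\right)
\;=\; O\!\left(\frac{\sigma\sqrt{\log(\varepsilon n)}\,\log\log\log(\varepsilon n)}{\varepsilon n}\right),
\]
while the sampling term remains $O(\sigma/\sqrt{n})$. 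These two terms together give exactly the bound asserted in Theorem~\ref{th:err_mean_gaussian}.

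It remains to translate the sample size requirement of Theorem~\ref{th:err_mean} into the simpler form stated. The first two terms become $\frac{1}{\varepsilon}\log\log(1/\sigma)$ and $\frac{1}{\varepsilon}\log\log\sigma$ after substituting $\varphi(1/16)=\Theta(\sigma)$ and $\mathrm{IQR}=\Theta(\sigma)$. The term $\frac{1}{\varepsilon}\log(1/\beta)$ is a constant. For the last term I would write
\[
\frac{1}{\varepsilon}\log\frac{|\mu|+\sigma+\gamma(\varepsilon n,\beta/9)}{\varphi(1/16)} \;=\; O\!\left(\frac{1}{\varepsilon}\log\frac{|\mu|}{\sigma}+\frac{1}{\varepsilon}\log\log(\varepsilon n)\right),
\]
and absorb the $\log\log(\varepsilon n)$ piece into the overall bound on $n$ (this self-referential absorption is standard: for $n$ larger than any fixed polylogarithm of itself, the polylog term is automatically dominated, so requiring $n$ to exceed the stated right-hand side is enough to subsume it). The main obstacle, and the step that needs the most care, is precisely this bookkeeping of nested logarithms: verifying that the ratio $\gamma(\varepsilon n,\beta/9)/\varphi(1/16)$ really is only polylogarithmic in $\varepsilon n$, so that its iterated logarithm collapses to the $\log\log\log(\varepsilon n)$ appearing in the final error, and that the $\log\log(\varepsilon n)$ term in the sample-size lower bound can be absorbed without circularity. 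Everything else is a direct substitution into Theorem~\ref{th:err_mean}.
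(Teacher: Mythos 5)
Your proposal is correct and follows essentially the same route as the paper: substitute the Gaussian facts $\varphi(1/16)=\Theta(\sigma)$, $\mathrm{IQR}=\Theta(\sigma)$, $\gamma(\varepsilon n,\beta/9)=O\bigl(\sigma\sqrt{\log(\varepsilon n)}\bigr)$, use symmetry to kill the bias term, and fix $\xi=c\sigma\sqrt{\log(\varepsilon n)}$ in Theorem~\ref{th:err_mean}. Your explicit handling of absorbing the residual $\frac{1}{\varepsilon}\log\log(\varepsilon n)$ piece of the sample-size requirement is a detail the paper leaves implicit, and your argument for it is sound.
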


\paragraph{Heavy-tailed distributions.}
Now, we consider the case where $\mathcal{P}$ has a bounded $k$-th central moment $\mu_{k}$. Note that $\sigma\leq \mu_k^{1/k}$.  In addition, we can also bound $\gamma(m,\beta)$ in terms of $\mu_k$:

\begin{lemma}
\label{lm:obs_m_range}
For any $m,\beta$, and $k\geq 2$,
\[\gamma(m,\beta)\leq 2 \left(\frac{m \mu_k}{\beta}\right)^{1/k}.\]
\end{lemma}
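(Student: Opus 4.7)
The plan is a straightforward Markov-plus-union-bound argument. Set $t = (m\mu_k/\beta)^{1/k}$. For a single draw $X \sim \mathcal{P}$, Markov's inequality applied to the nonnegative random variable $|X-\mu|^k$ gives
\[\Pr\bigl[|X-\mu| \geq t\bigr] = \Pr\bigl[|X-\mu|^k \geq t^k\bigr] \leq \frac{\E\bigl[|X-\mu|^k\bigr]}{t^k} = \frac{\mu_k}{t^k} = \frac{\beta}{m}.\]

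Next, I would take a union bound over the $m$ i.i.d.\ samples $X_1,\dots,X_m$ forming $D$. This yields
\[\Pr_{D\sim\mathcal{P}^m}\Bigl[\max_{1\leq i\leq m}|X_i-\mu| \geq t\Bigr] \leq m\cdot \frac{\beta}{m} = \beta.\]

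Finally, use the triangle inequality to relate the width $\gamma(D) = X_{(m)}-X_{(1)}$ to the maximum deviation from $\mu$: for any two indices $i,j$,
\[|X_i - X_j| \leq |X_i - \mu| + |X_j - \mu| \leq 2\max_\ell |X_\ell - \mu|,\]
so $\gamma(D) \leq 2\max_\ell|X_\ell-\mu|$. Combining with the previous display,
\[\Pr_{D\sim\mathcal{P}^m}\bigl[\gamma(D) \geq 2t\bigr] \leq \Pr_{D\sim\mathcal{P}^m}\Bigl[\max_\ell|X_\ell-\mu| \geq t\Bigr] \leq \beta,\]
which by the definition of $\gamma(m,\beta)$ as an infimum gives $\gamma(m,\beta) \leq 2t = 2(m\mu_k/\beta)^{1/k}$.

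There is no real obstacle here: the lemma is a clean tail bound, and the only subtlety worth double-checking is whether the inequality in the definition is strict or non-strict (and correspondingly whether the $\leq \beta$ event controls the infimum), but either convention handles this via an arbitrarily small perturbation of $t$.
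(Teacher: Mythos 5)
Your proposal is correct and is essentially the paper's own argument: the paper applies the (generalized) Chebyshev/Markov bound to get $\Pr[|X-\mu| \ge (m\mu_k/\beta)^{1/k}] \le \beta/m$, takes a union bound over the $m$ samples, and implicitly uses the same triangle-inequality step to pass from the maximum deviation to the width, giving the factor $2$. Your write-up just makes that last step and the infimum convention explicit, which is fine.
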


\begin{proof}
By Chebyshev's inequality, given an $X\sim \mathcal{P}$, with probability at least $1-\frac{\beta}{m}$,
\[|X-\mu|\leq \left(\frac{m \mu_k}{\beta}\right)^{1/k}.\]
Then applying the union bound over $m$ such random variables yields the lemma.
\end{proof}

Plugging these bounds into Theorem~\ref{th:err_mean} and setting $\xi = c\cdot \left(\varepsilon n \mu_k\right)^{1/k}$ for some large constant $c$, the requirement on $n$ becomes
\[n> \frac{c_1}{\varepsilon}\log\log{1 \over \varphi(1/16)}+\frac{c_2}{\varepsilon}\log\log\left(\mathrm{IQR}\right)+\frac{c_3}{\varepsilon}\log{|\mu|+(\varepsilon \mu_k)^{1/k} \over \varphi(1/16)},\]
and the error bound changes to 
\begin{align}
\nonumber
\mathrm{Err}(\mathrm{\mathtt{EstimateMean}},D)
=& O\left(\frac{\sigma}{\sqrt{n}} + \frac{\mu_k^{1/k}}{(\varepsilon n)^{1-1/k}} \log\log{(\varepsilon n\mu_k)^{1/k} \over \varphi(1/16)}\right.
\\
&\left.+\left|\E\left[X<\mu-c\cdot\left(\varepsilon n \mu_k\right)^{1/k}\right]+\E\left[X>\mu+c\cdot\left(\varepsilon n \mu_k\right)^{1/k}\right]\right|\right).
\label{eq:mean_heavy_tail_distribution}
\end{align}

Now, we further analyze the last term in (\ref{eq:mean_heavy_tail_distribution}). We first derive a lemma similar to the one in~\cite{kamath2020private}:

\begin{lemma}
\label{lm:obs_k_moment}
Let $\mathcal{P}$ be a distribution with a bounded $\mu_k$. Given $\xi$ and $t$ such that $\xi\geq 2\left(\mu_k/t\right)^{1/(k-1)}$, we have 
\[\left|\E[X<\mu-\xi]+\E[X>\mu+\xi]\right|\leq t.\]
\end{lemma}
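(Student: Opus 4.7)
\textbf{Proof plan for Lemma~\ref{lm:obs_k_moment}.}

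The plan is to reduce the two-sided quantity to a single absolute moment integral and then apply H\"older's inequality (Lemma~\ref{lm:holder}) combined with Markov's inequality on $|X-\mu|^k$. By the definition of $\E[X\lessgtr x]$ in Section~\ref{sec:notation},
\[
\E[X<\mu-\xi]+\E[X>\mu+\xi]
= \E\bigl[(X-\mu)\,\mathbb{I}(|X-\mu|>\xi)\bigr],
\]
so taking the absolute value and pulling it inside the expectation gives
\[
\bigl|\E[X<\mu-\xi]+\E[X>\mu+\xi]\bigr|
\le \E\bigl[|X-\mu|\cdot \mathbb{I}(|X-\mu|>\xi)\bigr].
\]

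Next, I would apply H\"older's inequality (Lemma~\ref{lm:holder}) with the pair $(|X-\mu|,\mathbb{I}(|X-\mu|>\xi))$ and exponent $k$, yielding
\[
\E\bigl[|X-\mu|\cdot \mathbb{I}(|X-\mu|>\xi)\bigr]
\le \bigl(\E|X-\mu|^k\bigr)^{1/k}\bigl(\Pr[|X-\mu|>\xi]\bigr)^{(k-1)/k}
= \mu_k^{1/k}\,\Pr[|X-\mu|>\xi]^{(k-1)/k},
\]
using that $\mathbb{I}(\cdot)^{k/(k-1)}=\mathbb{I}(\cdot)$. Then Markov's inequality applied to $|X-\mu|^k$ gives $\Pr[|X-\mu|>\xi]\le \mu_k/\xi^k$, so the product telescopes to
\[
\mu_k^{1/k}\cdot\Bigl(\frac{\mu_k}{\xi^k}\Bigr)^{(k-1)/k}
= \frac{\mu_k}{\xi^{k-1}}.
\]

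Finally, the hypothesis $\xi\ge 2(\mu_k/t)^{1/(k-1)}$ implies $\xi^{k-1}\ge 2^{k-1}\mu_k/t \ge \mu_k/t$ (since $k\ge 2$), hence $\mu_k/\xi^{k-1}\le t$, which completes the argument. I do not anticipate a real obstacle here; the only minor subtlety is the observation that H\"older plus Markov already produces the clean bound $\mu_k/\xi^{k-1}$, with the factor $2$ in the hypothesis providing comfortable slack (and matching the common alternative proof via the tail formula $\int_\xi^\infty \Pr[|X-\mu|>s]\,\mathrm{d}s+\xi\Pr[|X-\mu|>\xi]$, which yields $k\mu_k/((k-1)\xi^{k-1})\le 2\mu_k/\xi^{k-1}$).
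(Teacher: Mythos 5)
Your main chain---H\"older's inequality followed by Markov's inequality on $|X-\mu|^k$, giving the clean bound $\mu_k/\xi^{k-1}\le t$---is exactly the paper's argument; the only structural difference is that the paper bounds the two tails separately by $t/2$ each, while you treat them jointly. However, your opening ``identity'' is not correct as stated: with the paper's definition $\E[X\lessgtr x]=\E[(X-x)\mathbb{I}(X\lessgtr x)]$, one has
\[
\E[X<\mu-\xi]+\E[X>\mu+\xi]=\E\bigl[(X-\mu)\,\mathbb{I}(|X-\mu|>\xi)\bigr]+\xi\bigl(\Pr[X<\mu-\xi]-\Pr[X>\mu+\xi]\bigr),
\]
so the two sides differ unless the two tail probabilities happen to coincide. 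The inequality you actually use afterwards is nevertheless true, because on the event $\{X<\mu-\xi\}$ one has $|X-(\mu-\xi)|\le|X-\mu|$ (and symmetrically on the other tail), hence by the triangle inequality
\[
\bigl|\E[X<\mu-\xi]+\E[X>\mu+\xi]\bigr|\le \E\bigl[|X-\mu|\,\mathbb{I}(|X-\mu|>\xi)\bigr];
\]
this pointwise domination---which is precisely the paper's first step, applied tail by tail---is the one-line repair your write-up needs. With that correction, the remainder of your argument (H\"older with the indicator, the Markov/Chebyshev tail bound $\Pr[|X-\mu|>\xi]\le\mu_k/\xi^k$, and the slack from the factor $2$ using $k\ge 2$) is complete and essentially identical to the paper's proof.
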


\begin{proof}
It suffices to bound $\left|\E[X<\mu-\xi]\right|$ and $\left|\E[X>\mu+\xi]\right|$, respectively.  We only consider the former; the latter is symmetric. 
\begin{align}
\nonumber
\left|\E[X<\mu-\xi]\right|= & \left|\E_{X\sim \mathcal{P}}[\left(X-(\mu-\xi)\right)\mathbb{I}(X<\mu-\xi)]\right|
\\
\nonumber
\leq & \left|\E_{X\sim \mathcal{P}}[(X-\mu)\mathbb{I}(X<\mu-\xi)]\right|
\\
\label{eq:lm:obs_k_moment_1}
\leq & \left(\E_{X\sim \mathcal{P}}\left[|X-\mu|^k\right]\right)^{1/k} \left(\Pr_{X\sim \mathcal{P}}[X\leq \mu-\xi]\right)^{1-1/k}
\\
\nonumber
\leq & \mu_k^{1/k} \left(\Pr_{X\sim \mathcal{P}}[|X-\mu|^k\leq \xi^k]\right)^{1-1/k}
\\
\leq & \mu_k^{1/k}
\nonumber \left(\frac{\mu_k}{\xi^k}\right)^{1-1/k}
\\
\leq & \frac{t}{2},
\nonumber
\end{align}
where (\ref{eq:lm:obs_k_moment_1}) follows from H\"{o}lder's inequality.
\end{proof}

By setting $\xi=c\cdot\left(\varepsilon n \mu_k\right)^{1/k}$ for $c\ge 2$ and $t = \frac{\mu_k^{1/k}}{(\varepsilon n)^{1-1/k}}$, we have
\[\left|\E\left[X<\mu-c\cdot\left(\varepsilon n \mu_k\right)^{1/k}\right]+\E\left[X>\mu+c\cdot\left(\varepsilon n \mu_k\right)^{1/k}\right]\right| \leq \frac{\mu_k^{1/k}}{(\varepsilon n)^{1-1/k}}.\]
Plugging this bound into (\ref{eq:mean_heavy_tail_distribution}), we obtain:

\begin{theorem}
\label{th:err_mean_heavy}
Given $\varepsilon$, $\beta$, for any $D\sim\mathcal{P}^n$ and any $k$ if 
\[n> \frac{c_1}{\varepsilon}\log\log{1\over\varphi(1/16)}+\frac{c_2}{\varepsilon}\log\log\left(\mathrm{IQR}\right)+\frac{c_3}{\varepsilon}\log{|\mu|+(\varepsilon \mu_k)^{1/k} \over \varphi(1/16)},\]
where $c_1$, $c_2$, $c_3$ are universal constants, then
\[\mathrm{Err}(\mathrm{\mathtt{EstimateMean}},D)=O\left(\frac{\sigma}{\sqrt{n}} + \frac{\mu_k^{1/k}}{(\varepsilon n)^{1-1/k}} \log\log{(\varepsilon  n\mu_k)^{1/k} \over \varphi(1/16)}\right).\]
\end{theorem}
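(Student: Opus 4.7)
The plan is to invoke Theorem~\ref{th:err_mean} and then specialize every distribution-dependent quantity to the heavy-tailed setting by exploiting the assumed $k$-th central moment $\mu_k$. Since $\sigma\le \mu_k^{1/k}$ and $\mathrm{IQR}\le 4\sigma$, the first two terms in the sample-size requirement of Theorem~\ref{th:err_mean} already take the form appearing in our statement (they depend only on $\varphi(1/16)$ and $\log\log \mathrm{IQR}$). The remaining work is therefore to bound $\gamma(\varepsilon n,\beta/9)$ and the bias term $|\E[X<\mu-\xi]+\E[X>\mu+\xi]|$ in terms of $\mu_k$.

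First, I would apply Lemma~\ref{lm:obs_m_range} with $m=\varepsilon n$ to obtain $\gamma(\varepsilon n,\beta/9)=O\bigl((\varepsilon n\mu_k)^{1/k}\bigr)$. Plugging this into the last term of the sample-size requirement of Theorem~\ref{th:err_mean} and combining $|\mu|+\sigma+\gamma(\varepsilon n,\beta/9)=O(|\mu|+(\varepsilon n\mu_k)^{1/k})$, the requirement simplifies to the one in the statement (the extra factor $n^{1/k}$ inside the logarithm is absorbed into a universal constant since $\log n$ is dominated by the other logarithmic terms on $n$ that already appear). This gives the sample complexity claim.

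Second, I would make the choice $\xi=c\cdot(\varepsilon n\mu_k)^{1/k}$ for a sufficiently large constant $c\ge 10$, which automatically satisfies $\xi\ge 10\cdot\gamma(\varepsilon n,\beta/9)+2\sigma$ (using $\sigma\le\mu_k^{1/k}$ and $\varepsilon n\ge 1$). For the bias term, I would invoke Lemma~\ref{lm:obs_k_moment} with this $\xi$ and $t=\mu_k^{1/k}/(\varepsilon n)^{1-1/k}$: the hypothesis $\xi\ge 2(\mu_k/t)^{1/(k-1)}$ reduces to $c^{k-1}\ge 2^{k-1}$, so choosing $c\ge 2$ suffices, and we get
\[
\bigl|\E[X<\mu-\xi]+\E[X>\mu+\xi]\bigr|\;\le\;\frac{\mu_k^{1/k}}{(\varepsilon n)^{1-1/k}}.
\]
The privacy noise term in Theorem~\ref{th:err_mean} becomes $\frac{\xi}{\varepsilon n}\log\bigl(\frac{1}{\beta}\log\frac{\gamma(\varepsilon n,\beta/9)}{\varphi(1/16)}\bigr)=O\bigl(\frac{\mu_k^{1/k}}{(\varepsilon n)^{1-1/k}}\log\log\frac{(\varepsilon n\mu_k)^{1/k}}{\varphi(1/16)}\bigr)$, matching the bias bound up to the $\log\log$ factor. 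Finally, the sampling-error term $\sigma\sqrt{\log(1/\beta)/n}=O(\sigma/\sqrt{n})$ is already in the desired form since $\beta$ is a constant.

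The only subtle point, and the one I expect to require the most care, is verifying that the choice $\xi=c(\varepsilon n\mu_k)^{1/k}$ is in fact the (near-)minimizer of the $\min_{\xi}$ in Theorem~\ref{th:err_mean}: the bias $\xi\mapsto|\E[X<\mu-\xi]+\E[X>\mu+\xi]|$ is decreasing in $\xi$ while the privacy noise $\tfrac{\xi}{\varepsilon n}\log\log(\cdot)$ is increasing, and Lemma~\ref{lm:obs_k_moment} shows that at $\xi\asymp(\varepsilon n\mu_k)^{1/k}$ the two are balanced up to constants. Since we only need an upper bound on the min, taking this $\xi$ is enough; no further optimization is necessary. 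Combining the three terms yields the stated error bound.
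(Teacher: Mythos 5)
Your proposal is correct and follows essentially the same route as the paper: bound $\gamma(\varepsilon n,\beta/9)$ via Lemma~\ref{lm:obs_m_range}, fix $\xi = c\,(\varepsilon n\mu_k)^{1/k}$ in Theorem~\ref{th:err_mean}, and control the clipping bias with Lemma~\ref{lm:obs_k_moment} at $t=\mu_k^{1/k}/(\varepsilon n)^{1-1/k}$. The paper performs the same (silent) absorption of the $n^{1/k}$ factor inside the logarithm of the sample-size requirement, so your handling of that point matches the intended argument.
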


\section{Statistical Variance Estimation}
\label{sec:var}

\subsection{General Algorithm and Error Analysis}
\label{sec:var_alg}

We first use a standard technique to reduce the variance estimation problem to mean estimation. Define a random variable $Z=(X-X')^2$, where $X,X'\sim \mathcal{P}$. Note that $Z$ has a non-negative domain. Let $\mathcal{T}$ be the distribution of $Z$.  We can relate the statistical parameters of $\mathcal{T}$ to those of $\mathcal{P}$ (statistical parameters without a subscript refer to $\mathcal{P}$):
\begin{equation}
\label{eq:mean_T}
\mu_{\mathcal{T}} = \E_{Z\sim \mathcal{T}}[Z] = E_{X,X'\sim \mathcal{P}}[(X-X')^2]= 2\sigma^2,
\end{equation}
\begin{align}
\sigma_{\mathcal{T}}^2 =& \E_{Z\sim \mathcal{T}}[Z^2]-\E^2_{Z\sim \mathcal{T}}[Z]
\nonumber
\\
\nonumber
= & \E_{X,X'\sim \mathcal{P}}[(X-X')^4] - 4\sigma^4
\\
\nonumber
= & 2\mu_{4} - 4\sigma^4
\\
\leq & 2\mu_{4}.
\label{eq:sigma_T}
\end{align}

We can also derive a connection between the statistical width of $\mathcal{T}$ and that of $\mathcal{P}$:

\begin{lemma}
\label{lm:connection_statistical_width}
For any $m>1$, $\beta$,
\[\gamma_{\mathcal{T}}(m,\beta)\leq \left(\gamma(2m,\beta)\right)^2.\]
\end{lemma}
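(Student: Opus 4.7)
The plan is to realize a sample of size $m$ from $\mathcal{T}$ as a deterministic function of a sample of size $2m$ from $\mathcal{P}$, then relate the widths of the two samples pointwise and finally convert this pointwise inequality into a quantile inequality.

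Concretely, let $D = (X_1, X_1', X_2, X_2', \dots, X_m, X_m') \sim \mathcal{P}^{2m}$ be $2m$ i.i.d.\ draws, pair them into couples $(X_i, X_i')$, and set $Z_i = (X_i - X_i')^2$. Then $D' = (Z_1, \dots, Z_m)$ is distributed as $\mathcal{T}^m$ by the definition of $\mathcal{T}$. For each $i$, we have $|X_i - X_i'| \le X_{(2m)} - X_{(1)} = \gamma(D)$, where $X_{(1)} \le \cdots \le X_{(2m)}$ is the order statistic of $D$, and therefore $0 \le Z_i \le \gamma(D)^2$. Hence
\[
\gamma(D') \;=\; \max_i Z_i - \min_i Z_i \;\le\; \max_i Z_i \;\le\; \gamma(D)^2.
\]

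The second step is to turn this pointwise bound into a tail-probability bound. For any $\lambda \ge 0$, the event $\{\gamma(D') \ge \lambda^2\}$ is contained in $\{\gamma(D)^2 \ge \lambda^2\} = \{\gamma(D) \ge \lambda\}$ under the coupling above, so
\[
\Pr_{D' \sim \mathcal{T}^m}\!\left[\gamma(D') \ge \lambda^2\right] \;\le\; \Pr_{D \sim \mathcal{P}^{2m}}\!\left[\gamma(D) \ge \lambda\right].
\]
Now fix any $\eta > 0$ and apply this with $\lambda = \gamma(2m,\beta) + \eta$. By the definition of $\gamma(2m,\beta)$ as an infimum, the right-hand side is at most $\beta$, so
\[
\Pr\!\left[\gamma(D') \ge (\gamma(2m,\beta)+\eta)^2\right] \;\le\; \beta,
\]
and thus $\gamma_{\mathcal{T}}(m,\beta) \le (\gamma(2m,\beta)+\eta)^2$ for every $\eta > 0$. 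Letting $\eta \to 0$ gives the claim.

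The only mildly delicate point is the handling of the infimum in the definition of $\gamma(\cdot,\cdot)$ (the inf may not be attained); the $\eta$-perturbation above takes care of this cleanly, so there is no real obstacle. Everything else is just the elementary observation $(X-X')^2 \le \gamma(D)^2$ combined with $Z \ge 0$, which is what makes the $2m$ (rather than $m$) the natural sample size on the $\mathcal{P}$ side.
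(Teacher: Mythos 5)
Your proposal is correct and follows essentially the same route as the paper: realize the $m$ draws from $\mathcal{T}$ as pairwise differences of a $2m$-sample from $\mathcal{P}$, bound each $Z_i \le \gamma(D)^2$, and use non-negativity of $Z$ to bound $\gamma(D')$ by $\gamma(D)^2$. Your extra $\eta$-perturbation to handle the infimum in the definition of $\gamma(\cdot,\cdot)$ is a harmless (slightly more careful) refinement of the same argument.
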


\begin{proof}
Any sample $H\in \mathcal{T}^m$ corresponds to a sample $D\in\mathcal{P}^{2m}$. By definition, with probability at least $1-\beta$,
\[|X_1-X_{2m}|\leq \gamma(2m,\beta).\]
That is, for any $i,j\in[2m]$,
\[(X_i-X_j)^2\leq  \left(\gamma(2m,\beta)\right)^2,\]
which further means, for any $k\in [m]$
\[Z_k\leq \left(\gamma(2m,\beta)\right)^2.\]
Since $Z\in \mathcal{T}$ has the non-negative domain, we get the lemma.
\end{proof}

By (\ref{eq:mean_T}), we can estimate $\sigma^2$ by estimating $\mu_{\mathcal{T}}$. Thus, we randomly group the elements in $D$ into pairs $(X,X')$ and let $H=\{Z_1,Z_2,\cdots,Z_{n'}\}$, where $n'=n/2$ and $Z_i = (X-X')^2$ for each pair. Note that $H$ is a random sample drawn from $\mathcal{T}$.  Then, we estimate $\mu_{\mathcal{T}}$ with a similar procedure as before. We first find $\overline{\mathrm{IQR}}$ on $D$ but use $\overline{\mathrm{IQR}}^2$ as the bucket size. Next, we randomly sample $\varepsilon n'$ elements from $H$ to get $H'$. When we try to get the privatized range for $H'$, instead of using $\mathtt{InfiniteDomainRange}$, we simply use $\mathtt{InfiniteDomainRadius}$ to get the a range $[0, \widetilde{\mathrm{rad}}(H)]$.  This is because the sampling error will be proportional to $\sqrt{\mu_4} \ge \sigma^2 = {1\over 2} \mu_{\mathcal{T}}$, there is no need to find the location of the range.  This removes a term in the requirement on $n$. Finally, we use the clipped mean estimator with the range $[0, \widetilde{\mathrm{rad}}(H)]$. The details are shown in Algorithm~\ref{alg:var}.

\begin{algorithm}
\LinesNumbered 
\label{alg:var}
\caption{$\mathtt{EstimateVariance}$.}
\KwIn{$D$, $\varepsilon$, $\beta$}
$\overline{\mathrm{IQR}}=\mathtt{EstimateIQRLowerBound}(D,\frac{\varepsilon}{8}, \frac{\beta}{7})$\;
$n'=\frac{n}{2}$\;
Construct $H$ from $D$\;
Let $H'$ be a sample of $\varepsilon n'$ values from $H$\;
$\varepsilon' = \log\left(\frac{e^{\varepsilon}-1}{\varepsilon}+1\right)$\;
$\widetilde{\mathrm{rad}}(H')= \mathtt{InfiniteDomainRadius}(H',\frac{3\varepsilon'}{4},\frac{\beta}{7})$ with $b=\overline{\mathrm{IQR}}^2$\;
$\tilde{\sigma}^2 =\frac{1}{2}\left(\mathrm{ClippedMean}\left(H,[0, \widetilde{\mathrm{rad}}(H')]\right) +\mathrm{Lap}\left(8\cdot \widetilde{\mathrm{rad}}(H')/(\varepsilon n)\right)\right)$\;
\Return $\tilde{\sigma}^2$\;
\end{algorithm}

The privacy of $\mathtt{EstimateVariance}$ is straightforward. We analyze its utility below:

\begin{theorem}
\label{th:err_var}
Given $\varepsilon$, $\beta$, for any $D\sim \mathcal{P}^n$, if 
\[n>\frac{c_1}{\varepsilon}\log\log{1 \over \varphi\left(1/16\right)}+\frac{c_2}{\varepsilon}\log\log\left(\mathrm{IQR}\right)+\frac{c_3}{\varepsilon}\log{1\over\beta},\]
where $c_1$, $c_2$, $c_3$ are universal constants, then with probability at least $1-\beta$, $\mathtt{EstimateVariance}$ returns a $\tilde{\sigma}^2$ such that
\begin{align*}
|\sigma^2-\tilde{\sigma}^2|=&O\left(\min_{\xi\geq 5\cdot \left(\gamma(\varepsilon n,\frac{\beta}{7})\right)^2+2\sigma^2}\left(\left|\E\left[Z>2\sigma^2+\xi\right]\right|+\frac{\xi}{\varepsilon n} \log\left({1\over \beta}\log{\gamma\left(\varepsilon n,\beta/7\right)\over \varphi\left(1/16\right)}\right) \right)
+ \sqrt{\frac{\mu_4\log(1/\beta)}{n}}\right).
\end{align*}
\end{theorem}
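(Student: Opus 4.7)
The strategy mirrors the proof of Theorem~\ref{th:err_mean}, using that $\mu_{\mathcal{T}}=2\sigma^2$ reduces the problem to estimating the mean of $\mathcal{T}$, with two structural simplifications: (i) $Z\ge 0$, so only upper-side truncation matters, and (ii) we can skip the location-finding step and call $\mathtt{InfiniteDomainRadius}$ directly on $H'$, which is why the $\log\frac{|\mu|+\sigma+\gamma(\varepsilon n,\beta/9)}{\varphi(1/16)}$ term from Theorem~\ref{th:err_mean} disappears from the requirement on $n$.

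First, Theorem~\ref{th:err_budget} applied with failure probability $\beta/7$ gives $\varphi(1/16)/4\le \overline{\mathrm{IQR}}\le \mathrm{IQR}$, accounting for the three stated terms in the lower bound on $n$. Second, since $H'$ is an i.i.d.\ sample of size $\varepsilon n'$ from $\mathcal{T}$, Lemma~\ref{lm:connection_statistical_width} yields $\gamma(H')\le \gamma_{\mathcal{T}}(\varepsilon n',\beta/7)\le (\gamma(\varepsilon n,\beta/7))^2$ with probability $1-\beta/7$, and $\mathrm{rad}(H')=\gamma(H')$ because $Z\ge 0$. Applying Theorem~\ref{th:err_unbounded_radius} with bucket $b=\overline{\mathrm{IQR}}^2$ and amplified budget $\varepsilon'=\Theta(1)$ then gives
\[\widetilde{\mathrm{rad}}(H')\le 2\,\mathrm{rad}(H')+3\,\overline{\mathrm{IQR}}^2\le 5\bigl(\gamma(\varepsilon n,\beta/7)\bigr)^2,\]
where the last step uses $\mathrm{IQR}\le \gamma(\varepsilon n,\beta/7)$ for $n$ large enough (analogous to (\ref{eq:th:err_mean_5})), and the number of outliers of $H'$ above $\widetilde{\mathrm{rad}}(H')$ is $O\bigl(\log\bigl(\tfrac{1}{\beta}\log\tfrac{\gamma(\varepsilon n,\beta/7)}{\varphi(1/16)}\bigr)\bigr)$.

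Now fix any $\xi\ge 5(\gamma(\varepsilon n,\beta/7))^2+2\sigma^2$ and let $\bar{\mathcal{T}}$ denote $\mathcal{T}$ truncated to $[0,2\sigma^2+\xi]$ (the left endpoint is inactive since $\xi\ge 2\sigma^2$). The one-sided analogue of Lemma~\ref{lm:obs_truncated_distribution} shows $|\mu_{\mathcal{T}}-\bar{\mu}_{\mathcal{T}}|=\E[Z>2\sigma^2+\xi]$ and $\bar{\sigma}_{\mathcal{T}}\le \sigma_{\mathcal{T}}\le \sqrt{2\mu_4}$ by (\ref{eq:sigma_T}). Let $\bar{H}=\mathrm{Clip}(H,[0,2\sigma^2+\xi])$; since $\widetilde{\mathrm{rad}}(H')\le \xi-2\sigma^2\le 2\sigma^2+\xi$, the clipped mean of $H$ to $[0,\widetilde{\mathrm{rad}}(H')]$ coincides with that of $\bar{H}$. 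Decompose $|\mathtt{estimate}-\mu_\mathcal{T}|$ into truncation bias $\E[Z>2\sigma^2+\xi]$, sampling error $|\mu(\bar{H})-\bar{\mu}_\mathcal{T}|$, clipping bias $|\mu(\bar{H})-\mathrm{ClippedMean}(\bar{H},[0,\widetilde{\mathrm{rad}}(H')])|$, and Laplace noise. Bernstein's inequality bounds the sampling error by $O\bigl(\sqrt{\mu_4\log(1/\beta)/n}+\xi\log(1/\beta)/n\bigr)$, and the Laplace tail bound gives noise $O\bigl(\xi\log(1/\beta)/(\varepsilon n)\bigr)$.

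It remains to control the clipping bias by counting elements of $H$ above $\widetilde{\mathrm{rad}}(H')$, which is the step requiring the most care. Following (\ref{eq:th:err_mean_9})--(\ref{eq:th:err_mean_11}) in its one-sided form, set $\zeta=\Theta\bigl(\tfrac{1}{\varepsilon n}\log\bigl(\tfrac{1}{\beta}\log\tfrac{\gamma(\varepsilon n,\beta/7)}{\varphi(1/16)}\bigr)\bigr)$; Chernoff's inequality shows $H'$ contains strictly more elements above $F_\mathcal{T}^{-1}(1-\zeta)$ than the outlier budget, forcing $\widetilde{\mathrm{rad}}(H')\ge F_\mathcal{T}^{-1}(1-\zeta)$, and a second Chernoff bound over $H$ gives $|H\cap(\widetilde{\mathrm{rad}}(H'),\infty)|=O\bigl(\tfrac{1}{\varepsilon}\log\bigl(\tfrac{1}{\beta}\log\tfrac{\gamma(\varepsilon n,\beta/7)}{\varphi(1/16)}\bigr)\bigr)$. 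Each such outlier contributes at most $(2\sigma^2+\xi)/n=O(\xi/n)$ bias, producing the main privacy term. Summing all four contributions, dividing by $2$, and optimizing over $\xi$ yields the theorem. The main obstacle is precisely this coverage step: one must ensure that replacing the two-sided range $\tilde{\mathcal{R}}(D')$ by the one-sided $[0,\widetilde{\mathrm{rad}}(H')]$ still produces a $\log\log$ (rather than $\log$) outlier count in $H$, for which the non-negativity of $Z$ is essential.
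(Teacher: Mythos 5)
Your proposal is correct and takes essentially the same route as the paper's proof: reduce to estimating $\mu_{\mathcal{T}}=2\sigma^2$, run $\mathtt{EstimateIQRLowerBound}$ with bucket $\overline{\mathrm{IQR}}^2$, bound $\widetilde{\mathrm{rad}}(H')\le 5(\gamma(\varepsilon n,\beta/7))^2$ via Lemma~\ref{lm:connection_statistical_width} and Theorem~\ref{th:err_unbounded_radius}, and then repeat the truncation/clipping/Bernstein/Laplace decomposition of Theorem~\ref{th:err_mean} in its one-sided form, including the coverage argument via $F_{\mathcal{T}}^{-1}(1-\zeta)$ that the paper only sketches. One small slip: $\mathrm{rad}(H')=\gamma(H')$ is not literally an equality when all $Z'_i>0$ (only $\mathrm{rad}(H')\ge\gamma(H')$), but the bound you actually need, $\mathrm{rad}(H')\le(\gamma(\varepsilon n,\beta/7))^2$, follows directly from the proof of Lemma~\ref{lm:connection_statistical_width}, which bounds every $Z_k$ individually, exactly as the paper argues.
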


\begin{proof}
For convenience, denote $H' = \{Z_1',\dots, Z_{\varepsilon n'}'\}$.

First, by Theorem~\ref{th:err_bucket} and setting $c_1$, $c_2$, $c_3$ sufficiently large, with probability at least $1-\frac{\beta}{7}$, we have
\begin{equation}
\label{eq:th:err_var_1}
\frac{1}{16} \cdot \varphi\left(\frac{1}{16}\right)^2 \leq \bucket\leq \mathrm{IQR}^2.
\end{equation}

By the definition of $\gamma_{\mathcal{T}}$, with probability at least $1-\frac{\beta}{7}$, 
\begin{equation}
\label{eq:th:err_var_2}
\mathrm{rad}(H') = Z'_{\varepsilon n'}\leq  \gamma_{\mathcal{T}}(\varepsilon n',\beta/7) \leq \left(\gamma(\varepsilon n,\beta/7)\right)^2.
\end{equation}

By setting a large $c_3$ in the condition of $n$, we have
\begin{equation}
\label{eq:th:err_var_3}
\mathrm{IQR}\leq \gamma\left(\varepsilon n,\frac{\beta}{7}\right).
\end{equation}

Then, by Theorem~\ref{th:err_unbounded_radius}, we have with probability at least $1-\frac{\beta}{7}$,
\begin{equation*}
\label{eq:th:err_var_4}
\widetilde{\mathrm{rad}}(H')\leq 2\cdot \mathrm{rad}(H')+3\bucket \leq 5 \cdot \left(\gamma(\varepsilon n,\beta/7)\right)^2,
\end{equation*}
where the last inequality is by (\ref{eq:th:err_var_1}), (\ref{eq:th:err_var_2}), and (\ref{eq:th:err_var_3}),
and 
\begin{align}
\nonumber
\left|H'\cap \overline{[0,\widetilde{\mathrm{rad}}(H')]}\right|= & \left|H'\cap \left[-\widetilde{\mathrm{rad}}(H'),\widetilde{\mathrm{rad}}(H')\right]\right|
\nonumber
\\
\leq & c_5\log\left({1\over \beta}\log{\gamma\left(\varepsilon n,\beta/7\right) \over \varphi\left(1/16\right)}\right),
\label{eq:th:err_var_5}
\end{align}
where $c_5$ is a universal constant. The first equality is because $H'$ has the non-negative domain, and the inequality is by (\ref{eq:th:err_var_1}) and (\ref{eq:th:err_var_2}). Then, based on (\ref{eq:th:err_var_5}), with a similar idea used in the proof of Theorem~\ref{th:err_mean}, we have with probability at least $1-\frac{2\beta}{7}$,
\[\left|H\cap \overline{[0,\widetilde{\mathrm{rad}}(H')]}\right| = O\left(\frac{1}{\varepsilon}\log\left({1\over \beta}\log{\gamma\left(\varepsilon n,\beta/7\right) \over \varphi\left(1/16\right)}\right)\right).\]

The remaining analysis is almost identical to that of Theorem~\ref{th:err_mean}, with two differences. First, the $\Bar{\mathcal{T}}$ is obtained by truncating $\mathcal{T}$ into the interval $[2\sigma^2-\xi,2\sigma^2+\xi]$ for a given $\xi\geq 5\cdot \left(\gamma(\varepsilon n,\frac{\beta}{7})\right)^2+2\sigma^2$. This can also ensure $[0,\widetilde{\mathrm{rad}}(H')]\subseteq [2\sigma^2-\xi,2\sigma^2+\xi]$. Second, we replace $\mu_{\mathcal{T}}$, $\sigma^2_{\mathcal{T}}$ with $2\sigma^2$, $2\mu_{4}$ following (\ref{eq:mean_T}), (\ref{eq:sigma_T}). Finally, noting that $Z$ is non-negative, we obtain the claimed error bound.
\end{proof}

\subsection{Error Bounds for Specific Distributions}
\label{sec:var_error_specific}

\paragraph{Gaussian distributions.}
For a Gaussian $\mathcal{P}$, in addition to the properties listed in Section~\ref{sec:mean_error_specific}, we also have $\mu_4 = 3\sigma^4$. 
Plugging these properties into Theorem~\ref{th:err_var} and setting $\xi = c\sigma^2 \log(\varepsilon n)$ for some large constant $c$, we have

\begin{theorem}
\label{th:err_var_gaussian}
Given $\varepsilon$, $\beta$, for any $D\sim \mathcal{P}^n$, where $\mathcal{P}$ is a Gaussian distribution, if 
\[n> \frac{c_1}{\varepsilon}\log\log(\sigma) + \frac{c_2}{\varepsilon}\log\log(1/\sigma),\]
where $c_1,c_2$ is a universal constant, then
\[\mathrm{Err}(\mathtt{EstimateVariance},D)=O\left(\frac{\sigma^2}{\sqrt{n}} + \frac{\sigma^2}{\varepsilon n}\log\log\log( \varepsilon n)\log(\varepsilon n)\right).\]
\end{theorem}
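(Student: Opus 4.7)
The plan is to specialize Theorem~\ref{th:err_var} to Gaussians by plugging in distribution-specific quantities and choosing $\xi$ in the $\min_\xi$ term to balance bias, clipping error, and Laplace noise.

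First, I would record the Gaussian-specific facts: $\varphi(1/16) = \Theta(\sigma)$, $\mathrm{IQR} = \Theta(\sigma)$, $\mu_4 = 3\sigma^4$, and $\gamma(\varepsilon n, \beta/7) = O(\sigma\sqrt{\log(\varepsilon n)})$ (the last follows from the standard tail bound for the maximum of $\varepsilon n$ independent Gaussians). Fixing $\beta$ to a constant (say $1/3$), the sample-size requirement of Theorem~\ref{th:err_var} collapses to $n > (c_1/\varepsilon)\log\log\sigma + (c_2/\varepsilon)\log\log(1/\sigma)$, since $\log\log\mathrm{IQR} = O(\log\log\sigma)$ and $\log\log(1/\varphi(1/16)) = O(\log\log(1/\sigma))$, and the $\frac{1}{\varepsilon}\log(1/\beta)$ term becomes $O(1/\varepsilon)$ and is absorbed.

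For the error bound, I would set $\xi = c\,\sigma^2 \log(\varepsilon n)$ for a large enough absolute constant $c$. This satisfies the feasibility constraint $\xi \geq 5\gamma(\varepsilon n,\beta/7)^2 + 2\sigma^2$, again because $\gamma(\varepsilon n,\beta/7)^2 = O(\sigma^2 \log(\varepsilon n))$. With this choice, the clipping/noise term in Theorem~\ref{th:err_var} becomes
\[\frac{\xi}{\varepsilon n}\log\!\left(\tfrac{1}{\beta}\log\tfrac{\gamma(\varepsilon n,\beta/7)}{\varphi(1/16)}\right) = O\!\left(\tfrac{\sigma^2 \log(\varepsilon n)}{\varepsilon n}\log\log\log(\varepsilon n)\right),\]
using $\log(\gamma/\varphi) = O(\log\log(\varepsilon n))$, while the sampling term $\sqrt{\mu_4\log(1/\beta)/n}$ reduces to $O(\sigma^2/\sqrt{n})$.

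The only nontrivial estimate, and the main obstacle, is bounding the bias term $|\E[Z > 2\sigma^2 + \xi]|$ and showing it is dominated by the terms above. If $X, X' \sim \mathcal{N}(\mu,\sigma^2)$ are independent then $(X-X')/(\sigma\sqrt{2}) \sim \mathcal{N}(0,1)$, so $W := Z/(2\sigma^2) \sim \chi^2_1$, and
\[|\E[Z > 2\sigma^2+\xi]| \le 2\sigma^2\, \E\!\left[W\,\mathbb{I}\!\left(W > 1 + \tfrac{\xi}{2\sigma^2}\right)\right].\]
Using the standard tail estimate $\Pr[\chi^2_1 > t] \le e^{-t/2}$ together with $\int_t^\infty w f_{\chi^2_1}(w)\,\mathrm{d}w = O(\sqrt{t}\,e^{-t/2})$, and plugging in $t = 1 + c\log(\varepsilon n)/2$, this quantity is at most $O\!\bigl(\sigma^2 \sqrt{\log(\varepsilon n)}\cdot (\varepsilon n)^{-c/4}\bigr)$, which for $c$ large enough is $o(\sigma^2/(\varepsilon n))$ and is absorbed into the clipping/noise term. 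Summing the three contributions yields the claimed bound
\[O\!\left(\tfrac{\sigma^2}{\sqrt{n}} + \tfrac{\sigma^2}{\varepsilon n}\log(\varepsilon n)\log\log\log(\varepsilon n)\right).\]
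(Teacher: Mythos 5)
Your proposal is correct and follows essentially the same route as the paper: specialize Theorem~\ref{th:err_var} with the Gaussian facts $\varphi(1/16)=\Theta(\sigma)$, $\mathrm{IQR}=\Theta(\sigma)$, $\mu_4=3\sigma^4$, $\gamma(\varepsilon n,\beta/7)=O(\sigma\sqrt{\log(\varepsilon n)})$, fix $\xi=c\sigma^2\log(\varepsilon n)$, and then bound the remaining bias term $|\E[Z>2\sigma^2+\xi]|$ by a Gaussian-square tail estimate that makes it negligible against the clipping/noise term. Your phrasing of that last step via the $\chi^2_1$ tail is just a change of variables from the paper's direct integral $2\int_{\sqrt{t}}^{\infty}w^2\frac{1}{2\sigma\sqrt{\pi}}e^{-w^2/(4\sigma^2)}\,\mathrm{d}w=O\bigl(\sigma^2\sqrt{\log(\varepsilon n)}(\varepsilon n)^{-c/4}\bigr)$, so the two arguments coincide.
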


\begin{proof}
We only need to bound the term $\left|\E\left[Z>c\sigma^2\log(\varepsilon n)\right]\right|$. For $X,X'\sim \mathcal{P}$, let $W= X-X'$, then $W\sim \mathcal{N}(\mu,2\sigma^2)$. For $t = c\sigma^2 \log(\varepsilon n)$, $c$ is large enough,
\[\E\left[Z>t\right] \leq \int_{t}^{\infty} z f_{\mathcal{T}}(z)\, \mathrm{d}z=2\int^{\infty}_{\sqrt{t}}w^2\cdot \frac{1}{2\sigma\sqrt{\pi}}\cdot\exp\left(-\frac{w^2}{4\sigma^2}\right)\, \mathrm{d}w= O\left(\frac{\sigma^2}{\varepsilon n}\sqrt{\log(\varepsilon n)}\right).\]
\end{proof}

\paragraph{Heavy-tailed distributions.} 
Now, we consider the case where $\mathcal{P}$ has a bounded $k$th-central moment $\mu_{k}$ for some $k\geq 4$. First, besides the properties mentioned in Section~\ref{sec:mean_error_specific}, we have $\mu_4 \leq \mu_k^{4/k}$.  Then, we plug  these properties into Theorem~\ref{th:err_var} and set $\xi = c\cdot \left(\varepsilon n \mu_k\right)^{2/k}$, for some $c$ large enough. The requirement on $n$ changes to 
\[n>\frac{c_1}{\varepsilon}\log\log{1 \over \varphi\left(1/16\right)}+\frac{c_2}{\varepsilon}\log\log\left(\mathrm{IQR}\right),\]
and the error bound becomes
\begin{align*}
\mathrm{Err}(\mathtt{EstimateVariance},D)
=O\left(\sqrt{\frac{\mu_4}{n}}+\frac{\mu_k^{2/k}}{(\varepsilon n)^{1-2/k}}\log\log{(\varepsilon n \mu_k)^{1/k} \over \varphi(1/16)}+\left|\E\left[Z>2\sigma^2+c\cdot\left(\varepsilon n \mu_k\right)^{2/k}\right]\right|\right).
\end{align*}

We further analyze the last term.  First, we derive a connection between $\mu_k$ and $\mu_{\mathcal{T},\frac{k}{2}}$:
\begin{lemma}
\label{lm:connection_central_moment}
For any $k\geq 2$, $\mu_{\mathcal{T},\frac{k}{2}}\leq 2^k \mu_{k}$.
\end{lemma}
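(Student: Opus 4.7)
The goal is to bound the $(k/2)$-th central moment of $\mathcal{T}$ in terms of the $k$-th central moment of $\mathcal{P}$. Using equation~(\ref{eq:mean_T}), we have $\mu_{\mathcal{T}} = 2\sigma^2$, so the quantity to bound is $\E[|Z - 2\sigma^2|^{k/2}]$ where $Z = (X - X')^2$ and $X, X' \sim \mathcal{P}$ are independent.

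The first step I would take exploits the fact that both $Z$ and $2\sigma^2$ are non-negative: for any $a, b \ge 0$ one has $|a - b| \le \max(a,b) \le a + b$, and raising to the power $k/2$ (which is monotone on $[0,\infty)$) yields
\[|Z - 2\sigma^2|^{k/2} \le Z^{k/2} + (2\sigma^2)^{k/2}.\]
This cleanly decouples the problem into two moments that can be handled separately. (An equivalent Minkowski-in-$L^{k/2}$ bound would also work.)

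For the first summand, $\E[Z^{k/2}] = \E[|X - X'|^k]$. Writing $X - X' = (X - \mu) - (X' - \mu)$ and applying the power-mean inequality $|a - b|^k \le 2^{k-1}(|a|^k + |b|^k)$, the independence and identical distribution of the two centered copies give $\E[|X - X'|^k] \le 2^{k-1}\cdot 2\mu_k = 2^k \mu_k$. For the second summand, $(2\sigma^2)^{k/2} = 2^{k/2} \mu_2^{k/2}$, and Jensen's inequality applied to the convex map $t \mapsto t^{k/2}$ (valid because $k/2 \ge 1$) gives $\mu_2^{k/2} = (\E[|X-\mu|^2])^{k/2} \le \E[|X-\mu|^k] = \mu_k$, so $(2\sigma^2)^{k/2} \le 2^{k/2} \mu_k \le 2^k \mu_k$.

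Combining the two pieces gives $\mu_{\mathcal{T}, k/2} \le 2^k \mu_k + 2^{k/2} \mu_k$; since $2^{k/2} \le 2^k$ for $k \ge 2$, the claimed $2^k \mu_k$ bound holds up to an absolute constant (one may either absorb this factor of $2$ into downstream $O(\cdot)$ constants, or tighten the split by writing $|Z - 2\sigma^2|^{k/2} = (Z-2\sigma^2)_+^{k/2} + (2\sigma^2 - Z)_+^{k/2}$ and using $(2\sigma^2 - Z)_+ \le 2\sigma^2$ only on the event $\{Z \le 2\sigma^2\}$). There is no real obstacle: the argument is a routine decoupling, and the only conceptual ingredient is that $Z = (X-X')^2 \ge 0$ makes both crude bounds $Z^{k/2}$ and $(2\sigma^2)^{k/2}$ simultaneously available.
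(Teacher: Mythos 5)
Your proof is essentially the paper's own argument: the same split $|Z-2\sigma^2|^{k/2}\le Z^{k/2}+(2\sigma^2)^{k/2}$, the power-mean bound $\E\left[|X-X'|^k\right]\le 2^k\mu_k$, and Jensen's inequality giving $(2\sigma^2)^{k/2}\le 2^{k/2}\mu_k$. The only difference is bookkeeping: you end with $(2^k+2^{k/2})\mu_k$ rather than the stated $2^k\mu_k$, which, as you note, is harmless in every downstream use of the lemma (all later constants are taken sufficiently large); the paper reaches the exact constant only via the intermediate claim $\E[Z^{k/2}]\le 2\mu_k$, which it does not justify, so your constant tracking is if anything the more careful of the two.
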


\begin{proof}
\begin{align*}
\mu_{\mathcal{T},\frac{k}{2}} =& \E[(Z-2\sigma^2)^{k/2}]
\\
\leq& \E[Z^{k/2}]+2^{k/2}\sigma^k
\\
\leq& 2\E[(X-\mu)^k]+2^{k/2}\mu_k
\\
\leq& 2^k \mu_{k}.
\end{align*}
\end{proof}

Recall Lemma~\ref{lm:obs_k_moment}, and set $\xi=c\cdot\left(\varepsilon n \mu_k\right)^{2/k}$, $t = \frac{4\mu_k^{2/k}}{(\varepsilon n)^{1-2/k}}$. Since $c>8$, we have
\[\left|\E\left[Z>2\sigma^2+c\cdot\left(\varepsilon n \mu_k\right)^{2/k}\right]\right|\leq \frac{4\mu_k^{2/k}}{(\varepsilon n)^{1-2/k}}.\]

Because our analysis holds for any $k\ge 4$, Theorem \ref{th:err_var} simplifies to:
\begin{theorem}
\label{th:err_var_heavy}
Given $\varepsilon$, $\beta$, for any $D\sim\mathcal{P}^n$, if 
\[n>\frac{c_1}{\varepsilon}\log\log{1 \over \varphi\left(1/16\right)}+\frac{c_2}{\varepsilon}\log\log\left(\mathrm{IQR}\right),\]
where $c_1$, $c_2$ are universal constants, then
\[\mathrm{Err}(\mathrm{\mathtt{EstimateVariance}},D)=O\left(\sqrt{\frac{\mu_4}{n}}+\inf_{k\ge 4}\frac{\mu_k^{2/k}}{(\varepsilon n)^{1-2/k}}\log\log{(\varepsilon n \mu_k)^{1/k}\over\varphi(1/16)}\right).\]
\end{theorem}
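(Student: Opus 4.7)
The plan is to derive this theorem as a corollary of Theorem~\ref{th:err_var} by bounding each piece in terms of central moments of $\mathcal{P}$. The universal quantifier over $k$ in the $\inf_{k\geq 4}$ will be handled by noting that $\mathtt{EstimateVariance}$ does not take $k$ as input, so the error bound it achieves must simultaneously hold for every $k\geq 4$; hence taking the infimum is free.

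First, I would fix an arbitrary $k\geq 4$ and assemble three ingredients. (i) By H\"older's inequality (Lemma~\ref{lm:holder}) applied to $|X-\mu|^4 = |X-\mu|^4 \cdot 1$, we get $\mu_4 \leq \mu_k^{4/k}$, so the sampling-error term in Theorem~\ref{th:err_var} satisfies $\sqrt{\mu_4/n} \leq \sqrt{\mu_k^{4/k}/n}$, and in particular we keep it written as $\sqrt{\mu_4/n}$ in the final bound. (ii) By Lemma~\ref{lm:obs_m_range}, $\gamma(\varepsilon n,\beta/7) = O((\varepsilon n \mu_k)^{1/k})$ for constant $\beta$. (iii) By the construction of $\mathcal{T}$ from $\mathcal{P}$ together with Lemma~\ref{lm:connection_central_moment}, the distribution of $Z=(X-X')^2$ has $(k/2)$-th central moment $\mu_{\mathcal{T},k/2} \leq 2^k \mu_k$.

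Next, I would choose $\xi = c\cdot(\varepsilon n\mu_k)^{2/k}$ for a large enough constant $c$. This choice dominates $5\cdot(\gamma(\varepsilon n,\beta/7))^2 + 2\sigma^2$ (using $\sigma^2 \leq \mu_k^{2/k}$), so it is a legitimate choice for the $\min_\xi$. To control the tail term $|\E[Z > 2\sigma^2 + \xi]|$, apply Lemma~\ref{lm:obs_k_moment} to the random variable $Z$ and the distribution $\mathcal{T}$, with exponent $k/2$ in place of $k$: taking $t = \Theta(\mu_k^{2/k}/(\varepsilon n)^{1-2/k})$ and checking that $\xi \geq 2(\mu_{\mathcal{T},k/2}/t)^{1/(k/2-1)}$ for $c$ large enough, we obtain
\[
|\E[Z > 2\sigma^2 + \xi]| = O\!\left(\frac{\mu_k^{2/k}}{(\varepsilon n)^{1-2/k}}\right).
\]
This matches (up to constants) the noise term $\xi/(\varepsilon n)$ from Theorem~\ref{th:err_var}, up to the $\log\log$ factor, so the two terms combine into the single bound $\frac{\mu_k^{2/k}}{(\varepsilon n)^{1-2/k}}\log\log\frac{(\varepsilon n\mu_k)^{1/k}}{\varphi(1/16)}$. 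The requirement on $n$ in Theorem~\ref{th:err_var} reduces to the two $\log\log$ terms stated in Theorem~\ref{th:err_var_heavy} (the $\log(1/\beta)$ term is absorbed into constants since $\beta$ is fixed).

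The main subtlety, which I expect to be the trickiest step, is the tail bound (iii): Lemma~\ref{lm:obs_k_moment} was stated for an arbitrary distribution with bounded moments, so it applies to $\mathcal{T}$ with its own central moment $\mu_{\mathcal{T},k/2}$; one must carefully check that the threshold $2\sigma^2 + \xi$ is measured from $\mu_{\mathcal{T}} = 2\sigma^2$, not from $0$, and that the exponent $k/2 \geq 2$ is sufficient for Lemma~\ref{lm:obs_k_moment} to give a useful bound (this is exactly why the hypothesis $k\geq 4$ appears). Since the entire derivation produces a bound valid for every $k\geq 4$ and the algorithm is oblivious to $k$, the final bound is obtained by taking $\inf_{k\geq 4}$.
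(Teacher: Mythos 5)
Your proposal is correct and follows essentially the same route as the paper: instantiate Theorem~\ref{th:err_var} with $\xi = c(\varepsilon n\mu_k)^{2/k}$, use Lemma~\ref{lm:obs_m_range} for $\gamma(\varepsilon n,\beta/7)$ and Lemma~\ref{lm:connection_central_moment} to transfer moments to $\mathcal{T}$, bound the tail $\left|\E[Z>2\sigma^2+\xi]\right|$ via Lemma~\ref{lm:obs_k_moment} applied to $\mathcal{T}$ with exponent $k/2$, and take $\inf_{k\ge 4}$ since the mechanism is oblivious to $k$. The subtleties you flag (threshold measured from $\mu_{\mathcal{T}}=2\sigma^2$, the role of $k\ge 4$, and the constant check $c>8$ in the condition $\xi\ge 2(\mu_{\mathcal{T},k/2}/t)^{1/(k/2-1)}$) are exactly the points the paper's derivation handles.
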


\section{IQR Estimation}
\label{sec:scale}

Our IQR estimator is simple: We first obtain a lower bound $\overline{\mathrm{IQR}}$ with $\mathtt{EstimtaeIQRLowerBound}$. Then, we discretize $\mathbb{R}$ using bucket size $b=\overline{\mathrm{IQR}}/n$ and run $\mathtt{InfiniteDomainQuantile}$ to find $\tilde{X}_{n/4}$ and $\tilde{X}_{3n/4}$. The details are shown in Algorithm~\ref{alg:IQR}.

\begin{algorithm}
\LinesNumbered 
\label{alg:IQR}
\caption{$\mathtt{EstimateIQR}$.}
\KwIn{$D$, $\varepsilon$, $\beta$}
$\overline{\mathrm{IQR}}=\mathtt{EstimateIQRLowerBound}(D,\frac{\varepsilon}{3},\frac{\beta}{6})$\;
$\tilde{X}_{n/4} = \mathtt{InfiniteDomainQuantile}(D,n/4,\frac{\varepsilon}{3},\frac{\beta}{6})$ with $b=\overline{\mathrm{IQR}}/n$\;
$\tilde{X}_{3n/4} = \mathtt{InfiniteDomainQuantile}(D,3n/4,\frac{\varepsilon}{3},\frac{\beta}{6})$ with $b=\overline{\mathrm{IQR}}/n$\;
$\widetilde{\mathrm{IQR}} = \tilde{X}_{3n/4} - \tilde{X}_{n/4}$\;
\Return $\widetilde{\mathrm{IQR}}$\;
\end{algorithm}

The privacy of $\mathtt{EstimateScale}$ is straightforward. To analyze its utility, we introduce the following parameter $\theta(\kappa)$, which, similar to $\varphi(\beta)$, also measures how well-behaved $\mathcal{P}$ is.  While $\varphi(\beta)$ checks if $\mathcal{P}$  has a high and narrow peak, $\theta(\kappa)$ ensures that $\mathcal{P}$ has non-negligible probability mass around $F^{-1}(1/4)$ and $F^{-1}(3/4)$.  For any $\kappa\geq 0$, define the following four intervals near $F^{-1}(1/4)$ and $F^{-1}(3/4)$:
\begin{align*}
\mathcal{B}_1(\kappa) &= [F^{-1}(1/4)-\kappa,F^{-1}(1/4)],
\\
\mathcal{B}_2(\kappa) &= [F^{-1}(1/4),F^{-1}(1/4)+\kappa],
\\
\mathcal{B}_3(\kappa) &= [F^{-1}(3/4)-\kappa,F^{-1}(3/4)],
\\
\mathcal{B}_4(\kappa) &= [F^{-1}(3/4),F^{-1}(3/4)+\kappa].
\end{align*}
Then $\theta(\kappa)$ is the smallest average probability density in those four regions, i.e.,
\[\theta(\kappa) = \frac{1}{\kappa} \cdot \min_{i\in[4]}\left\{\int_{x\in\mathcal{B}_i(\kappa)}f(x)\,\mathrm{d}x\right\}.\]

Note that prior work on this problem \cite{dwork2009differential} defined something more strict, using the \textit{minimum} probability density in these regions.  Nevertheless, their analysis actually still holds under our definition of $\theta(\kappa)$. 

We first analyze the sampling error:

\begin{lemma}
\label{lm:obs_quantile_data_and_distribution}
Given $\beta, t, \kappa>0$, for any $D\sim \mathcal{P}^n$, if
\[n> \frac{2t}{\kappa\cdot \theta(\kappa)}+\frac{16\log(4/\beta)}{\left(\kappa\cdot\theta(\kappa)\right)^2},\]
then with probability at least $1-\beta$, for any $k_1\in [n/4\pm t]$ and any $k_2\in [3n/4\pm t]$, we have
\[X_{k_1}\in [F^{-1}(1/4)\pm \kappa],\]
and
\[X_{k_2}\in [F^{-1}(3/4)\pm \kappa].\]
\end{lemma}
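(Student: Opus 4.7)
The plan is to translate the claim about order statistics into four one-sided tail bounds on counts of samples falling in particular regions, and then apply Chernoff/Hoeffding plus a union bound. For any index $k$, the order statistic $X_k$ lies in an interval $[a,b]$ if and only if the number of samples $\le a$ is strictly less than $k$ and the number of samples $\le b$ is at least $k$. So it suffices to sandwich the empirical counts in $(-\infty, F^{-1}(1/4)\pm\kappa]$ and $(-\infty, F^{-1}(3/4)\pm\kappa]$ around the target ranks $n/4\pm t$ and $3n/4\pm t$.

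The key input is the definition of $\theta(\kappa)$: each of the four intervals $\mathcal{B}_1,\dots,\mathcal{B}_4$ carries probability mass at least $\kappa\cdot\theta(\kappa)$. This gives, for instance,
\[ \Pr[X\le F^{-1}(1/4)-\kappa]\le \tfrac14-\kappa\theta(\kappa), \qquad \Pr[X\le F^{-1}(1/4)+\kappa]\ge \tfrac14+\kappa\theta(\kappa),\]
and symmetrically for $F^{-1}(3/4)\pm\kappa$. Let $N_i$ denote the number of samples $\le$ the $i$-th threshold; each is a sum of Bernoullis, so by Hoeffding's inequality, with probability at least $1-\beta/4$ each, $|N_i-\E[N_i]|\le\sqrt{(n/2)\log(4/\beta)}$. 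A union bound over the four thresholds yields the desired control with probability at least $1-\beta$.

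Given the lower bound on $n$, I would verify that this deviation is strictly smaller than $n\kappa\theta(\kappa)-t$. Indeed, the hypothesis $n>2t/(\kappa\theta(\kappa))$ makes $n\kappa\theta(\kappa)/2>t$, while $n>16\log(4/\beta)/(\kappa\theta(\kappa))^2$ makes $n\kappa\theta(\kappa)/2>\sqrt{(n/2)\log(4/\beta)}$ (squaring both sides to check). Combining, $n\kappa\theta(\kappa)>t+\sqrt{(n/2)\log(4/\beta)}$, which is exactly what is needed so that, e.g., $N_{F^{-1}(1/4)-\kappa}<n/4-t$ and $N_{F^{-1}(1/4)+\kappa}>n/4+t$ hold simultaneously on the good event. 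The same calculation applies for $F^{-1}(3/4)$.

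Once the four sandwiching inequalities hold, the conclusion is immediate: for any $k_1\in[n/4\pm t]$, there are fewer than $k_1$ samples at or below $F^{-1}(1/4)-\kappa$ and at least $k_1$ samples at or below $F^{-1}(1/4)+\kappa$, so $X_{k_1}\in[F^{-1}(1/4)\pm\kappa]$; analogously for $X_{k_2}$. No step here is particularly delicate; the only place to be careful is in matching constants so that the two terms in the hypothesis on $n$ cover the additive-rank gap $t$ and the Hoeffding deviation separately. An alternative would be to use Bernstein's inequality to exploit that the variance of the relevant Bernoullis is at most $1/4$, which would give the same qualitative bound.
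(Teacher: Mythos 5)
Your proposal is correct and takes essentially the same route as the paper's proof: both translate each order-statistic event into a binomial count crossing a threshold shifted by $n\kappa\theta(\kappa)$ (using that each $\mathcal{B}_i(\kappa)$ carries mass at least $\kappa\theta(\kappa)$), bound each of the four bad events by $\beta/4$ via a concentration inequality, and union bound, with the hypothesis on $n$ split exactly as you do to absorb the rank shift $t$ and the deviation term. The only difference is that you invoke additive Hoeffding bounds where the paper uses the multiplicative Chernoff bound of Lemma~\ref{lm:chernoff} (giving your condition $n\kappa\theta(\kappa)>t+\sqrt{(n/2)\log(4/\beta)}$ versus the paper's $n\kappa\theta(\kappa)>t+2\sqrt{n\log(4/\beta)}$), a constant-level distinction --- including the one-sided versus two-sided tail at each threshold --- that is comfortably covered by the factor $16$ in the assumed lower bound on $n$.
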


\begin{proof}
By the condition of $n$, we have
\begin{align}
\nonumber
\begin{cases}
n> \frac{2t}{\kappa\cdot \theta(\kappa)}
\\
n> \frac{16\log(4/\beta)}{\left(\kappa\cdot\theta(\kappa)\right)^2}
\end{cases}
\Rightarrow &
\begin{cases}
\frac{1}{2} n \kappa\cdot \theta(\kappa) > t
\\
\frac{1}{2} n \kappa\cdot \theta(\kappa) > 2\sqrt{\log(4/\beta)n}
\end{cases}
\\
\Rightarrow & 
\label{eq:lm:obs_quantile_data_and_distribution_1}
n \kappa \theta(\kappa)>t+2\sqrt{\log(4/\beta)n}.
\end{align}

We define four events:
\begin{align*}
E_1 :& X_{n/4+t}> F^{-1}(1/4)+\kappa;
\\
E_2 :& X_{n/4-t}< F^{-1}(1/4)-\kappa;
\\
E_3 :& X_{3n/4+t}> F^{-1}(3/4)+\kappa;
\\
E_4 :& X_{3n/4-t}> F^{-1}(3/4)-\kappa.
\end{align*}
It suffices to show that each event happens with probability less than $\frac{\beta}{4}$.
Below we only consider $E_1$; the other 3 events are similar.
\begin{align*}
& \Pr\left[ X_{n/4+t}> F^{-1}(1/4)+\kappa\right] 
\\
=& \Pr\left[\left| D\cap \left(-\infty, F^{-1}(1/4)+\kappa\right] \right|\leq n/4+t\right] 
\\
\leq & \exp \left(-\frac{1}{4} \frac{\left(\E\left[\left| D\cap \left(-\infty, F^{-1}(1/4)+\kappa\right] \right|\right]-n/4-t\right)^2}{\E\left[\left| D\cap \left(-\infty, F^{-1}(1/4)+\kappa\right] \right|\right]}\right)
\\
\leq &  \exp \left(-\frac{1}{4} \frac{\left( n\left(1/4+ \kappa\cdot\theta(\kappa)\right)-n/4-t\right)^2}{n}\right)
\\
\leq &\frac{\beta}{4}.
\nonumber
\end{align*}
The first inequality is by Chernoff's inequality. The second one is because
\[n\left(1/4+\kappa\cdot \theta(\kappa)\right)\leq\E\left[\left| D\cap \left[-\infty, F^{-1}(1/4)+\kappa\right] \right|\right]\leq n.\]
And the last one is by (\ref{eq:lm:obs_quantile_data_and_distribution_1}).
\end{proof}

Now, we are ready to analyze the utility of $\mathtt{EstimateIQR}$.

\begin{theorem}
\label{th:err_scale}
Given $\varepsilon,\beta$, for any $D\sim \mathcal{P}^n$ and any $\alpha>0$, if 
\begin{align*}
n>& \frac{c_1}{\varepsilon}\log\log{1\over \varphi\left(1/16\right)}+\frac{c_2}{\varepsilon}\log\log\left(\mathrm{IQR}\right)+\frac{c_3}{\varepsilon}\log{1\over \beta}+\frac{c_4}{\varepsilon}\log{|\mu|+\sigma+\gamma\left(n,1/16\right) \over \varphi\left(1/16\right)}
\\
&+\frac{c_5}{\varepsilon \alpha\cdot \theta(\alpha/4)}\log{ \gamma\left(n,\beta/6\right) \over \beta\cdot \varphi\left(1/16\right)} +\frac{c_6\log(1/\beta)}{(\alpha\cdot\theta(\alpha/4))^2}+ \frac{c_7\mathrm{IQR}}{\alpha},
\end{align*}
where $c_1$, $c_2$, $c_3$, $c_4$, $c_5$, $c_6$, and $c_7$ are universal constants, then with probability at least $1-\beta$, $\mathtt{EstimateIQR}$ returns a $\widetilde{\mathrm{IQR}}$ such that $|\mathrm{IQR}-\widetilde{\mathrm{IQR}}|\leq \alpha$.
\end{theorem}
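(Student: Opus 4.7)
The plan is to account for three sources of error in $\widetilde{\mathrm{IQR}} - \mathrm{IQR}$: (i) the sampling error between $F^{-1}(\tau)$ and $X_{\tau}$ for $\tau \in \{n/4, 3n/4\}$, (ii) the rank error introduced by $\mathtt{InfiniteDomainQuantile}$ at each of the two quantiles, and (iii) the discretization bias of size $b = \overline{\mathrm{IQR}}/n$. By a union bound over $1/4$ and $3/4$, it suffices to show each of $|\tilde{X}_{n/4}-F^{-1}(1/4)|$ and $|\tilde{X}_{3n/4}-F^{-1}(3/4)|$ is at most $\alpha/2$ with probability $1-\beta/O(1)$.

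First, I invoke Theorem~\ref{th:err_budget} (with budget $\varepsilon/3$ and failure probability $\beta/6$) to obtain $\tfrac{1}{4}\varphi(1/16) \le \overline{\mathrm{IQR}} \le \mathrm{IQR}$, which by the lower bound on $n$ holds once the first three terms in the hypothesis are large enough. This fixes the bucket size at $b = \overline{\mathrm{IQR}}/n \le \mathrm{IQR}/n$. Next, I apply Theorem~\ref{th:err_unbounded_quantile} to each of the two quantile calls. The requirement $n > \frac{c}{\varepsilon}\log(\mathrm{rad}(D)/(b\beta))$ becomes $n > \frac{c}{\varepsilon}\log\bigl((|\mu|+\sigma+\gamma(n,\beta/6))\cdot n/(\overline{\mathrm{IQR}}\,\beta)\bigr)$, after bounding $\mathrm{rad}(D)$ by $|\mu|+O(\sigma)+\gamma(n,\beta/6)$ using Chernoff (at least one point in $[\mu-2\sigma,\mu+2\sigma]$) together with the definition of $\gamma(n,\beta/6)$. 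Absorbing the extra $\log n$ factor, this is implied by the fourth term in the hypothesis on $n$. The resulting rank error is
\[
t = O\!\left(\frac{1}{\varepsilon}\log\!\left(\frac{\gamma(n,\beta/6)}{b\beta}\right)\right) = O\!\left(\frac{1}{\varepsilon}\log\!\left(\frac{\gamma(n,\beta/6)}{\beta\cdot\varphi(1/16)}\right)\right),
\]
and the value returned satisfies $\tilde{X}_{\tau} \in [X_{\tau-t}-b,\, X_{\tau+t}+b]$.

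To convert the rank error $t$ into value error, I apply Lemma~\ref{lm:obs_quantile_data_and_distribution} with $\kappa = \alpha/4$, so that $X_{k_1}\in[F^{-1}(1/4)\pm \alpha/4]$ and $X_{k_2}\in[F^{-1}(3/4)\pm \alpha/4]$ for all $k_1\in[n/4\pm t]$ and $k_2\in[3n/4\pm t]$. The lemma's hypothesis $n > \frac{2t}{\alpha\cdot\theta(\alpha/4)/4} + \frac{16\log(4/\beta)}{(\alpha\cdot\theta(\alpha/4)/4)^2}$ is exactly what the fifth and sixth terms in the hypothesis guarantee, once $t$ is substituted. Combined with the additive discretization slack of $b \le \mathrm{IQR}/n$, each $\tilde{X}_{\tau}$ lies within $\alpha/4 + b$ of $F^{-1}(\tau)$. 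The final requirement $n > c_7 \mathrm{IQR}/\alpha$ ensures $b \le \alpha/4$, so each side is $\alpha/2$-accurate, and a triangle inequality over the two quantiles yields $|\widetilde{\mathrm{IQR}}-\mathrm{IQR}|\le \alpha$.

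The only delicate point is bookkeeping: the rank error $t$ depends logarithmically on $1/b = n/\overline{\mathrm{IQR}}$, while $\overline{\mathrm{IQR}}$ is only guaranteed to lie in $[\tfrac{1}{4}\varphi(1/16), \mathrm{IQR}]$, so the worst-case $b$ for the rank-error bound is $\varphi(1/16)/(4n)$. Plugging this worst case into Lemma~\ref{lm:obs_quantile_data_and_distribution} is what yields the fifth term $\frac{1}{\varepsilon\alpha\cdot\theta(\alpha/4)}\log\!\bigl(\gamma(n,\beta/6)/(\beta\cdot\varphi(1/16))\bigr)$ in the stated lower bound on $n$; everything else is routine union bounds over the three sub-routines (each run with failure probability $\beta/6$) and straightforward arithmetic. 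I expect no conceptual obstacle beyond keeping the constants $c_1,\dots,c_7$ large enough to absorb the $\log n$ factors that appear when $\mathrm{rad}(D)$ is expressed in terms of $\gamma(n,\beta/6)$ and $\sigma$.
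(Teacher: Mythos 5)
Your proposal is correct and follows essentially the same route as the paper's proof: Theorem~\ref{th:err_budget} to fix the bucket size $b=\overline{\mathrm{IQR}}/n$, a Chernoff/$\gamma(n,\beta/6)$ bound on $\mathrm{rad}(D)$ to satisfy the hypothesis of Theorem~\ref{th:err_unbounded_quantile}, conversion of the rank error $t$ into value error via Lemma~\ref{lm:obs_quantile_data_and_distribution} with $\kappa=\alpha/4$, and the $c_7\,\mathrm{IQR}/\alpha$ term to absorb the $O(b)$ discretization slack. Your explicit remark about the worst case $b=\varphi(1/16)/(4n)$ and absorbing the resulting $\log n$ factors matches how the paper handles the same bookkeeping (its $t$ carries an $n$ inside the logarithm), so there is no substantive difference.
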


We first explain each term in the sample complexity above before presenting the proof.  The first 4 terms are the minimum requirement on the sample size, which is needed to find the right bucket size and range of $D$ so as to reduce the domain size to finite.  The last 3 terms represent the sample size-accuracy trade-off. $\tilde{O}\left({1\over \varepsilon \alpha\cdot \theta(\alpha/4)}\right)$ is the privacy term while  $\tilde{O}\left({1\over (\alpha\cdot \theta(\alpha/4))^2}\right)$ is the sampling error. The last term is the error caused by discretization. If we assume $\theta(\alpha)$ does not change much for small $\alpha$ (i.e., $f$ does not change too abruptly near $F^{-1}(1/4)$ and $F^{-1}(3/4)$), then we obtain the right convergence rate $\alpha \propto {1\over \varepsilon n} + {1 \over \sqrt{n}}$.  On the other hand, the previous work \cite{dwork2009differential} only achieves a convergence rate of $\alpha \propto {1 \over \varepsilon \log n}$.

\begin{proof}
First, by Theorem~\ref{th:err_bucket} and setting $c_1$, $c_2$, and $c_3$ sufficiently large, with probability at least $1-\frac{\beta}{6}$, we have
\begin{equation}
\label{eq:th:err_scale_1}
\frac{1}{4n} \cdot \varphi(\frac{1}{16})  \leq \bucket \leq \frac{\mathrm{IQR}}{n}.
\end{equation}

Second, by definition of $(m,\beta)$-statistical width, with probability at least $1-\frac{\beta}{6}$,
\begin{equation}
\label{eq:th:err_scale_2}
\gamma(D)\leq \gamma\left(n,\frac{\beta}{6}\right).
\end{equation}

Similar to the proof of Theorem~\ref{th:err_mean}, we further have with probability at least $1-\frac{\beta}{6}$,
\begin{equation}
\label{eq:th:err_scale_3}
\mathrm{rad}(D)\leq |\mu|+2\sigma+\gamma\left(n,\frac{\beta}{6}\right).
\end{equation}

Then, by Theorem~\ref{th:err_unbounded_quantile}, (\ref{eq:th:err_scale_1}), (\ref{eq:th:err_scale_2}), (\ref{eq:th:err_scale_3}) and setting $c_3$, $c_4$ sufficiently large, we have with probability at least $1-\frac{\beta}{3}$,
\begin{equation}
\label{eq:th:err_scale_4}
X_{n/4-t}-\bucket \leq \tilde{l}\leq X_{n/4+t}+\bucket,
\end{equation}
and
\begin{equation}
\label{eq:th:err_scale_6}
X_{3n/4-t}-\bucket \leq \tilde{r}\leq X_{3n/4+t}+\bucket,
\end{equation}
for
\begin{equation}
\label{eq:th:err_scale_7}
t=\frac{c_8}{\varepsilon}\log{ n\cdot  \gamma\left(n,\beta/6\right) \over \beta\cdot \varphi\left(1/16\right)},
\end{equation}
where $c_8$ is some constant.

Furthermore, by Lemma~\ref{lm:obs_quantile_data_and_distribution} and setting $c_5$ and $c_6$ sufficiently large, we have with probability at least $1-\frac{\beta}{6}$, for any $k_1\in [n/4\pm t]$ and $k_2\in[3n/4\pm t]$,
\begin{equation}
\label{eq:th:err_scale_8}
X_{k_1}\in \left[F^{-1}(1/4)\pm \frac{\alpha}{4}\right],
\end{equation}
\begin{equation}
\label{eq:th:err_scale_9}
X_{k_2}\in \left[F^{-1}(3/4)\pm \frac{\alpha}{4}\right].
\end{equation}

Combining (\ref{eq:th:err_scale_4}), (\ref{eq:th:err_scale_6}), (\ref{eq:th:err_scale_7}), (\ref{eq:th:err_scale_8}), and (\ref{eq:th:err_scale_9}), we have
\[|\widetilde{\mathrm{IQR}}-\mathrm{IQR}|\le \frac{\alpha}{2}+\frac{2\mathrm{IQR}}{n}.\]
By setting $c_7$ sufficiently large, we have $|\widetilde{\mathrm{IQR}}-\mathrm{IQR}|\le\alpha$.
\end{proof}

\section{Acknowledgements}
This work has been supported by HKRGC under grants 16201819, 16205420, and 16205422.  We would also like to thank Yuchao Tao for some helpful initial discussions on the problem and the anonymous reviewers who have made valuable suggestions on improving the presentation of the paper.

\bibliographystyle{alpha}
\bibliography{ref}
\end{document}